\documentclass[hidelinks]{amsart}

\usepackage{xcolor,enumerate}
\usepackage[utf8]{inputenc}
\usepackage[T1]{fontenc}
\usepackage{hyperref}
\usepackage{url}
\usepackage{booktabs}
\usepackage{nicefrac}
\usepackage{microtype}
\usepackage{amsmath,amssymb,amsfonts,latexsym}
\usepackage{bm}
\usepackage{graphicx}
\usepackage{textcomp}
\usepackage{subcaption}
\usepackage{color}
\usepackage{mathrsfs}
\usepackage{pgfplots}
\usepackage{mathtools}
\usepackage{comment}
\usepackage{algorithm}
\usepackage{algorithmic}

\usepackage{tikz}
\definecolor{revisiongreen}{RGB}{0,110,70} 
\definecolor{revisionpurple}{RGB}{75,0,130} 
\definecolor{revisioncyan}{RGB}{0,160,190} 
\definecolor{darkgreen}{rgb}{0,0.5,0}
\usepackage{tkz-fct}
\usepackage{tkz-euclide}
\usetikzlibrary{external}
\usetikzlibrary{shapes.geometric}
\usetikzlibrary{decorations.shapes,arrows,calc,decorations.markings,shapes,decorations.pathreplacing,shapes.arrows}
\usetikzlibrary{positioning}
\usetikzlibrary{topaths,calc}
\usetikzlibrary{arrows.meta,fit}
\tikzset{main node/.style={circle,fill=blue!20,draw,inner sep=1pt},}
\tikzset{highlight1/.style={rectangle,
                           fill=red!15,
                           rounded corners = 0.5 mm,
                           inner sep=1pt,
                           fit=#1}}

\newcommand\hlight[1]{\tikz[overlay, remember picture,baseline=-\the\dimexpr\fontdimen22\textfont2\relax]\node[rectangle,fill=blue!13,rounded corners,fill opacity = 0.2,draw,thick,text opacity =1] {$#1$};}

\newtheorem{theorem}{Theorem}[section]
\newtheorem{lemma}[theorem]{Lemma}
\newtheorem{proposition}[theorem]{Proposition}
\newtheorem{corollary}[theorem]{Corollary}
\newtheorem{sublemma}{Sublemma}

\theoremstyle{definition}
\newtheorem{definition}[theorem]{Definition}
\newtheorem{example}[theorem]{Example}
\newtheorem{remark}[theorem]{Remark}

\DeclareMathOperator*{\argmin}{arg\,min}
\newcommand{\cT}{\mathcal{T}}

\newcommand{\R}{\mathbb{R}}

\newcommand{\E}{\mathbb{E}}

\newcommand{\N}{\mathbb{N}}

\newcommand{\rd}{\mathrm{d}}
\newcommand{\V}{\mathbb{V}}

\DeclareMathOperator{\rank}{rank}

\newcommand{\cM}{\mathcal{M}}

\newcommand{\G}{\mathcal{G}}

\newcommand{\dep}{\operatorname{dep}}

\newcommand{\cR}{\mathcal{R}}

\newcommand{\supp}{\operatorname{supp}}
\DeclareMathOperator*{\lcm}{lcm}
\newcommand{\oprocendsymbol}{\hbox{$\diamond$}}
\newcommand{\oprocend}{\relax\ifmmode\else\unskip\hfill\fi\oprocendsymbol}

\title{Structural Averaged Controllability for Linear Ensemble Systems: Multi-input Case}

\author[A. N. Moghaddam]{Amirreza Neshaei Moghaddam}
\address{Department of Electrical and Computer Engineering\\
University of California, Los Angeles}
\email{amirnesha@ucla.edu}

\author[X. Chen]{Xudong Chen}
\address{Department of Electrical and Systems Engineering\\
Washington University in St. Louis}
\email{cxudong@wustl.edu}

\author[B. Gharesifard]{Bahman Gharesifard}
\address{Department of Mathematics and Statistics\\
Queen's University, Canada}
\email{bahman.gharesifard@queensu.edu}

\begin{document}

\begin{abstract}
We study structural averaged controllability for multi-input linear ensemble systems. In this problem, one asks whether a sparsity pattern admits a linear ensemble system that is averaged controllable. The single-input case has been characterized completely, while the general multi-input case has remained open. In this work, we give a complete characterization for the general multi-input case. In particular, we prove that in addition to accessibility, a necessary and sufficient condition for structural averaged controllability of ensemble systems is existence of a row-saturating matching for its \emph{core}, an acyclic subgraph associated with the system which plays a central role in the result for the single-input case. 
This leads to new edge constructive weight assignment algorithm in the multi-input case, which is substantially more delicate than in the single-input case.
\end{abstract}


\maketitle

\section{Introduction}\label{sec:intro}

Ensemble control studies families of systems indexed by a parameter and driven by a common input. This framework was motivated in part by quantum control, and has also become useful in problems involving uncertainty and large collections of agents~\cite{JSL-NK:06,RB-NK:00,XC:22}.

The main difficulty is that the control signal cannot depend on the parameter value of the individual systems. As the parameter space becomes richer, requiring simultaneous control of the whole ensemble becomes increasingly challenging. In particular, for multidimensional parameter spaces, strong forms of ensemble controllability are generally unavailable~\cite{XC:23}. This motivates weaker, yet still meaningful, control objectives. One such objective is averaged controllability, where the goal is to steer only the average state of the ensemble. For linear ensemble systems, averaged controllability is shown to be characterized by a rank condition involving a Kalman-type averaged controllability matrix~\cite{EZ:14}.

A structural version of this problem was introduced in~\cite{BG-XC:22}. The question investigated there is not whether a particular numerical choice of system matrices is averaged controllable, but whether a given sparsity pattern of these matrices admits at least one averaged controllable realization. This is in the spirit of structural systems theory, but the resulting notion is different from classical structural controllability introduced by Lin~\cite{CTL:74}. In fact, structural averaged controllability is strictly weaker than classical structural controllability~\cite{BG-XC:22}. This naturally leads to the problem of characterizing exactly which sparsity patterns are structurally averaged controllable.

The single-input case was resolved in~\cite{XC-BG:25} using the notion of the core of the directed graph. Roughly speaking, the core is the acyclic part that remains after removing the cyclic components and their successors. In the single-input case, structural averaged controllability is equivalent to the core containing a directed spanning path. 
The multi-input case is more subtle. A na\"{i}ve extension of the single-input spanning-path condition is no longer correct, because multiple input channels can contribute to the averaged controllability matrix in a more complicated manner.

In this paper, we give a necessary and sufficient condition for structural averaged controllability in the general multi-input case. As in the single-input setting, the condition depends only on the core of the graph, and the proof is constructive: we build a realization of the sparsity pattern that is averaged controllable. The construction first assigns edge weights on the core and is then extended to the remaining part of the graph. The main difference is that the multi-input problem requires a new edge-weight assignment algorithm on the core, rather than the simpler construction used in the single-input case.
The extension from the core to the full graph also becomes more involved. The edge weights produced by the multi-input algorithm do not have the simple form used in the single-input proof, so the extension requires a more careful argument.

The rest of the paper is organized as follows. In Section~\ref{sec:preliminaries}, we introduce the notation and formulate the problem. In Section~\ref{sec:main_result_and_proof}, we first prove a value-assignment result for the core and then state the main result for the whole graph, Theorem~\ref{thm:main}. We then prove the two directions of Theorem~\ref{thm:main}: necessity follows from the core condition, while sufficiency is proved by reducing to a simpler graph structure and extending the core construction to the full graph. We conclude in Section~\ref{sec:conclusion}.

\section{Preliminaries}\label{sec:preliminaries}

\subsection{Notation}

We let $\N:=\{0,1,2,\dotsc\}$. For a matrix $M$, its $(i,j)$-th entry is denoted by $M_{ij}$; when an index is a compound expression, we use a comma for clarity, e.g., $M_{i,j+1}$ or $M_{h,(k-1)m+i}$. For index sets $R$ and $C$, we write $M_{R,C}$ for the submatrix of $M$ formed by the rows indexed by $R$ and the columns indexed by $C$.

For a topological space $X$, we denote by $\mathcal B(X)$ its Borel $\sigma$-algebra, i.e., the smallest $\sigma$-algebra containing the open subsets of $X$. For a map $f:X\to Y$ and a set $Z\subseteq Y$, $f^{-1}(Z)$ denotes the preimage of $Z$ under $f$. If $f:X\to Y$ is measurable, we write
\[
\bm{\sigma}(f):=\{f^{-1}(Z):Z\in\mathcal B(Y)\}
\]
for the $\sigma$-algebra generated by $f$. Throughout the paper, the plain letter $\sigma$ denotes the ensemble parameter, while $\bm{\sigma}(f)$ denotes the $\sigma$-algebra generated by the map $f$.

For a measure space $(\Sigma,\mu)$ and a measurable set $\Omega\subseteq\Sigma$, we denote the indicator of $\Omega$ by ${\bf 1}_{\sigma\in\Omega}$. For an integrable function $g$, we write
\[
\E\big[g(\sigma){\bf 1}_{\sigma\in\Omega}\big]
:=\int_{\Omega}g(\sigma)\,d\mu.
\]
We use this indicator form by default. When it keeps an expression shorter, especially when another indicator already appears in the integrand, we also write
\[
\E_{\Omega}[g(\sigma)]
:=\int_{\Omega}g(\sigma)\,d\mu.
\]
Finally, $L^\infty(\Sigma;\R^{a\times b})$ denotes the space of bounded measurable functions from $\Sigma$ to $\R^{a\times b}$.

\subsection{Graph-theoretic notation}

A directed graph, or digraph, is a pair $G=(V,E)$, where $V$ is a finite set of nodes and $E\subseteq V\times V$ is a set of directed edges. We write $v_i v_j$ for the directed edge from $v_i$ to $v_j$. If $v_i v_j\in E$, then $v_i$ is an in-neighbor of $v_j$ and $v_j$ is an out-neighbor of $v_i$. For a subgraph $H$ of $G$, we define
\[
N_{\mathrm{out}}(H)
:=
\{v_j\in V:\text{ there exists }v_i\in H\text{ with }v_i v_j\in E\},
\]
and define $N_{\mathrm{in}}(H)$ analogously.

A walk from $v_i$ to $v_j$ is a finite sequence of nodes whose first node is $v_i$, whose last node is $v_j$, and whose consecutive nodes are connected by directed edges. The length of a walk $\tau$, denoted by $\ell(\tau)$, is the number of edges in it. In our notation, $\tau_{v_i v_j}$ is a walk from $v_i$ to $v_j$. Now given two walks $\tau_{v_i v_j}$ and $\tau_{v_j v_k}$, where the terminal node $v_j$ of $\tau_{v_i v_j}$ coincides with the initial node $v_j$ of $\tau_{v_j v_k}$, we can obtain a new walk $\tau_{v_i v_k} = \tau_{v_i v_j}\tau_{v_j v_k}$ by concatenating the two walks. A path is a walk with no repeated nodes. A path in a subgraph $H$ is maximal if it cannot be extended by adding one more outgoing edge in $H$ without repeating a node. A cycle is a closed walk with no repeated nodes except the initial and terminal node. A self-loop is considered as a cycle of length one. 
If $\tau$ is a cycle, then $\tau^r$ denotes the walk obtained by traversing $\tau$ exactly $r$ times, with $\tau^0$ interpreted as the empty walk.
If a path exists from $v_i$ to $v_j$, then $v_j$ is called a successor of $v_i$, and $v_i$ is called a predecessor of $v_j$. A digraph $G = (V,E)$ is called strongly connected if for any two distinct nodes $v_i,v_j \in V$, $v_i$ is both a predecessor and a successor of $v_j$.

\subsection{Problem formulation}

Let $\Sigma$ be a manifold, possibly with boundary, and let $\mu$ be a Borel probability measure on $\Sigma$. We assume that there exists a nonempty open set $\mathcal{O}\subseteq\operatorname{supp}\mu$ such that $\mu|_{\mathcal{O}}$ is absolutely continuous with respect to a smooth volume measure on $\mathcal{O}$.
We consider a continuum ensemble of linear time-invariant systems driven by a common $m$-dimensional control input:
\begin{equation}\label{eq:systemeq}
\frac{\partial}{\partial t} x(t,\sigma) = A(\sigma)x(t,\sigma) + B(\sigma) u(t),
\end{equation}
for all $\sigma\in \Sigma$, where $x(t,\sigma)\in \R^n$ denotes the state
corresponding to the parameter value~$\sigma$ at time~$t$, $u(t)\in \R^m$ is the common control input applied uniformly across
the ensemble, and $A:\Sigma \to \R^{n\times n}$ and $B:\Sigma\to \R^{n \times m}$ are bounded, measurable functions.

A control input is admissible if, for every $T>0$, the function $u:[0,T]\to\R^m$ is integrable. For each $t\ge 0$, we define the profile $\chi(t):\Sigma\to\R^n$ by $\chi(t)(\sigma):=x(t,\sigma)$, and we define the average state by
\[
\bar\chi(t):=\int_\Sigma \chi(t)(\sigma)\,d\mu
=
\int_\Sigma x(t,\sigma)\,d\mu.
\]
We have the following definition from~\cite{EZ:14}:
\begin{definition}\label{def:averagectrl}
System~\eqref{eq:systemeq} is {\bf averaged controllable} if, for any initial profile $\chi(0)\in \mathcal{L}^{\infty}(\Sigma; \R^n)$, any target average $x^*\in \R^n$, and any $T>0$, there exists an admissible control input $u(t)$ such that the solution of~\eqref{eq:systemeq} satisfies $\bar \chi(T) = x^*$.
\end{definition}

Since the system~\eqref{eq:systemeq} is determined by the $(A,B)$ pair, we simply say $(A,B)$ is averaged controllable if~\eqref{eq:systemeq} is. 
We now recall the following characterization from~\cite{EZ:14} which provides a necessary and sufficient condition for averaged controllability:

\begin{lemma}\label{lem:necsufforavectrl}
The ensemble system~\eqref{eq:systemeq} is averaged controllable if and only if the infinite block matrix
\begin{equation}\label{eq:defcab}
\mathcal{C}(A, B) := \left[ \int_\Sigma B \, \rd\mu,\; \int_\Sigma AB \, \rd\mu,\; \int_\Sigma A^2 B \, \rd\mu,\; \cdots \right]
\end{equation}
has rank $n$.
\end{lemma}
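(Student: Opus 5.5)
The plan is to write the average state explicitly through the variation-of-constants formula and reduce averaged controllability to surjectivity of a linear map onto $\R^n$. Since $A$ and $B$ are bounded, for each $\sigma$ we have
\[
x(T,\sigma)=e^{A(\sigma)T}x(0,\sigma)+\int_0^T e^{A(\sigma)(T-s)}B(\sigma)u(s)\,\rd s .
\]
Integrating over $\Sigma$ and using Fubini (justified because $\|e^{A(\sigma)t}\|\le e^{\|A\|_\infty t}$ uniformly in $\sigma$) gives
\[
\bar\chi(T)=\int_\Sigma e^{A T}\chi(0)\,\rd\mu+\int_0^T K(T-s)u(s)\,\rd s,
\qquad K(t):=\int_\Sigma e^{A(\sigma)t}B(\sigma)\,\rd\mu .
\]
The first term is an arbitrary but fixed vector. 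Hence averaged controllability for a given $T$ is equivalent to surjectivity of the map $L_T:u\mapsto\int_0^T K(T-s)u(s)\,\rd s$ from $L^1(0,T;\R^m)$ to $\R^n$.

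Next, characterize surjectivity by duality. The range of $L_T$ fails to be $\R^n$ iff there is a nonzero $p\in\R^n$ with $p^\top K(t)=0$ for almost every, hence by continuity every, $t\in[0,T]$. The key step is that $K$ is real-analytic, indeed entire, in $t$, with power series
\[
K(t)=\sum_{k\ge0}\frac{t^k}{k!}\int_\Sigma A^kB\,\rd\mu,
\]
obtained by exchanging sum and integral. This exchange is valid because $\|A^kB\|\le\|A\|_\infty^k\|B\|_\infty$, so the series converges uniformly. Then $p^\top K\equiv0$ on $[0,T]$ iff all Taylor coefficients vanish, i.e. $p^\top\int_\Sigma A^kB\,\rd\mu=0$ for all $k\in\N$. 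This is exactly the statement that $p$ lies in the left kernel of $\mathcal C(A,B)$.

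Combining the two steps: $\mathcal C(A,B)$ has rank $n$ iff its left kernel is trivial, iff $L_T$ is surjective. This holds for every $T>0$, independently of the initial profile. It therefore gives both directions of the lemma, and the rank condition does not depend on $T$.

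The only delicate points are the analytic justifications. First, measurability and integrability of $\sigma\mapsto e^{A(\sigma)t}B(\sigma)$ follow since it is a uniform limit of polynomials in bounded measurable matrices. Second, the identity theorem must be applied to the entire function $p^\top K$. Third, we need that the range of $L_T$ is a subspace whose orthogonal complement is detected by testing against all $u$. I expect this last point to be the main thing to state carefully: take $u(s)=K(T-s)^\top p$, which yields $p^\top L_T u=\int_0^T|K(t)^\top p|^2\,\rd t$. This vanishes iff $p^\top K\equiv0$, so the range of $L_T$, which is a subspace of $\R^n$, is all of $\R^n$ iff no such $p\neq0$ exists.
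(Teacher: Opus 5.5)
Your argument is correct and complete. The paper gives no proof of this lemma; it simply quotes it from~\cite{EZ:14}. So there is no in-paper proof to match, and your proposal supplies a self-contained one.

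The argument in~\cite{EZ:14} proceeds by duality. It uses an averaged observability inequality for the adjoint system $-\dot\varphi=A(\sigma)^\top\varphi$ with a common terminal datum $\varphi_T$. In finite dimensions this reduces to the unique-continuation property that $\int_\Sigma B^\top e^{A^\top s}\varphi_T\,\rd\mu\equiv 0$ on $[0,T]$ forces $\varphi_T=0$, and differentiating at $s=0$ then yields the rank condition. Your left-kernel/Taylor-coefficient step is exactly this mechanism. You bypass the observability inequality by testing with $u(s)=K(T-s)^\top p$, which amounts to using the averaged Gramian $W_T=\int_0^T K(t)K(t)^\top\,\rd t$.

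Your route gives three things:
\begin{enumerate}
\item It verifies the lemma directly in the generality the paper works in: a Borel probability measure on $\Sigma$, bounded measurable $A,B$, integrable controls, and $L^\infty$ initial profiles. The paper tacitly assumes the cited result covers this setting.
\item It shows that $\operatorname{Range}(L_T)$ equals the column space of $\mathcal{C}(A,B)$ for every $T>0$, which is more than the equivalence between full rank and surjectivity.
\item When the rank is $n$, it produces the explicit steering control $u(s)=K(T-s)^\top W_T^{-1}\bigl(x^*-\int_\Sigma e^{AT}\chi(0)\,\rd\mu\bigr)$.
\end{enumerate}
What the observability formulation of~\cite{EZ:14} gives in exchange is a framework that carries over to infinite-dimensional state spaces, where your finite-dimensional range argument is not available.
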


Similar to~\cite{XC-BG:25}, we deal with sparse $(A,B)$ pairs, where the pattern is represented by a digraph $G = (V,E)$ on $(n+m)$ nodes as follows:
\begin{enumerate}
    \item The node set $V$ is a disjoint union of two subsets: the $\alpha$-nodes
    \[
    V_\alpha = \{\alpha_1,\alpha_2,\dotsc,\alpha_n\},
    \]
    corresponding to the $n$ scalar states $x_i$, and the $\beta$-nodes
    \[
    V_\beta = \{\beta_1,\beta_2,\dotsc,\beta_m\},
    \]
    corresponding to the $m$ scalar inputs $u_i$.
    \item There is a directed edge from $\alpha_k$ to $\alpha_j$ if $A_{jk}$ is not identically zero, and a directed edge from $\beta_i$ to $\alpha_j$ if $B_{ji}$ is not identically zero. Moreover, the $\beta$-nodes do not have any in-neighbors.
\end{enumerate}
We refer to the graph $G$ defined above as the digraph induced by the pair $(A,B)$. We also denote by $\G_{n,m}$ the collection of all digraphs on $(n+m)$ nodes, with $n$ $\alpha$-nodes and $m$ $\beta$-nodes satisfying the condition that none of the $\beta$-nodes have any in-neighbors.

Given $G\in\G_{n,m}$, a pair $(A,B)$ is \textit{compliant} with $G$ if the digraph induced by $(A,B)$ is a subgraph of $G$. Equivalently, $A_{jk}$ is the zero function whenever $\alpha_k\alpha_j\notin E$, and $B_{ji}$ is the zero function whenever $\beta_i\alpha_j\notin E$.
For a graph $G \in \G_{n,m}$, we define
\begin{equation*}
\V(G) :=
\big\{(A,B)\in L^\infty(\Sigma;\R^{n\times n})
\times L^\infty(\Sigma;\R^{n\times m})
: (A,B)\text{ is compliant with }G\big\}.
\end{equation*}
We say that $G$ is structurally averaged controllable if there exists a pair $(A,B)\in\V(G)$ that is averaged controllable.

We next introduce the strong component decomposition of a digraph.

\begin{definition}\label{def:scd}
Let $G=(V,E)$ be a digraph. The strong component decomposition of $G$ is a node-set decomposition $V=\bigcup_{i=1}^N V_i$,
where the sets $V_i$ are pairwise disjoint, such that the following hold:
\begin{enumerate}
    \item For each $i\in\{1,\dotsc,N\}$, the subgraph $G_i$ induced by $V_i$ is strongly connected.
    \item If $\widehat G$ is any strongly connected subgraph of $G$, then $\widehat G$ is a subgraph of $G_i$ for some $i \in \{1,\dotsc,N\}$.
\end{enumerate}
We call these induced subgraphs the strong components of $G$.
\end{definition}

Note that a strong component may consist of a single node, with or without a self-loop. A strong component is called nontrivial if it contains a cycle. For example, each $\beta$-node for a graph $G \in \G_{n,m}$ is a trivial strong component. Now let $G_{\mathrm{cyc}}=(V_{\mathrm{cyc}},E_{\mathrm{cyc}})$ be the union of all nontrivial strong components of $G$. Moreover, let $V_{\mathrm{cyc}}^+$ be the set of successors of nodes in $V_{\mathrm{cyc}}$, including the nodes in $V_{\mathrm{cyc}}$ themselves, and let $G_{\mathrm{cyc}}^+$ be the subgraph of $G$ induced by $V_{\mathrm{cyc}}^+$. We are now in a position to formally introduce the core of a digraph.


\begin{definition}\label{def:core}
Given a digraph $G\in\G_{n,m}$, define $V^*:=V\setminus V_{\mathrm{cyc}}^+$,
and let $G^*$ be the subgraph of $G$ induced by $V^*$. We refer to $G^*$ as the core of $G$.
\end{definition}

By construction, a node belongs to $G^*$ if and only if it is not a successor of any cycle in $G$. In particular, $G^*$ is acyclic and contains all $\beta$-nodes. We also write 
\[
V_\alpha^*:=V^*\cap V_\alpha
\]
for the set of $\alpha$-nodes in the core, and let $n^* := |V_\alpha^*|$ denote the number of such nodes. Moreover, for $G \in \G_{n,m}$, we say $G$ is accessible if for each $\alpha$-node $\alpha_j$, there exists a $\beta$-node $\beta_i$ with a path from $\beta_i$ to $\alpha_j$.

A matrix $\mathcal{S}\in\{0,\ast\}^{r\times s}$ is a \textit{sparsity pattern} matrix, where $\mathcal{S}_{ij}=0$ indicates a fixed zero, and $\mathcal{S}_{ij}=\ast$ indicates an independent free parameter.
A realization of $\mathcal{S}$ is any $M\in\R^{r\times s}$ obtained by assigning a real value to every $\ast$ while keeping the zeros. We denote the set of all such realizations of $\mathcal{S}$ by
\[
\mathcal{P}(\mathcal{S}) := \{ M\in\mathbb{R}^{r\times s} : M_{ij}=0 \; \text{whenever} \; \mathcal{S}_{ij}=0\}.
\]

Under the convention used in the definition of $\V(G)$, a pair $(A,B)\in\V(G^*)$ has $A:\Sigma\to\R^{n^*\times n^*}$ and $B:\Sigma\to\R^{n^*\times m}$ after fixing an ordering of the $\alpha$-nodes in $G^*$. When $n^*=0$, we interpret the core matrices as having zero rows. Without loss of generality, we assume that the $\alpha$-nodes in $V^*$ are labeled as $\{\alpha_1, \dotsc, \alpha_{n^*}\}$ until Subsection~\ref{subsec:reduced_graphs} of the paper.

Since $G^*$ contains no cycles, paths in $G^*$ have length at most $n^*$. For $(A,B)\in\V(G^*)$, we define the truncated averaged controllability matrix
\[
\mathcal C^{\mathrm{tr}}(A,B)
:=
\left[
\int_\Sigma B\,\rd\mu,\,
\int_\Sigma AB\,\rd\mu,\,
\dotsc,\,
\int_\Sigma A^{n^*-1}B\,\rd\mu
\right].
\]
We let $\mathcal{S}_{G^*}\in\{0,\ast\}^{n^*\times mn^*}$ denote the sparsity pattern of this truncated averaged controllability matrix. Specifically, for $h,k\in\{1,\dotsc,n^*\}$ and $i\in\{1,\dotsc,m\}$,
\[
(\mathcal{S}_{G^*})_{h,(k-1)m+i} =
\begin{cases}
\ast, & \text{if there exists a path of length }k\text{ from }\beta_i\text{ to } \alpha_h \text{ in } G^*,\\
0, & \text{otherwise.}
\end{cases}
\]
Thus, $\mathcal{P}(\mathcal{S}_{G^*})$ consists of matrices that can only take nonzero values in entries corresponding to paths in the core.

\section{Main result and proof}\label{sec:main_result_and_proof}
We begin with a constructive result on the core.

\subsection{Core controllability matrix assignment}
With the required definitions in place, we first prove the following value assignment theorem for the core.

\begin{theorem}\label{thm:controllability_matrix_value_assignment}
For any target matrix $M \in \mathcal{P}(\mathcal{S}_{G^*})$, there exists a pair $(A,B) \in \V(G^*)$ such that
\[
\mathcal C^{\mathrm{tr}}(A,B)=M.
\]
\end{theorem}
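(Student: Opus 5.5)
The plan is to use the freedom of the parameter $\sigma$ to decouple the entries of $\mathcal C^{\mathrm{tr}}(A,B)$, handling each nonzero target entry on its own disjoint piece of the parameter space. First I record the combinatorial meaning of the entries. For $(A,B)\in\V(G^*)$ and each $\sigma$,
\[
\big(A(\sigma)^{k-1}B(\sigma)\big)_{hi}=\sum_{\pi}\ \prod_{e\in\pi} w_e(\sigma),
\]
where the sum runs over walks $\pi$ of length $k$ from $\beta_i$ to $\alpha_h$ in $G^*$, and $w_e$ is the corresponding entry of $A$ or $B$. Since $G^*$ is acyclic, all such walks are paths. Hence entry $(h,(k-1)m+i)$ of $\mathcal C^{\mathrm{tr}}(A,B)$ is the $\mu$-integral of this sum. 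Let $J$ be the set of index triples $(h,k,i)$ with $(\mathcal S_{G^*})_{h,(k-1)m+i}=\ast$, and for each $(h,k,i)\in J$ fix one path $\pi_{hki}$ of length $k$ from $\beta_i$ to $\alpha_h$ in $G^*$.

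Next I partition part of $\Sigma$. Since $\mathcal O\subseteq\supp\mu$ is open and nonempty, $\mu(\mathcal O)>0$. Since $\mu|_{\mathcal O}$ is absolutely continuous with respect to a smooth volume, it is atomless. So I can choose pairwise disjoint measurable sets $\Omega_{hki}\subseteq\mathcal O$, one for each $(h,k,i)\in J$, each of positive measure. I further split each one as $\Omega_{hki}=\Omega^+_{hki}\sqcup\Omega^-_{hki}$ with $\mu(\Omega^+_{hki})=\mu(\Omega^-_{hki})>0$. All edge weights are set to zero outside $\bigcup\Omega_{hki}$. On $\Omega_{hki}$, every edge not on $\pi_{hki}$ gets weight $0$. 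With this, for $\sigma\in\Omega_{hki}$ the only walks with nonzero weight product are the subpaths of $\pi_{hki}$ starting at $\beta_i$, that is, its prefixes. Because the $\Omega$'s are disjoint and $A^{k-1}B$ is computed pointwise, $\mathcal C^{\mathrm{tr}}(A,B)$ is the sum over $J$ of the contributions from the individual pieces.

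The main obstacle is exactly these prefix contributions: a prefix of $\pi_{hki}$ of length $k'<k$ ending at $\alpha_{h'}$ would pollute entry $(h',k',i)$. I plan to kill them with a sign trick.

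Suppose $k\ge 2$. On $\Omega_{hki}$, give the first edge of $\pi_{hki}$ (the $B$-edge out of $\beta_i$) the weight $g={\bf 1}_{\sigma\in\Omega^+_{hki}}-{\bf 1}_{\sigma\in\Omega^-_{hki}}$. Give the intermediate $A$-edges weight $1$. Give the last edge (into $\alpha_h$, an $A$-edge distinct from the first) the weight $M_{h,(k-1)m+i}\,{\bf 1}_{\sigma\in\Omega^+_{hki}}/\mu(\Omega^+_{hki})$. Then:
\begin{itemize}
\item every proper prefix has weight product $g$, whose integral over $\Omega_{hki}$ is $0$;
\item the full path has weight product $M_{h,(k-1)m+i}\,{\bf 1}_{\sigma\in\Omega^+_{hki}}/\mu(\Omega^+_{hki})$, whose integral is exactly $M_{h,(k-1)m+i}$.
\end{itemize}
For $k=1$, the path is a single $B$-edge with no proper prefixes, so the constant weight $M_{h,i}/\mu(\Omega_{hki})$ on $\Omega_{hki}$ suffices.

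Summing over $J$, each target entry is matched and the entries in $J$ receive no other contributions. Entries outside $J$ are automatically zero by compliance. All functions are bounded and measurable, and only edges of $G^*$ are used, so $(A,B)\in\V(G^*)$ and $\mathcal C^{\mathrm{tr}}(A,B)=M$. If $n^*=0$, the statement is vacuous.
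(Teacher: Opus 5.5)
Your proof is correct, and it takes a noticeably simpler route than the paper.

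\emph{The paper's route.} Algorithm~\ref{alg: core_weight_assignment} splits the active set first by input and then by \emph{maximal} path $p_{ij}$. It processes each maximal path depth by depth and keeps a residual matrix $M'$. The first maximal path (in processing order) that reaches $\alpha_h$ at depth $k$ with nonzero residual absorbs the whole target value through the weight $M'/\tilde r_{ij}^{(k-1)}$. Every other visit to that position is neutralized by a $\pm 2$ equal-measure split of the current active set. Verifying this takes three sublemmas: the product identity for $\tilde r_{ij}^{(k)}$, the residual identity for $M'$, and the disjoint-union prefix decomposition that prevents shared-prefix double counting.

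\emph{Your route.} You give each starred entry its own region and activate a single chosen path there. You cancel all proper prefixes at once with one mean-zero sign on the first edge, and place the $\Omega^+_{hki}$-normalized target on the last edge. Since a path in the acyclic core visits each node once, only its prefixes carry nonzero products on $\Omega_{hki}$, and the regions decouple with no bookkeeping.

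\emph{What the paper's construction buys.} Its extra structure is reused later. Along every maximal path, each prefix product is a \emph{nonzero} constant $\tilde r_{ij}^{(k)}/\mu(\tilde\Sigma_{ij}^{(k)})$ on nested, radially chosen active sets. Algorithm~\ref{alg: Whole_G_weight_assignment} needs this when it attaches each cycle by dividing by $\tilde r$ on $\tilde\Sigma^{(\dep(v_{p^*}))}_{i_{p^*}j_{p^*}}$. The laminarity of these sets is also used in Lemma~\ref{lem: submatrix_full_rank}.

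In your construction, the full-path product on $\Omega_{hki}$ vanishes whenever $M_{h,(k-1)m+i}=0$. Reusing it for the sufficiency proof would therefore need a small adjustment. For instance, put weight $1$ rather than $0$ on the last edge in that case (it still integrates to zero), make all splits radial, and take $\pi_{hki}=\tau^*_{\beta_{i_{p^*}}v_{p^*}}$ for the entry ending at $v_{p^*}$. For Theorem~\ref{thm:controllability_matrix_value_assignment} itself, none of this is needed.
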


We prove Theorem~\ref{thm:controllability_matrix_value_assignment} by construction through Algorithm~\ref{alg: core_weight_assignment}. 
Note that the entry $M_{h,(k-1)m+i}$ corresponds to paths of length $k$ from $\beta_i$ to $\alpha_h$. Moreover, the auxiliary matrix $M'$ records the discrepancy between the target entry and the value already assigned to the corresponding path at the current stage of the algorithm. This discrepancy is updated in lines~\ref{alg_line: M'_update_ifnoteqzero} and~\ref{alg_line: M'_update_ifeqzero} of Algorithm~\ref{alg: core_weight_assignment}.

We say that $D_1,\dotsc,D_q$ form an \emph{equal-measure partition} of $D'$ if they are pairwise disjoint, their union is $D'$,
and
\[
\mu(D_1)=\cdots=\mu(D_q)=\frac{\mu(D')}{q}.
\]
Since $\mu|_{\mathcal O}$ is absolutely continuous with respect to a smooth volume measure, it is nonatomic. Hence every measurable $D'\subseteq\mathcal O$ of positive measure admits the finite equal-measure partitions required in Algorithm~\ref{alg: core_weight_assignment}; see, e.g.,~\cite{VIB:07}.
In the algorithm, the active set at a given step is the measurable subset of the parameter space $\Sigma$ on which that step of the edge-weight assignment acts.

\begin{algorithm}[!t]
\caption{Core Edge Weight Assignment}
\begin{algorithmic}[1]\label{alg: core_weight_assignment}
\STATE \textbf{Input:} target truncated averaged controllability matrix $M$ and a measurable active set $D\subseteq\mathcal O$ with $\mu(D)>0$.
\STATE \textbf{Initialize:} $M' \leftarrow M$, and set every edge-weight function equal to zero on all of $\Sigma$.
\STATE Partition $D$ into $m$ disjoint sets of equal measure, denoted by $\Sigma_1, \dotsc, \Sigma_m$, one for each collection of paths starting from input $\beta_i$.
\FOR{$i=1$ to $m$}
    \STATE Separate all unique maximal paths in $G^*$ starting from $\beta_i$, denoted each by $p_{ij}$, where $j \in \{1,2,\dotsc,s_i\}$ and $s_i$ is the number of such paths starting from $\beta_i$, and partition $\Sigma_i$ into $s_i$ disjoint sets $\Sigma_{ij}$ of equal measure. We let $\xi_{ij}(k)$ denote the index of the $k$-th $\alpha$-node in path $p_{ij}$, and $\delta_{ij}$ denote the length of $p_{ij}$.
    \FOR{$j=1$ to $s_i$}
    \STATE Consider $p_{ij}=\beta_i \alpha_{\xi_{ij}(1)} \alpha_{\xi_{ij}(2)} \cdots \alpha_{\xi_{ij}(\delta_{ij})}$.
        \STATE \textbf{Initialize:} $\tilde{r}_{ij}^{(0)} \leftarrow \mu(\Sigma_{ij})=\frac{\mu(D)}{m s_i}$ and $\tilde{\Sigma}^{(0)}_{ij} \leftarrow \Sigma_{ij}$, where $\tilde{\Sigma}^{(k)}_{ij}$ denotes the active set at depth $k$ of path $p_{ij}$.
        \FOR{$k = 1$ to $\delta_{ij}$}
            \STATE Denote the corresponding edge-weight function by $\tilde{e}_{ij}^{(k)}$.
            \IF{$M'_{\xi_{ij}(k), (k-1)m+i} \neq 0$}
                \STATE Update 
                \[
                \tilde{e}_{ij}^{(k)}(\sigma) \leftarrow \begin{cases}
                    \frac{M'_{\xi_{ij}(k), (k-1)m+i}}{\tilde{r}_{ij}^{(k-1)}} & \text{if } \sigma \in \tilde{\Sigma}^{(k-1)}_{ij} \\
                    \tilde{e}_{ij}^{(k)}(\sigma) & \text{otherwise}
                \end{cases}.
                \] \label{alg_line: edge_update_ifnoteqzero}
                \STATE Update $\tilde{r}_{ij}^{(k)} \leftarrow M'_{\xi_{ij}(k), (k-1)m+i}$. \label{alg_line: r_update_M'_nonzero}
                \STATE Keep $\tilde{\Sigma}^{(k)}_{ij} \leftarrow \tilde{\Sigma}^{(k-1)}_{ij}$.
                \STATE Update $M'_{\xi_{ij}(k), (k-1)m+i} \leftarrow 0$. \label{alg_line: M'_update_ifnoteqzero}             
            \ELSE
                \STATE Partition $\tilde{\Sigma}^{(k-1)}_{ij}$ into two sets $\tilde{\Sigma}^{(k-1,1)}_{ij}$ and $\tilde{\Sigma}^{(k-1,2)}_{ij}$ of equal measure. \label{alg_line: equal_measure_partition}
                \STATE Update
                \[
                \tilde{e}_{ij}^{(k)}(\sigma) \leftarrow \begin{cases}
                    2 & \text{if } \sigma \in \tilde{\Sigma}^{(k-1,1)}_{ij} \\[1mm]
                    -2 & \text{if } \sigma \in \tilde{\Sigma}^{(k-1,2)}_{ij} \\[0.5mm]
                    \tilde{e}_{ij}^{(k)}(\sigma) & \text{otherwise}
                \end{cases}.
                \]
                \STATE Update $\tilde{\Sigma}^{(k)}_{ij} \leftarrow \tilde{\Sigma}^{(k-1,1)}_{ij}$.
                \STATE Keep $\tilde{r}_{ij}^{(k)} \leftarrow \tilde{r}_{ij}^{(k-1)}$.
                \STATE Keep $M'_{\xi_{ij}(k), (k-1)m+i} \leftarrow 0$. \label{alg_line: M'_update_ifeqzero}  
            \ENDIF
        \ENDFOR
    \ENDFOR
\ENDFOR
\RETURN $(A,B)$.
\end{algorithmic}
\end{algorithm}

Moreover, note that all updates in Algorithm~\ref{alg: core_weight_assignment} are supported on subsets of $D$. Since every edge-weight function, which corresponds to values in matrices $A$ and $B$, is initialized to zero on $\Sigma$, the returned pair $(A,B)$ vanishes on $\Sigma\setminus D$. Consequently, every integral over $\Sigma$ appearing in Theorem~\ref{thm:controllability_matrix_value_assignment} is equal to the corresponding integral over $D$.
Observe also that every quantity $\tilde r_{ij}^{(k)}$ appearing in a denominator in Algorithm~\ref{alg: core_weight_assignment} is nonzero. Indeed, this quantity is initialized at $\tilde r_{ij}^{(0)}=\mu(\Sigma_{ij})>0$, and thereafter, $\tilde r_{ij}^{(k)}$ is either assigned the nonzero value $M'_{\xi_{ij}(k),(k-1)m+i}$ or kept equal to $\tilde r_{ij}^{(k-1)}$.


\begin{proof}
    If $n^*=0$, then, under the convention above, both $\mathcal C^{\mathrm{tr}}(A,B)$ and $\mathcal S_{G^*}$ are $0\times 0$ matrices. Hence $\mathcal P(\mathcal S_{G^*})$ contains only the $0\times 0$ matrix, and the statement holds trivially. Thus, in the proof below, we assume $n^*>0$.
    
    We start with a sublemma that shows that at any depth $d$ of a path $p_{ij} = \beta_i \alpha_{\xi_{ij}(1)} \alpha_{\xi_{ij}(2)} \cdots \alpha_{\xi_{ij}(\delta_{ij})}$, the value $\tilde{r}_{ij}^{(k)}$ captures the expected product of the edges of the path up to depth $k$ over the path's corresponding subspace $\Sigma_{ij}$. Formally, 
    \begin{sublemma} \label{sublem: r_value_induc}
    For any given $i \in \{1,2,\dotsc,m\}$ and $j \in \{1,2,\dotsc,s_i\}$, it holds for all $k \in \{0,1,\dotsc,\delta_{ij}\}$ that
    \begin{equation*}
    \tilde{r}_{ij}^{(k)} = \E \left[\prod_{{l}=0}^k \tilde{e}_{ij}^{({l})} (\sigma) \; {\bf 1}_{\sigma \in \tilde{\Sigma}^{(k)}_{ij}} \right].
    \end{equation*}
    \end{sublemma}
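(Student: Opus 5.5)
We prove the sublemma by induction on $k$, with a strengthened hypothesis. The plan is to show simultaneously that the product $P_k(\sigma):=\prod_{l=0}^k \tilde e_{ij}^{(l)}(\sigma)$ is \emph{constant} on the active set $\tilde\Sigma_{ij}^{(k)}$, say equal to $c_k$, and that $\tilde r_{ij}^{(k)}=c_k\,\mu(\tilde\Sigma_{ij}^{(k)})$. The displayed identity of Sublemma~\ref{sublem: r_value_induc} follows immediately from these two facts. We read $\tilde e_{ij}^{(0)}$ as the empty factor, i.e.\ $\tilde e^{(0)}_{ij}\equiv 1$; the first genuine edge weight is $\tilde e_{ij}^{(1)}$, on the edge $\beta_i\alpha_{\xi_{ij}(1)}$.

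Before the induction, we check that the values of the edge-weight functions on $\tilde\Sigma_{ij}^{(k)}$ are the values written while processing $p_{ij}$, and are not overwritten. Two observations give this.
\begin{enumerate}
\item The active sets are nested: $\tilde\Sigma_{ij}^{(k)}\subseteq\tilde\Sigma_{ij}^{(k-1)}\subseteq\cdots\subseteq\Sigma_{ij}$.
\item Every update in Algorithm~\ref{alg: core_weight_assignment} modifies a function only on the current active set and leaves it unchanged elsewhere ("otherwise" branch). The sets $\Sigma_{ij}$ are pairwise disjoint over all pairs $(i,j)$.
\end{enumerate}
Hence an edge shared by several maximal paths, possibly from different inputs, carries on $\Sigma_{ij}$ only the value assigned during the pass over $p_{ij}$. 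Within $p_{ij}$, the edges at distinct depths are distinct because $G^*$ is acyclic. Thus on $\tilde\Sigma_{ij}^{(k)}$ the factor $\tilde e_{ij}^{(l)}$, for $l\le k$, equals the value set at step $l$ on $\tilde\Sigma_{ij}^{(l-1)}\supseteq\tilde\Sigma_{ij}^{(k)}$. I expect this bookkeeping, together with noticing that constancy of the product is needed, to be the main obstacle. The computations themselves are trivial.

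Base case $k=0$: $P_0\equiv1$ on $\tilde\Sigma^{(0)}_{ij}=\Sigma_{ij}$, and $\tilde r^{(0)}_{ij}=\mu(\Sigma_{ij})=\E[{\bf 1}_{\sigma\in\Sigma_{ij}}]$.

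Inductive step: assume the claim for $k-1$. There are two cases.
\begin{enumerate}
\item If $M'_{\xi_{ij}(k),(k-1)m+i}\neq0$, then $\tilde\Sigma^{(k)}_{ij}=\tilde\Sigma^{(k-1)}_{ij}$ and $\tilde e^{(k)}_{ij}$ is the constant $M'/\tilde r^{(k-1)}_{ij}$ there. So $P_k$ is constant, and
\[
\E\big[P_k\,{\bf 1}_{\sigma\in\tilde\Sigma^{(k)}_{ij}}\big]=\frac{M'}{\tilde r^{(k-1)}_{ij}}\,\E\big[P_{k-1}\,{\bf 1}_{\sigma\in\tilde\Sigma^{(k-1)}_{ij}}\big]=M'=\tilde r^{(k)}_{ij}.
\]
\item Otherwise, $\tilde\Sigma^{(k)}_{ij}$ is the first half $\tilde\Sigma^{(k-1,1)}_{ij}$, on which $\tilde e^{(k)}_{ij}\equiv 2$. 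Since $P_{k-1}\equiv c_{k-1}$ on $\tilde\Sigma^{(k-1)}_{ij}$, we get $P_k\equiv 2c_{k-1}$ on $\tilde\Sigma^{(k)}_{ij}$. Using $\mu(\tilde\Sigma^{(k)}_{ij})=\tfrac12\mu(\tilde\Sigma^{(k-1)}_{ij})$,
\[
\E\big[P_k\,{\bf 1}_{\sigma\in\tilde\Sigma^{(k)}_{ij}}\big]=2c_{k-1}\cdot\tfrac12\mu(\tilde\Sigma^{(k-1)}_{ij})=\tilde r^{(k-1)}_{ij}=\tilde r^{(k)}_{ij}.
\]
\end{enumerate}
Constancy of $P_{k-1}$ is exactly what makes the equal-measure halving split the integral in half. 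Without the strengthened hypothesis, this second case would fail.
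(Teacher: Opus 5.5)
Your proof is correct and follows essentially the same route as the paper: induction on $k$ with $\tilde e^{(0)}_{ij}\equiv 1$, using that $\prod_{l=0}^{k}\tilde e^{(l)}_{ij}$ is constant on the nested active set $\tilde\Sigma^{(k)}_{ij}$, and splitting into the case $M'\neq 0$ (same active set, where the factor $M'/\tilde r^{(k-1)}_{ij}$ cancels) and the case $M'=0$ (weight $2$ on a set of half the measure). The differences are cosmetic. You make constancy part of the inductive hypothesis, and you spell out, via disjointness of the $\Sigma_{ij}$, why the values on $\Sigma_{ij}$ are never overwritten by the passes over other paths; the paper instead asserts constancy directly from the construction and leaves the non-overwriting implicit.
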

    \begin{proof}[Proof of Sublemma~\ref{sublem: r_value_induc}]
    We prove this by induction. For notational convenience, define
    \[
    \tilde e_{ij}^{(0)}(\sigma):=1,\qquad \sigma\in\Sigma,
    \]
    and recall that $\tilde\Sigma_{ij}^{(0)}=\Sigma_{ij}$.

    \textbf{Base case ($k = 0$):} 
    \begin{align*}
        \E [{\bf 1}_{\sigma \in \Sigma_{ij}}] = \mu (\Sigma_{ij}) =  \frac{\mu (D)}{m s_i} = \tilde{r}_{ij}^{(0)}.
    \end{align*}

    \textbf{Inductive step:} Let $k' \in \{0,1,\dotsc,\delta_{ij}-1\}$ be fixed and assume that
    \begin{equation}
    \tilde{r}_{ij}^{(k')} = \E \left[\prod_{{l}=0}^{k'} \tilde{e}_{ij}^{({l})} (\sigma) \; {\bf 1}_{\sigma \in \tilde{\Sigma}^{(k')}_{ij}} \right]. \label{eq: induc_hyp}
    \end{equation}
    Note that by construction,
    \[
    \tilde\Sigma_{ij}^{(k')}\subseteq\tilde\Sigma_{ij}^{(k'-1)}\subseteq\cdots\subseteq\tilde\Sigma_{ij}^{(0)}=\Sigma_{ij},
    \]
    and for every ${l}\in\{0,1,\dotsc,k'\}$, the function $\tilde e_{ij}^{({l})}$ is constant on $\tilde\Sigma_{ij}^{(k')}$. In particular, at depth ${l}$, $\tilde e_{ij}^{({l})}$ is assigned a constant value on the current active set $\tilde\Sigma_{ij}^{(l)}$, and all later active sets are subsets of that set. Consequently, $\prod_{{l}=0}^{k'}\tilde e_{ij}^{({l})}$ is constant on $\tilde\Sigma_{ij}^{(k')}$.

    Now based on the algorithm, two cases may happen:
    
    \textbf{Case 1:} If $M'_{\xi_{ij}(k'+1),k' m+i} \neq 0$.
    \begin{align*}
        \E \left[\prod_{{l}=0}^{k'+1} \tilde{e}_{ij}^{({l})} (\sigma) {\bf 1}_{\sigma \in \tilde{\Sigma}^{(k'+1)}_{ij}}\right] &\overset{\mathrm{(i)}}{=} \E \left[\prod_{{l}=0}^{k'+1} \tilde{e}_{ij}^{({l})} (\sigma) {\bf 1}_{\sigma \in \tilde{\Sigma}^{(k')}_{ij}}\right] \\
        &= \E \left[\left( \prod_{{l}=0}^{k'} \tilde{e}_{ij}^{({l})} (\sigma) \right) \left( \tilde{e}_{ij}^{(k'+1)} (\sigma) {\bf 1}_{\sigma \in \tilde{\Sigma}^{(k')}_{ij}} \right) \right] \\
        &= \E \left[\left( \prod_{{l}=0}^{k'} \tilde{e}_{ij}^{({l})} (\sigma) \right) \left( \frac{M'_{\xi_{ij}(k'+1), k' m+i}}{\tilde{r}_{ij}^{(k')}} {\bf 1}_{\sigma \in \tilde{\Sigma}^{(k')}_{ij}} \right) \right] \\
        &= \frac{M'_{\xi_{ij}(k'+1), k' m+i}}{\tilde{r}_{ij}^{(k')}} \E \left[ \prod_{{l}=0}^{k'} \tilde{e}_{ij}^{({l})} (\sigma) {\bf 1}_{\sigma \in \tilde{\Sigma}^{(k')}_{ij}} \right] \\
        &\overset{\mathrm{(ii)}}{=} \frac{M'_{\xi_{ij}(k'+1), k' m+i}}{\tilde{r}_{ij}^{(k')}} \tilde{r}_{ij}^{(k')} \\
        &= M'_{\xi_{ij}(k'+1), k' m+i} \\
        &\overset{\mathrm{(iii)}}{=} \tilde{r}_{ij}^{(k'+1)},
    \end{align*}
    where (i) follows from the fact that $\tilde{\Sigma}^{(k'+1)}_{ij} = \tilde{\Sigma}^{(k')}_{ij}$ in this case, (ii) from the induction hypothesis~\eqref{eq: induc_hyp}, and (iii) from line~\ref{alg_line: r_update_M'_nonzero} of Algorithm~\ref{alg: core_weight_assignment}.

    \textbf{Case 2:} Suppose $M'_{\xi_{ij}(k'+1),k'm+i}=0$. By the constancy argument above, we know there exists a constant $c$ such that
    \[
    \prod_{{l}=0}^{k'}\tilde e_{ij}^{({l})}(\sigma) = c,
    \qquad \text{for all } \sigma\in\tilde\Sigma_{ij}^{(k')}.
    \]
    Since $\tilde\Sigma_{ij}^{(k'+1)}=\tilde\Sigma_{ij}^{(k',1)}$, and $\tilde e_{ij}^{(k'+1)} (\sigma)=2$ for $\sigma \in \tilde\Sigma_{ij}^{(k',1)}$, we obtain
    \begin{align*}
        \E \left[\prod_{{l}=0}^{k'+1} \tilde{e}_{ij}^{({l})} (\sigma) {\bf 1}_{\sigma \in \tilde{\Sigma}^{(k'+1)}_{ij}}\right] &= \;2\,\E \left[\prod_{{l}=0}^{k'} \tilde{e}_{ij}^{({l})} (\sigma) {\bf 1}_{\sigma \in \tilde{\Sigma}^{(k',1)}_{ij}}\right] \\
        &=\;2c\mu\big(\tilde\Sigma_{ij}^{(k',1)}\big) \\
        &\overset{\mathrm{(i)}}{=}\;c\mu\big(\tilde\Sigma_{ij}^{(k')}\big) \\
        &=\;\E \left[\prod_{{l}=0}^{k'} \tilde{e}_{ij}^{({l})} (\sigma) {\bf 1}_{\sigma \in \tilde{\Sigma}^{(k')}_{ij}}\right] \\
        &\overset{\mathrm{(ii)}}{=}\;\tilde r_{ij}^{(k')} \\
        &=\;\tilde r_{ij}^{(k'+1)},
    \end{align*}
    where (i) follows from line~\ref{alg_line: equal_measure_partition} of Algorithm~\ref{alg: core_weight_assignment}, and (ii) from the induction hypothesis~\eqref{eq: induc_hyp}.
    This finishes the proof of Sublemma~\ref{sublem: r_value_induc}. \end{proof}

    We now provide the following result regarding the updates of $M'$ values in the algorithm. For notational convenience, whenever $k>\delta_{ij}$, we set $\xi_{ij}(k):=\varnothing$, so that ${\bf 1}_{\xi_{ij}(k)=h}=0$ for every $h\in\{1,\dotsc,n^*\}$ and the corresponding summand is zero.

    \begin{sublemma}\label{sublem: M'_induc}
    For $i\in\{1,\dotsc,m\}$, $k\in\{1,\dotsc,n^*\}$, and $j\in\{0,1,\dotsc,s_i\}$, let $M'_{h,(k-1)m+i}(j)$ denote the value of $M'_{h,(k-1)m+i}$ after the loop corresponding to path $p_{ij}$ has been completed, with $j=0$ denoting its initialized value. Then, for every $h\in\{1,\dotsc,n^*\}$,
    \begin{align}
    M_{h,(k-1)m+i} -\!\sum_{j'=1}^{j} \E_{\Sigma_{ij'}}\!\left[ \left( \prod_{{l}=1}^{k}\tilde e_{ij'}^{({l})}(\sigma) \right) {\bf 1}_{\xi_{ij'}(k)=h} \right] = M'_{h,(k-1)m+i}(j), \label{eq: M'_induc_argument}
\end{align}
where for $j=0$, the sum is understood to be zero.
\end{sublemma}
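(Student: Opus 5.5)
We argue by induction on $j$, with $i$ and $k$ fixed. For $j=0$ the sum is empty and $M'$ has just been initialized to $M$, so \eqref{eq: M'_induc_argument} holds. For the inductive step, assume the identity at $j-1$ for every $h$. We then examine what the loop over path $p_{ij}$ does to the entries $M'_{h,(k-1)m+i}$.

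The loop for $p_{ij}$ modifies only the entries $M'_{\xi_{ij}(k''),(k''-1)m+i}$ for $k''\in\{1,\dotsc,\delta_{ij}\}$. It touches exactly one entry per depth, and that entry lies in the column block of input $i$ and depth $k''$. So for our fixed $k$, the only entry that can change is the one with $h=\xi_{ij}(k)$, and only if $k\le\delta_{ij}$. For every other $h$, the indicator ${\bf 1}_{\xi_{ij}(k)=h}$ vanishes, so the new summand is zero and the identity carries over from $j-1$.

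For $h=\xi_{ij}(k)$ with $k\le\delta_{ij}$, we must show that the change in $M'$ equals minus the new summand, that is,
\[
M'_{h,(k-1)m+i}(j-1)-M'_{h,(k-1)m+i}(j)=\E_{\Sigma_{ij}}\Big[\prod_{l=1}^{k}\tilde e_{ij}^{(l)}(\sigma)\Big].
\]
Since a path visits $\alpha_h$ at most once, at the time path $p_{ij}$ reaches depth $k$ the entry still equals $M'(j-1)$, and afterwards it is set (or kept) to $0$. So we need $\E_{\Sigma_{ij}}[\prod_{l\le k}\tilde e^{(l)}_{ij}]=M'_{h,(k-1)m+i}(j-1)$. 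Two facts give this.

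\emph{First}, the integral over $\Sigma_{ij}$ reduces to one over $\tilde\Sigma_{ij}^{(k)}$. On $\Sigma_{ij}\setminus\tilde\Sigma_{ij}^{(k)}$, the points were discarded at some depth $l_0\le k$ as part of a $\pm2$ split. The discarded half $\tilde\Sigma^{(l_0-1,2)}_{ij}$ carries weight $-2$ and has the same measure as the kept half. On both halves the earlier product is the same constant. After depth $l_0$, edges on the discarded half stay zero, because later updates act only on active subsets. Hence the product up to $k$ vanishes on discarded points when $k>l_0$. For $l_0=k$, the $\pm2$ contributions cancel, and the integral over $\tilde\Sigma^{(k-1)}_{ij}$ is $0$, which matches $M'=0$. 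This is where we need that edge weights off the active set are never overwritten by this path. The key care point is that weights previously set by other paths $p_{ij'}$ sharing the same edge live on the disjoint set $\Sigma_{ij'}$, so they do not interfere.

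\emph{Second}, by Sublemma~\ref{sublem: r_value_induc}, the integral over $\tilde\Sigma_{ij}^{(k)}$ equals $\tilde r^{(k)}_{ij}$. In Case~1 this is $M'(j-1)$. In Case~2, where the entry is zero, we need the total integral over $\tilde\Sigma_{ij}^{(k-1)}$ to be zero, and this follows from the cancellation above rather than from $\tilde r$.

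The main obstacle is the bookkeeping in this first fact. Concretely, it consists of verifying that the truncation to active sets together with the $\pm2$ cancellations makes $\E_{\Sigma_{ij}}[\prod_{l\le k}\tilde e^{(l)}_{ij}]$ equal to $0$ in the split case and $\tilde r^{(k)}_{ij}$ otherwise. One must also check that the functions $\tilde e_{ij}^{(l)}$, restricted to $\Sigma_{ij}$, coincide with the final edge weights, using the disjointness of the $\Sigma_{ij'}$.
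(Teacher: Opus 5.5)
Your proposal is correct and follows the paper's proof. Both argue by induction on $j$, dispose of the case $\xi_{ij}(k)\neq h$ trivially, and for $\xi_{ij}(k)=h$ reduce the integral to the active set, then use Sublemma~\ref{sublem: r_value_induc} when the entry is nonzero and the $\pm2$ cancellation when it is zero. The reduction is cleaner if you note directly, as the paper does, that $\tilde e^{(k)}_{ij}$ vanishes on $\Sigma_{ij}\setminus\tilde\Sigma^{(k-1)}_{ij}$, so the integral reduces to $\tilde\Sigma^{(k-1)}_{ij}$; reducing to $\tilde\Sigma^{(k)}_{ij}$ is only valid in the nonzero case, and you need not track points discarded at earlier depths.
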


\begin{proof}[Proof of Sublemma~\ref{sublem: M'_induc}]
We proceed by induction on $j$.

\textbf{Base case ($j=0$):}
Since the sum in~\eqref{eq: M'_induc_argument} is empty
and no path starting from $\beta_i$ has yet been processed,
\begin{equation*}
M'_{h,(k-1)m+i}(0)=M_{h,(k-1)m+i},
\end{equation*}
which proves the claim.

\textbf{Inductive step:}
Fix $\tilde j\in\{0,1,\dotsc,s_i-1\}$ and assume that
\eqref{eq: M'_induc_argument} holds with $j=\tilde j$.
We consider the next path $p_{i,\tilde j+1}$.

\textbf{Case 1:}
Suppose $\xi_{i,\tilde j+1}(k)\neq h$. This includes the case $k>\delta_{i,\tilde j+1}$ under the convention introduced earlier. If $k>\delta_{i,\tilde j+1}$, the loop for $p_{i,\tilde j+1}$ never reaches depth $k$; otherwise, its depth-$k$ update concerns a row different from $h$. Hence $M'_{h,(k-1)m+i}$ is not changed, i.e.,
\[
M'_{h,(k-1)m+i} (\tilde j + 1) = M'_{h,(k-1)m+i} (\tilde j).
\]
At the same time, the
new path contributes zero to the sum in
\eqref{eq: M'_induc_argument}. Therefore,
\begin{align*}
M_{h,(k-1)m+i}
-\!\sum_{j'=1}^{\tilde j + 1}
\E_{\Sigma_{ij'}}\!\left[
\left(
\prod_{{l}=1}^{k}\tilde e_{ij'}^{({l})}(\sigma)
\right)
{\bf 1}_{\xi_{ij'}(k)=h}
\right] &= M'_{h,(k-1)m+i}(\tilde j ) - 0 \\
&= M'_{h,(k-1)m+i}(\tilde j +1),
\end{align*}
and the induction claim follows.

\textbf{Case 2:}
Suppose $\xi_{i,\tilde j+1}(k)=h$ and
$M'_{h,(k-1)m+i}(\tilde j)\neq0$. By the induction
hypothesis, the left-hand side of
\eqref{eq: M'_induc_argument} at stage $\tilde j+1$
equals
\begin{align*}
&M'_{h,(k-1)m+i}(\tilde j)
-\E_{\Sigma_{i,\tilde j+1}}\!\left[
\prod_{{l}=1}^{k}
\tilde e_{i,\tilde j+1}^{({l})}(\sigma)
\right]
\\
\overset{\mathrm{(i)}}{={}}&
M'_{h,(k-1)m+i}(\tilde j)
-\E\!\left[
\left(
\prod_{{l}=0}^{k-1}
\tilde e_{i,\tilde j+1}^{({l})}(\sigma)
\right)
\tilde e_{i,\tilde j+1}^{(k)}(\sigma)
{\bf 1}_{\sigma\in
\tilde{\Sigma}_{i,\tilde j+1}^{(k-1)}}
\right]
\\
\overset{\mathrm{(ii)}}{=}{}&M'_{h,(k-1)m+i}(\tilde j) -\frac{M'_{h,(k-1)m+i}(\tilde j)}
{\tilde r_{i,\tilde j+1}^{(k-1)}}
\E\!\left[
\left(
\prod_{{l}=0}^{k-1}
\tilde e_{i,\tilde j+1}^{({l})}(\sigma)
\right)
{\bf 1}_{\sigma\in
\tilde{\Sigma}_{i,\tilde j+1}^{(k-1)}}
\right]
\\
\overset{\mathrm{(iii)}}{=} &\ 0 \overset{\mathrm{(iv)}}{=}
M'_{h,(k-1)m+i}(\tilde j+1),
\end{align*}
where (i) follows from the fact that, within $\Sigma_{i,\tilde j+1}$, the newly assigned $k$-th edge-weight function is zero outside $\tilde{\Sigma}_{i,\tilde j+1}^{(k-1)}$, (ii) from line~\ref{alg_line: edge_update_ifnoteqzero} of Algorithm~\ref{alg: core_weight_assignment}, (iii) from Sublemma~\ref{sublem: r_value_induc}, and (iv) from line~\ref{alg_line: M'_update_ifnoteqzero} of Algorithm~\ref{alg: core_weight_assignment}.

\textbf{Case 3:}
Suppose $\xi_{i,\tilde j+1}(k)=h$ and
$M'_{h,(k-1)m+i}(\tilde j)=0$. By the constancy
property established in the proof of
Sublemma~\ref{sublem: r_value_induc}, there is a
constant $c$ such that
\begin{equation*}
\prod_{{l}=0}^{k-1}
\tilde e_{i,\tilde j+1}^{({l})}(\sigma)=c,
\qquad
\text{for all } \sigma\in
\tilde{\Sigma}_{i,\tilde j+1}^{(k-1)}.
\end{equation*}
Consequently, the left-hand side of~\eqref{eq: M'_induc_argument} at stage $\tilde j+1$ can be written as
\begin{align*}
&M'_{h,(k-1)m+i} (\tilde{j}) - \E\left[\prod_{{l}=0}^k \tilde{e}_{i,\tilde{j}+1}^{({l})} (\sigma) \; {\bf 1}_{\sigma \in \tilde{\Sigma}^{(k-1)}_{i,\tilde{j}+1}} \right] \\
= &0 - \E\!\left[
\left(
\prod_{{l}=0}^{k-1}
\tilde e_{i,\tilde j+1}^{({l})}(\sigma)
\right)
\tilde e_{i,\tilde j+1}^{(k)}(\sigma)
{\bf 1}_{\sigma\in
\tilde{\Sigma}_{i,\tilde j+1}^{(k-1)}}
\right] \\
=
&- 2c\,\mu\!\left(
\tilde{\Sigma}_{i,\tilde j+1}^{(k-1,1)}
\right)
+
2c\,\mu\!\left(
\tilde{\Sigma}_{i,\tilde j+1}^{(k-1,2)}
\right) \\
\overset{\mathrm(i)}{=} &\ 0 \overset{\mathrm{(ii)}}{=} M'_{h,(k-1)m+i}(\tilde j+1),
\end{align*}
where (i) and (ii) follow from lines~\ref{alg_line: equal_measure_partition} and~\ref{alg_line: M'_update_ifeqzero} of Algorithm~\ref{alg: core_weight_assignment} respectively. This concludes the induction and proves
Sublemma~\ref{sublem: M'_induc}. 
\end{proof}

As an immediate consequence, we have the following result:
\begin{corollary}\label{cor: M'_is_zero_if_visited}
Fix $h\in\{1,\dotsc,n^*\}$, $k\in\{1,\dotsc,n^*\}$, and $i\in\{1,\dotsc,m\}$. If there exists $j_0\in\{1,\dotsc,s_i\}$ such that $\xi_{ij_0}(k)=h$, i.e., there exists a path of length $k$ from $\beta_i$ to $\alpha_h$, then
\[
M'_{h,(k-1)m+i}(j_0)=M'_{h,(k-1)m+i}(s_i)=0.
\]
Equivalently,
\begin{align}
    \sum_{j'=1}^{s_i}
    \E_{\Sigma_{ij'}}\!\left[
    \left(\prod_{{l}=1}^{k}\tilde e_{ij'}^{({l})}(\sigma)\right)
    {\bf 1}_{\xi_{ij'}(k)=h}
    \right]
    =M_{h,(k-1)m+i}.
    \label{eq: M'_argument_result}
\end{align}
\end{corollary}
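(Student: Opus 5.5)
The plan is to read the corollary off Sublemma~\ref{sublem: M'_induc} together with a monotonicity property of the entry $M'_{h,(k-1)m+i}$ as the loop over $j$ runs. First, I will argue that once some path $p_{ij_0}$ with $\xi_{ij_0}(k)=h$ has been processed, the entry is zero. Indeed, in the loop for $p_{ij_0}$ at depth $k$, the algorithm inspects exactly the entry $M'_{h,(k-1)m+i}$. If it is nonzero, line~\ref{alg_line: M'_update_ifnoteqzero} sets it to $0$; otherwise line~\ref{alg_line: M'_update_ifeqzero} keeps it at $0$. The depth-$k$ step of $p_{ij_0}$ modifies only this entry in column $(k-1)m+i$, and its other depth steps modify only entries in other columns, so no other step of $p_{ij_0}$ touches it. Hence $M'_{h,(k-1)m+i}(j_0)=0$. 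This can also be seen from Cases~2 and~3 in the proof of Sublemma~\ref{sublem: M'_induc}, which both end with the value $0$ at stage $j_0$.

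Second, I will show the entry stays zero for all $j\in\{j_0,\dotsc,s_i\}$ by induction on $j$. For each later path $p_{ij}$ there are three possibilities. If $\xi_{ij}(k)\ne h$ (including $k>\delta_{ij}$), the entry is untouched, as in Case~1. If $\xi_{ij}(k)=h$, the current value is $0$, so the else branch applies and line~\ref{alg_line: M'_update_ifeqzero} keeps it at $0$. In either case the value is preserved. Alternatively, Sublemma~\ref{sublem: M'_induc} shows the same thing: in its Case~3 the subtracted expectation is zero, by the $\pm2$ equal-measure split. Loops for other inputs $i'\ne i$ only touch columns $(k'-1)m+i'$, which differ from column $(k-1)m+i$, so they are irrelevant. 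This gives $M'_{h,(k-1)m+i}(s_i)=0$.

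Finally, I will substitute $j=s_i$ into \eqref{eq: M'_induc_argument} and use $M'_{h,(k-1)m+i}(s_i)=0$, which yields \eqref{eq: M'_argument_result} directly. The statement labels this ``equivalently''; the direction used is from the vanishing of $M'$ to \eqref{eq: M'_argument_result}, and the converse follows from the same identity. The only mildly delicate point is the bookkeeping that no step other than depth $k$ of a path ending its $k$-th node at $\alpha_h$ writes to this entry. I expect no real obstacle beyond stating this bookkeeping cleanly.
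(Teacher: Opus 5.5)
Your proposal is correct and follows the paper's reasoning: the paper treats this corollary as an immediate consequence of Sublemma~\ref{sublem: M'_induc}. The entry becomes zero at stage $j_0$ (Cases 2 and 3), stays zero at every later stage (Cases 1 and 3), and setting $j=s_i$ in \eqref{eq: M'_induc_argument} then gives \eqref{eq: M'_argument_result}. Your explicit check that no other depth step of $p_{ij_0}$ and no loop for another input $i'\neq i$ writes to column $(k-1)m+i$ fills in bookkeeping the paper leaves implicit.
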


Let $\tilde M$ denote the truncated averaged controllability matrix assigned at the end of the algorithm:
\begin{equation}
\tilde M:=\mathcal C^{\mathrm{tr}}(A,B)
=\left[\int_\Sigma B\,\rd\mu,\int_\Sigma AB\,\rd\mu,\dotsc,\int_\Sigma A^{n^*-1}B\,\rd\mu\right].
\label{eq: c_tilde_defintion}
\end{equation}
We now give the final preliminary result.

\begin{sublemma} \label{sublem: c_tilde_prefix_decomposition}
Consider the notation in Algorithm~\ref{alg: core_weight_assignment}. For every $i\in\{1,\dotsc,m\}$, $k\in\{1,\dotsc,n^*\}$, and $h\in\{1,\dotsc,n^*\}$, we have
\begin{align}
\tilde M_{h,(k-1)m+i}
=
\sum_{j'=1}^{s_i}
\E_{\Sigma_{ij'}}\!\left[
\left(\prod_{{l}=1}^{k}\tilde e_{ij'}^{({l})}(\sigma)\right)
{\bf 1}_{\xi_{ij'}(k)=h}
\right],
\label{eq: tildeM-prefix_2}
\end{align}
where $\tilde M$ is defined in~\eqref{eq: c_tilde_defintion}, and the edge weights on the right-hand side are the final values assigned by Algorithm~\ref{alg: core_weight_assignment}.
\end{sublemma}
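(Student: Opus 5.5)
Expanding the matrix power entrywise, we will write
\[
\tilde M_{h,(k-1)m+i}=\int_\Sigma (A^{k-1}B)_{hi}\,\rd\mu=\sum_{\tau}\int_\Sigma w_\tau(\sigma)\,\rd\mu,
\]
where $\tau$ ranges over walks of length $k$ from $\beta_i$ to $\alpha_h$ in $G^*$, and $w_\tau(\sigma)$ is the product of the final edge-weight functions along $\tau$. Since $G^*$ is acyclic, every such walk is a path. The support of $(A,B)$ lies in $D=\bigsqcup_{i',j}\Sigma_{i'j}$, so we will split each integral as $\sum_{i',j}\int_{\Sigma_{i'j}}$. Pieces with $i'\neq i$ vanish: on $\Sigma_{i'j}$, Algorithm~\ref{alg: core_weight_assignment} only assigns nonzero values to edges of $p_{i'j}$, whose first edge leaves $\beta_{i'}$. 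Every $B$-entry in column $i$ is therefore zero on $\Sigma_{i'j}$, and so is $w_\tau$.

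The key step is localization on a single $\Sigma_{ij}$. On $\Sigma_{ij}$ the only edges with nonzero weight are the edges of $p_{ij}$. This holds because every assignment in the loop for $p_{ij}$ is supported on $\tilde\Sigma^{(k-1)}_{ij}\subseteq\Sigma_{ij}$, and the loops for other paths act on disjoint sets. Hence $w_\tau\neq0$ somewhere on $\Sigma_{ij}$ forces every edge of $\tau$ to be an edge of $p_{ij}$. Because $\tau$ starts at $\beta_i$ and $p_{ij}$ is a path, we can argue by induction along $\tau$: the only out-edge of $p_{ij}$ from its $l$-th node goes to its $(l+1)$-th node. So $\tau$ must be the length-$k$ prefix of $p_{ij}$. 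This requires $k\le\delta_{ij}$ and $\xi_{ij}(k)=h$; otherwise $w_\tau$ vanishes on $\Sigma_{ij}$ for every $\tau$.

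We will then check that on $\Sigma_{ij}$ the weight of that prefix equals $\prod_{l=1}^k\tilde e^{(l)}_{ij}(\sigma)$. The subtle point is a repeated edge, that is, two paths $p_{ij},p_{ij'}$ sharing a prefix. The functions $\tilde e^{(l)}_{ij}$ and $\tilde e^{(l)}_{ij'}$ may then be the same edge function, but the algorithm updates it on disjoint regions. So the final function restricted to $\Sigma_{ij}$ coincides with the values set during the $p_{ij}$ loop, which is what $\tilde e^{(l)}_{ij}$ denotes in~\eqref{eq: tildeM-prefix_2}.

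Summing the surviving contributions over $j$ gives
\[
\sum_{j'=1}^{s_i}\E_{\Sigma_{ij'}}\!\left[\left(\prod_{l=1}^k\tilde e^{(l)}_{ij'}(\sigma)\right){\bf 1}_{\xi_{ij'}(k)=h}\right],
\]
using the convention $\xi_{ij}(k)=\varnothing$ for $k>\delta_{ij}$. I expect the main obstacle to be this bookkeeping. One must show rigorously that an edge function shared by several maximal paths, or by paths from different inputs, is assigned in the $p_{ij}$ loop only on subsets of $\Sigma_{ij}$ and never overwritten there. This follows from the ``otherwise keep'' clauses of Algorithm~\ref{alg: core_weight_assignment} together with the disjointness of the sets $\Sigma_{ij}$.
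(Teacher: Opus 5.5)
Your proposal is correct and is essentially the paper's argument. Both proofs expand $(A^{k-1}B)_{hi}$ over length-$k$ paths from $\beta_i$ and use the disjointness of the sets $\Sigma_{ij'}$. The paper groups the sum by path $p$ and its region $\Omega(p)=\bigsqcup_{j'\in J(p)}\Sigma_{ij'}$, while you group it by region $\Sigma_{ij}$ and identify the unique contributing prefix there. Your localization step also gives a more careful justification of the paper's claim that $E_p$ vanishes off $\Omega(p)$. On $\Sigma_{ij}$ only edges of $p_{ij}$ are nonzero, and the edge out of $\beta_i$ together with the path structure of $p_{ij}$ forces $\tau$ to be its prefix. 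This matters because individual edges of $p$ can be updated in loops for other maximal paths, for example paths from other inputs or paths that use the edge at a different depth.
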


\begin{proof}[Proof of Sublemma~\ref{sublem: c_tilde_prefix_decomposition}]
Fix $i\in\{1,\dotsc,m\}$ and $k\in\{1,\dotsc,n^*\}$. Let $\mathcal P_{i,k}$ be the set of all distinct length-$k$ paths in $G^*$ starting at $\beta_i$. For each $p\in\mathcal P_{i,k}$, let $\eta(p)$ denote the index of the last $\alpha$-node visited in $p$, and define
\[
J(p):=\{j'\in\{1,\dotsc,s_i\}:\text{the length-$k$ prefix of }p_{ij'}\text{ is }p\}.
\]
Note that for distinct paths $p,\bar p\in\mathcal P_{i,k}$, the sets $J(p)$ and $J(\bar p)$ are disjoint, since each maximal path has a unique length-$k$ prefix. We also define
\begin{equation}
\Omega(p):=\bigsqcup_{j'\in J(p)}\Sigma_{ij'}.
\label{eq:Omega_def}
\end{equation}
Moreover, for $p\in\mathcal P_{i,k}$, define
\[
E_p(\sigma):=\prod_{{l}=1}^{k}\tilde e_{ij'}^{({l})}(\sigma),
\]
for some $j'\in J(p)$. This is well-defined, since all maximal paths indexed by $J(p)$ share the same first $k$ edges.

By definition~\eqref{eq: c_tilde_defintion},
\[
\tilde M_{h,(k-1)m+i}
=
\int_\Sigma(A^{k-1}B)_{h,i}(\sigma)\,\rd\mu.
\]
Expanding $(A^{k-1}B)_{h,i}$ over all length-$k$ paths from $\beta_i$ to $\alpha_h$, we obtain
\begin{align}
(A^{k-1}B)_{h,i}(\sigma)
=
\sum_{\substack{p\in\mathcal P_{i,k}\\ \eta(p)=h}}
E_p(\sigma).
\label{eq:path_expansion_short}
\end{align}
Hence,
\begin{align}
\tilde M_{h,(k-1)m+i}
=
\sum_{\substack{p\in\mathcal P_{i,k}\\ \eta(p)=h}}
\int_\Sigma E_p(\sigma)\,\rd\mu.
\label{eq:prefix_step1}
\end{align}
Now note that Algorithm~\ref{alg: core_weight_assignment} updates the first $k$ edge-weight functions along $p$ only while processing maximal paths whose length-$k$ prefix is exactly $p$, and only on the corresponding sets $\Sigma_{ij'}$ with $j'\in J(p)$. Since all edge-weight functions are initialized at zero, it follows that
\[
E_p(\sigma)=0
\qquad
\text{for all }\sigma\notin\Omega(p).
\]
Therefore,
\[
E_p(\sigma)=E_p(\sigma){\bf 1}_{\sigma\in\Omega(p)},
\]
and~\eqref{eq:prefix_step1} becomes
\begin{align}
\tilde M_{h,(k-1)m+i}
=
\sum_{\substack{p\in\mathcal P_{i,k}\\ \eta(p)=h}}
\E\!\left[
E_p(\sigma){\bf 1}_{\sigma\in\Omega(p)}
\right].
\label{eq:prefix_step2}
\end{align}
Finally, since the sets $\Sigma_{ij'}$ in~\eqref{eq:Omega_def} are disjoint, it follows that for each $p\in\mathcal P_{i,k}$,
\begin{align}
\E\!\left[
E_p(\sigma){\bf 1}_{\sigma\in\Omega(p)}
\right] =
\sum_{j'\in J(p)}
\E_{\Sigma_{ij'}}[E_p(\sigma)] =
\sum_{j'\in J(p)}
\E_{\Sigma_{ij'}}\!\left[
\prod_{{l}=1}^{k}\tilde e_{ij'}^{({l})}(\sigma)
\right].
\label{eq:prefix_step3}
\end{align}
Substituting~\eqref{eq:prefix_step3} into~\eqref{eq:prefix_step2}, we get
\begin{align*}
\tilde M_{h,(k-1)m+i} =
\sum_{\substack{p\in\mathcal P_{i,k}\\ \eta(p)=h}}
\sum_{j'\in J(p)}
\E_{\Sigma_{ij'}}\!\left[
\prod_{{l}=1}^{k}\tilde e_{ij'}^{({l})}(\sigma)
\right] =
\sum_{j'=1}^{s_i}
\E_{\Sigma_{ij'}}\!\left[
\left(\prod_{{l}=1}^{k}\tilde e_{ij'}^{({l})}(\sigma)\right)
{\bf 1}_{\xi_{ij'}(k)=h}
\right],
\end{align*}
where the last equality follows from the disjointness of the sets $J(p)$ for $p\in\mathcal P_{i,k}$, together with the fact that $\eta(p)=h$ is equivalent to $\xi_{ij'}(k)=h$ for $j'\in J(p)$. This proves Sublemma~\ref{sublem: c_tilde_prefix_decomposition}. 
\end{proof}

\noindent\emph{Conclusion of the proof of Theorem~\ref{thm:controllability_matrix_value_assignment}.}
By Corollary~\ref{cor: M'_is_zero_if_visited} and Sublemma~\ref{sublem: c_tilde_prefix_decomposition}, for any $(h,k,i)$, if there exists a length-$k$ path from $\beta_i$ to $\alpha_h$, then
\[
\tilde M_{h,(k-1)m+i}=M_{h,(k-1)m+i}.
\]
Assume, toward a contradiction, that there is some $(h,k,i)$ with
\[
\tilde M_{h,(k-1)m+i}\neq M_{h,(k-1)m+i}.
\]
Then no length-$k$ path from $\beta_i$ to $\alpha_h$ can exist, as otherwise equality would hold by the previous sentence. But if no such path exists, the sparsity constraint $M\in\mathcal P(\mathcal S_{G^*})$
forces $M_{h,(k-1)m+i}=0$, and since all terms in the corresponding sum in~\eqref{eq: tildeM-prefix_2} vanish, Sublemma~\ref{sublem: c_tilde_prefix_decomposition} implies that $\tilde M_{h,(k-1)m+i}=0$ as well. This contradicts $\tilde M_{h,(k-1)m+i}\neq M_{h,(k-1)m+i}$. Therefore, $\tilde M_{h,(k-1)m+i}=M_{h,(k-1)m+i}$ for all $(h,k,i)$, i.e.,
\[
\tilde M=\mathcal C^{\mathrm{tr}}(A,B)=M,
\]
completing the proof.
\end{proof}
The following remark explains, through an example, why the disjoint-union form in~\eqref{eq: tildeM-prefix_2} is necessary:

\begin{remark}[Why the disjoint-union form is necessary]
The disjoint-sum in~\eqref{eq: tildeM-prefix_2} prevents \emph{shared-prefix double counting}.
Consider Fig.~\ref{fig: example_for_disjoint_union} with two maximal paths
\(
p_{11}: \beta_1 \alpha_1 \alpha_2 \alpha_3
\) and
\(
p_{12}: \beta_1 \alpha_1 \alpha_2 \alpha_4
\).
Let $\Sigma=\Sigma_{11}\cup\Sigma_{12}$ with
$\mu(\Sigma_{11})=\mu(\Sigma_{12})=\tfrac12$, and define the first two edge functions (common to both paths) by
\begin{align*}
&\tilde{e}^{(1)}_{11}=\tilde{e}^{(1)}_{12}=\beta_1 \alpha_1=
\begin{cases}
5,& \sigma\in\Sigma_{11}\\
0,& \sigma\in\Sigma_{12}
\end{cases} \quad \text{and} \\
&\tilde{e}^{(2)}_{11}=\tilde{e}^{(2)}_{12}=\alpha_1 \alpha_2=
\begin{cases}
1,& \sigma\in\Sigma_{11}\\
0,& \sigma\in\Sigma_{12}
\end{cases}.
\end{align*}
For $(i,k,h)=(1,2,2)$ (length-$2$ paths from $\beta_1$ to $\alpha_2$), the correct form in~\eqref{eq: tildeM-prefix_2} gives
\[
\sum_{j'=1}^{2}\E_{\Sigma_{1j'}}\!\left[
\Big(\prod_{{l}=1}^{2}\tilde{e}^{({l})}_{1j'}(\sigma)\Big)\,
\mathbf 1_{\{\xi_{1j'}(2)=2\}}
\right]
=\tfrac12\,(5\cdot 1)+\tfrac12\,(0)=\tfrac52.
\]
If instead we summed expectations over all of $\Sigma$ as in
\[
\sum_{j'=1}^{2}\E\!\left[
\Big(\prod_{{l}=1}^{2}\tilde{e}^{({l})}_{1j'}(\sigma)\Big)\,
\mathbf 1_{\{\xi_{1j'}(2)=2\}}
\right],
\]
the nonzero contribution on $\Sigma_{11}$ would appear in \emph{both} summands (because the length-$2$ prefixes of $p_{11}$ and $p_{12}$ coincide), yielding
\(
\tfrac12(5\cdot 1)+\tfrac12(5\cdot 1)=5
\),
i.e., the shared prefix $\beta_1\!\to\!\alpha_1\!\to\!\alpha_2$ is counted twice. Restricting each term to the disjoint set $\Sigma_{1j'}$ assigned to the corresponding path avoids this duplication. \oprocend
\end{remark}
\begin{figure} 
\begin{center}
\begin{tikzpicture}[scale=1.0,node distance=1.4cm]   
\node[main node] (2)  {$\alpha_2$};  
\node[main node] (1) [left of= 2] {$\alpha_1$};
\node[main node] (b) [left of= 1] {$\beta_1$};
    \node[main node] (3) [above right of=2]  {$\alpha_3$};    
    \node[main node] (4) [below right  of=2]  {$\alpha_4$};
         
    \path[draw,shorten >=2pt,shorten <=2pt]    
    (b) edge[-latex] (1)
    (1) edge[-latex]  (2)
    (2) edge[-latex]  (3)
    (2) edge[-latex]  (4);
    \end{tikzpicture}
\end{center}
\caption{An intuitive example for Sublemma~\ref{sublem: c_tilde_prefix_decomposition}.} \label{fig: example_for_disjoint_union}
\end{figure}

We provide the following definition, which associates each sparsity pattern matrix with a bipartite graph over its rows and columns.

\begin{definition}
Let $\mathcal{S}\in\{0,\ast\}^{m\times n}$ be a sparsity pattern matrix.
Define its \emph{row-column bipartite graph}
\[
\mathscr{B}(\mathcal{S}) = (V_R \cup V_C,\; E(\mathcal{S})),
\]
where $V_R=\left\{v_r^{(1)},\dots,v_r^{(m)}\right\}$ and $V_C=\left\{v_c^{(1)},\dots,v_c^{(n)}\right\}$ denote the nodes corresponding to the rows and columns of $\mathcal{S}$, respectively, and
\[
E(\mathcal{S}) = \{\, (v_r^{(i)},v_c^{(j)})\;:\; \mathcal{S}_{ij}=\ast \,\},
\]
i.e., there exists an edge joining $v_r^{(i)}$ and $v_c^{(j)}$ if and only if the $(i,j)$-th entry of $\mathcal{S} $ is nonzero.
Moreover, a matching $\cM \subseteq E(\mathcal{S})$ is called \emph{row-saturating} if every $v_r^{(i)}\in V_R$ is incident to exactly one edge in $\cM$.
\end{definition}

We now introduce the following intermediate result which provides an equivalent condition on when there exists a value assignment for a matrix with a certain sparsity pattern to make it full rank:

\begin{proposition} \label{prop: sparsity-matching}
    A sparsity pattern matrix $\mathcal{S}\in\{0,\ast\}^{m\times n}$ admits a realization of full row rank if and only if its row-column bipartite graph $\mathscr{B}(\mathcal{S})$ contains a row-saturating matching.
\end{proposition}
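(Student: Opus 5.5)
This is the classical fact that the generic rank of a structured matrix equals the maximum matching size in its bipartite graph. I would prove it directly through the Leibniz expansion of maximal minors, treating the two directions separately.

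\emph{Necessity.} Suppose some $M\in\mathcal P(\mathcal S)$ has full row rank $m$. Then $m\le n$, and there is a set of $m$ columns $C=\{c_1,\dots,c_m\}$ with $\det M_{\{1,\dots,m\},C}\neq 0$. Expand this determinant by the Leibniz formula,
\[
\det M_{\{1,\dots,m\},C}=\sum_{\pi\in S_m}\operatorname{sgn}(\pi)\prod_{i=1}^m M_{i,c_{\pi(i)}}.
\]
Since the sum is nonzero, at least one term is nonzero, so there is a $\pi$ with $M_{i,c_{\pi(i)}}\neq0$ for every $i$. Hence $\mathcal S_{i,c_{\pi(i)}}=\ast$ for all $i$. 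The edges $(v_r^{(i)},v_c^{(c_{\pi(i)})})$ are pairwise disjoint, because $\pi$ is a bijection, so they form a row-saturating matching.

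\emph{Sufficiency.} Let $\cM=\{(v_r^{(i)},v_c^{(c_i)})\}_{i=1}^m$ be a row-saturating matching, and set $C=\{c_1,\dots,c_m\}$. Treat the free entries of $\mathcal S$ as indeterminates and consider the polynomial
\[
p=\det M_{\{1,\dots,m\},C}\in\R[\{x_{ij}:\mathcal S_{ij}=\ast\}].
\]
In its Leibniz expansion, each permutation contributes a distinct squarefree monomial $\prod_i x_{i,c_{\pi(i)}}$, or zero if some factor is a fixed zero. Distinct permutations give distinct monomials, so no cancellation occurs. The matching gives the permutation $i\mapsto c_i$, whose monomial is present with coefficient $\pm1$. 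Thus $p$ is a nonzero polynomial.

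A nonzero real polynomial does not vanish identically on $\R^N$, so some assignment of real values to the $\ast$ entries gives $p\neq0$. That realization has a nonsingular $m\times m$ submatrix and therefore full row rank. Alternatively, one could set the matched entries to $1$ and the others to $0$, which gives a permutation-type submatrix directly. That is even simpler and avoids the polynomial argument entirely, so I would use it.

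The main subtlety is only the bookkeeping in the necessity direction: it relies on the nonzero Leibniz term producing distinct columns, which is what makes the edges a matching.
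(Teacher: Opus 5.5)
Your proof is correct. The paper gives no argument of its own for this proposition; it defers entirely to \cite[Lemma~3]{MAB-XC-DZ:23}. Your proposal therefore differs mainly in being self-contained.

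Both of your directions are the standard proof that the structural rank of a pattern equals the size of a maximum matching in $\mathscr{B}(\mathcal{S})$, specialized to full row rank. For necessity, a nonvanishing $m\times m$ minor must have a nonvanishing Leibniz term, whose entries sit at starred positions in distinct rows and columns. For sufficiency, placing $1$ on the matched entries and $0$ elsewhere gives a permutation matrix on the columns $C$.

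The citation buys brevity and places the fact within the broader structural-rank framework of that reference. Your version costs only a few lines and has a concrete side benefit here: the $0/1$ realization supplies an explicit full-rank target $M\in\mathcal{P}(\mathcal{S}_{G^*})$ for the step in Algorithm~\ref{alg: Whole_G_weight_assignment} where the paper only says ``pick any such $M$.''

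Your polynomial argument proves strictly more: full row rank holds off the zero set of a nonzero polynomial, hence on an open dense subset of $\mathcal{P}(\mathcal{S})$. The paper, however, only ever uses existence. The necessity direction is used in the proof of Theorem~\ref{thm:main}, and the sufficiency direction is used in Algorithm~\ref{alg: Whole_G_weight_assignment}. So your decision to go with the explicit $0/1$ construction is the economical one.
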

The proof of this result can be found in~\cite[Lemma~3]{MAB-XC-DZ:23}.

With Proposition~\ref{prop: sparsity-matching} established, combining it with Theorem~\ref{thm:controllability_matrix_value_assignment} yields the following:

\begin{corollary} \label{cor: full_rank_equiv_matching}
    $\mathcal{S}_{G^*}$ can be made full row rank if and only if $\mathscr{B}(\mathcal{S}_{G^*})$ has a row-saturating matching.
\end{corollary}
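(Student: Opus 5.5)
The corollary asks whether some realization of $\mathcal{S}_{G^*}$ has full row rank. I read this in the only sense that matters for averaged controllability: whether there is a pair $(A,B)\in\V(G^*)$ such that $\mathcal C^{\mathrm{tr}}(A,B)$ has rank $n^*$. The plan is to use Proposition~\ref{prop: sparsity-matching} to turn this into a statement about realizations of the pattern $\mathcal S_{G^*}$, and then use Theorem~\ref{thm:controllability_matrix_value_assignment} to pass between realizations of the pattern and actual truncated averaged controllability matrices. The degenerate case $n^*=0$ is trivial: the empty matrix has full row rank and the empty matching is row-saturating. From now on assume $n^*>0$.

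\emph{Sufficiency.} Suppose $\mathscr{B}(\mathcal{S}_{G^*})$ has a row-saturating matching. By Proposition~\ref{prop: sparsity-matching} there is $M\in\mathcal P(\mathcal S_{G^*})$ of full row rank $n^*$. Apply Theorem~\ref{thm:controllability_matrix_value_assignment} to this $M$. It gives $(A,B)\in\V(G^*)$ with $\mathcal C^{\mathrm{tr}}(A,B)=M$, and this matrix has full row rank.

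\emph{Necessity.} Suppose some $(A,B)\in\V(G^*)$ has $\mathcal C^{\mathrm{tr}}(A,B)$ of full row rank. The key step is to show that $\mathcal C^{\mathrm{tr}}(A,B)\in\mathcal P(\mathcal S_{G^*})$, i.e.\ the matrix respects the sparsity pattern. For this, expand the entry along paths, as in~\eqref{eq:path_expansion_short}:
\[
(A^{k-1}B)_{h,i}(\sigma)=\sum_{p} E_p(\sigma),
\]
where the sum runs over length-$k$ walks from $\beta_i$ to $\alpha_h$ in the digraph of $(A,B)$. Since $(A,B)$ is compliant with $G^*$, every such walk lies in $G^*$. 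Since $G^*$ is acyclic, every such walk is a path.

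Now take an entry with $(\mathcal S_{G^*})_{h,(k-1)m+i}=0$. Then no length-$k$ path from $\beta_i$ to $\alpha_h$ exists in $G^*$, so the sum above is empty and the integrand vanishes identically. Hence the entry is $0$, and $\mathcal C^{\mathrm{tr}}(A,B)$ is a full-row-rank realization of $\mathcal S_{G^*}$. Proposition~\ref{prop: sparsity-matching} then yields a row-saturating matching.

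I expect no real obstacle. The only care needed is to note that the statement concerns realizations attainable by compliant pairs. This is why Theorem~\ref{thm:controllability_matrix_value_assignment} is essential: it says the set of attainable matrices $\{\mathcal C^{\mathrm{tr}}(A,B)\}$ is exactly $\mathcal P(\mathcal S_{G^*})$. The sparsity-containment check above gives one inclusion, and the theorem gives the other.
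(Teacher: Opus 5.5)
Your proof is correct and follows the paper's own argument, which obtains the corollary by combining Proposition~\ref{prop: sparsity-matching} with Theorem~\ref{thm:controllability_matrix_value_assignment}, exactly as you do for sufficiency. Your necessity direction adds one explicit step: walks in the acyclic core are paths, so entries with no length-$k$ path vanish identically, giving $\mathcal C^{\mathrm{tr}}(A,B)\in\mathcal P(\mathcal S_{G^*})$ for every $(A,B)\in\V(G^*)$; the paper leaves this inclusion implicit when it defines $\mathcal S_{G^*}$ as the sparsity pattern of the truncated matrix.
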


We now state the main result:
\begin{theorem} \label{thm:main}
    Let $G \in \G_{n,m}$ and $G^*$ be its core. Then $G$ is structurally averaged controllable if and only if $G$ is accessible and $\mathscr{B}(\mathcal{S}_{G^*})$ has a row-saturating matching.
\end{theorem}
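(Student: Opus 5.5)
Necessity and sufficiency will be handled separately, working with the block structure induced by ordering the $\alpha$-nodes so that the core nodes $V_\alpha^*$ come first and $V_{\mathrm{cyc}}^+\cap V_\alpha$ come last. Since no edge leaves $V_{\mathrm{cyc}}^+$ towards $V^*$ (successors of cycles are closed under successors), every $(A,B)\in\V(G)$ has the block lower-triangular form
\[
A=\begin{bmatrix} A_{11} & 0\\ A_{21} & A_{22}\end{bmatrix},\qquad B=\begin{bmatrix} B_1\\ B_2\end{bmatrix},
\]
with $(A_{11},B_1)\in\V(G^*)$. Consequently the top $n^*$ rows of $\mathcal C(A,B)$ are $[\int B_1,\int A_{11}B_1,\dots]$. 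Because $G^*$ is acyclic, $A_{11}$ is pointwise nilpotent of index at most $n^*$, so these rows reduce to $\mathcal C^{\mathrm{tr}}(A_{11},B_1)$ padded with zeros.

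\emph{Necessity.} If $G$ is not accessible, some $\alpha_j$ is unreachable from every $\beta$-node. Then the set of unreachable nodes is closed under predecessors, so $(A^kB)_{j,\cdot}\equiv 0$ for all $k$, and row $j$ of $\mathcal C(A,B)$ vanishes. If $(A,B)$ is averaged controllable, the top $n^*$ rows of $\mathcal C(A,B)$ must have rank $n^*$. Hence $\mathcal C^{\mathrm{tr}}(A_{11},B_1)\in\mathcal P(\mathcal S_{G^*})$ has full row rank, and Proposition~\ref{prop: sparsity-matching} (equivalently Corollary~\ref{cor: full_rank_equiv_matching}) yields a row-saturating matching.

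\emph{Sufficiency.} By Corollary~\ref{cor: full_rank_equiv_matching}, choose $M\in\mathcal P(\mathcal S_{G^*})$ of full row rank. I will run Algorithm~\ref{alg: core_weight_assignment} on an active set $D\subseteq\mathcal O$ that is only part of $\mathcal O$, say half of it, to get $(A_{11},B_1)$ with $\mathcal C^{\mathrm{tr}}=M$ and vanishing off $D$. The remaining weights are then placed on a disjoint set $D'\subseteq\mathcal O\setminus D$, which keeps the two constructions from interacting. On $D'$ I will make the non-core part controllable in the averaged sense using accessibility and the cycles. First reduce $G$ to a simpler spanning subgraph: for each $\alpha$-node in $V_{\mathrm{cyc}}^+$, keep a spanning tree of paths from $\beta$-nodes, and keep one cycle in each nontrivial strong component. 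Then argue that the rows indexed by $V_{\mathrm{cyc}}^+$ can be made independent of each other modulo the core rows. The key tool here is that walks around a cycle give arbitrarily long walks. Powers $A^k$ for $k\ge n^*$ contribute only to non-core rows, because the core part is nilpotent and hence $\int_{D}A^kB=0$ in core rows. So it suffices to show that the columns $\int A^kB$, $k\ge n^*$, span the non-core coordinates. Choosing the weights on $D'$ to be constants (or parameter-dependent, so as to produce distinct exponential-like sequences) turns this into a finite-dimensional Kalman-type condition on the non-core subsystem restricted to $D'$. Genericity of weights should make it hold, since each non-core node is reached from a cycle, which supplies walks of all sufficiently large lengths. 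Finally, the block lower-triangular structure together with full row rank of $M$ gives rank $n$ overall.

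The main obstacle is the last step: proving that the tail columns $\int A^kB$, $k\ge n^*$, span the non-core directions. This is delicate because nodes reachable from cycles may have their only $\beta$-access through the core, where the weights are fixed by the algorithm and the averaging may cancel. My first attempt will be to let the non-core weights depend on $\sigma$ within $D'$ through distinct eigenvalues (e.g.\ cycle weights $\lambda_c(\sigma)$ taking several distinct values on sub-partitions of $D'$). I will also route the $D'$ input flow through core edges that carry separate, unconstrained values on $D'$, which does not disturb $\mathcal C^{\mathrm{tr}}$ on the core provided I then rerun the core assignment with the $D'$ contribution subtracted from $M$. The resulting Vandermonde-type independence is what should yield full rank of the non-core block.
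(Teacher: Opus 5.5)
Your necessity argument is correct and is essentially the paper's: rows of unreachable nodes vanish, and the block lower-triangular form plus nilpotency of $A_{11}$ reduce the core rows to $\mathcal C^{\mathrm{tr}}(A_{11},B_1)\in\mathcal P(\mathcal S_{G^*})$, to which Proposition~\ref{prop: sparsity-matching} applies. The frame of your sufficiency argument is also sound. The tail columns $\int A^kB$, $k\ge n^*$, vanish on core rows, so it suffices to have $M$ of full rank on the core rows and the tail columns of full rank on the rows of $V_{\mathrm{cyc}}^+$. Putting the non-core weights on a set $D'$ disjoint from $D$, and then running Algorithm~\ref{alg: core_weight_assignment} on $D$ with target $M-M_{D'}\in\mathcal P(\mathcal S_{G^*})$, is a valid decoupling. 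It also avoids the paper's normalized $E_2$ weights.

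The gap is the step that carries all the difficulty of sufficiency. You never show that the tail columns can be made to span the coordinates of $V_{\mathrm{cyc}}^+$, and ``genericity'' cannot supply this. With constant weights on $D'$ the claim is false. Take $\beta_1\to\alpha_1$, a self-loop at $\alpha_1$, and edges $\alpha_1\to\alpha_2$, $\alpha_1\to\alpha_3$; this graph is accessible with $n^*=0$. The rows of $\alpha_2,\alpha_3$ in $\int_{D'}A^kB$ are $\mu(D')\,cba^{k-1}$ and $\mu(D')\,dba^{k-1}$, which are proportional for every $k$. In general, constant weights reduce averaged controllability to classical controllability, which fails at dilations.

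So $\sigma$-dependence is essential, and a genericity argument then still needs one explicit witness, which is exactly the missing construction. With several inputs you must also rule out cancellation: a node of $G_p^+$ can be fed by several $\beta$-nodes along several core paths, and different cycles fed by the same $\beta$-node share columns. The paper handles this in four steps:
\begin{enumerate}
\item It passes to a reduced spanning subgraph without touching $E^*$: one cycle per nontrivial component, entered by a single edge from the core, with $S_{\mathrm{cyc}}^+$ a forest. Your ``spanning tree of paths from $\beta$-nodes'' likewise must route every node of $V_{\mathrm{cyc}}^+$ through a kept cycle, or its tail row is identically zero.
\item It supports each entry edge on the active set of one fixed shortest path $\tau^*$. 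Every non-core row is then fed only by $\beta_{i_{p^*}}$, through the one-parameter family~\eqref{eq: unique_walks_to_cyc+}.
\item It uses weights that are powers of $f$ with an irrational $\gamma$, which makes the exponents $\nu(\tau_h)$ pairwise distinct.
\item It applies the polynomial-density argument of Lemma~\ref{lem: submatrix_full_rank}, assembled triangularly over the sets $V'_\ell(i)$.
\end{enumerate}
Your piecewise-constant Vandermonde idea could plausibly replace steps 3 and 4. For example, take one sub-piece of $D'$ per non-core node, with only that node's route active and distinct cycle weights on each piece. On each residue class mod $L$ this gives a triangular coefficient matrix (in tree-ancestor order) times a Vandermonde matrix. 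But that still requires a consistent single-route structure like steps 1 and 2, plus an actual independence proof, and neither is in your proposal.
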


Note that when $n^*=0$, the bipartite graph $\mathscr{B}(\mathcal{S}_{G^*})$ has no row nodes, so the row-saturating matching condition is trivially satisfied. In this case, Theorem~\ref{thm:main} says that $G$ is structurally averaged controllable if and only if $G$ is accessible.

Before proving the theorem, we first point out that the matching condition recovers the single-input condition from~\cite{XC-BG:25}.

\begin{corollary}\label{cor:single_input_reduction}
Suppose $m=1$. Then $\mathscr B(\mathcal S_{G^*})$ has a row-saturating matching if and only if $G^*$ contains a directed spanning path.
\end{corollary}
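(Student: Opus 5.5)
With $m=1$, the matrix $\mathcal S_{G^*}$ is $n^*\times n^*$: its rows are indexed by the core $\alpha$-nodes $\alpha_h$ and its columns by path lengths $k\in\{1,\dotsc,n^*\}$. The entry $(h,k)$ is $\ast$ exactly when some length-$k$ path from $\beta_1$ to $\alpha_h$ exists in $G^*$. A row-saturating matching is therefore a perfect matching, i.e., a bijection $\pi:V_\alpha^*\to\{1,\dotsc,n^*\}$ such that each $\alpha_h$ is reached from $\beta_1$ by a path of length exactly $\pi(h)$. I will prove both directions directly from this reformulation. The case $n^*=0$ is trivial under the stated conventions (an empty path counts as spanning).

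\emph{Spanning path $\Rightarrow$ matching.} In the single-input core, the only node with no in-neighbors in $G^*$ that can start a spanning path is $\beta_1$, because $\beta_1$ has no in-neighbors. So a directed spanning path has the form $\beta_1\alpha_{\xi(1)}\cdots\alpha_{\xi(n^*)}$. Its length-$k$ prefix is a path of length $k$ from $\beta_1$ to $\alpha_{\xi(k)}$. Matching row $\xi(k)$ to column $k$ is then a perfect matching.

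\emph{Matching $\Rightarrow$ spanning path.} Let $\pi$ be the bijection above and set $\alpha_{\xi(k)}:=\pi^{-1}(k)$. I claim $\beta_1\alpha_{\xi(1)}\cdots\alpha_{\xi(n^*)}$ is a path, i.e., every edge $\alpha_{\xi(k)}\alpha_{\xi(k+1)}$ lies in $G^*$. The key tool is a depth argument in the acyclic graph $G^*$. Let $L(\alpha)$ be the length of a longest path from $\beta_1$ to $\alpha$. Since $G^*$ is acyclic and contains $n^*$ $\alpha$-nodes, any path from $\beta_1$ has length at most $n^*$. So $L(\alpha_{\xi(n^*)})\ge n^*$ forces a path of length exactly $n^*$ from $\beta_1$ to $\alpha_{\xi(n^*)}$, and this path must visit all $n^*$ $\alpha$-nodes. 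That path is the desired spanning path, so it suffices to exhibit one node whose path length equals $n^*$. The bijection provides exactly this: the node $\pi^{-1}(n^*)$ has a path of length $n^*$ from $\beta_1$, which visits $n^*$ distinct $\alpha$-nodes, hence all of them.

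So the reverse direction needs only the column $k=n^*$ of the matching. The main obstacle, and it is mild, is to check that no length-$n^*$ path can repeat nodes or leave $G^*$. Both are immediate: $\mathcal S_{G^*}$ is defined through paths in $G^*$, and paths have no repeated nodes by definition.
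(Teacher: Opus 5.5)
Your proof is correct and essentially the paper's: prefixes of the spanning path give the matched entries $(\xi(k),k)$, and for the converse only the starred entry the matching places in column $n^*$ is used, since a length-$n^*$ path from $\beta_1$ must visit all $n^*$ core $\alpha$-nodes. One blemish is that you announce the stronger claim that $\beta_1\alpha_{\xi(1)}\cdots\alpha_{\xi(n^*)}$ with $\alpha_{\xi(k)}=\pi^{-1}(k)$ is itself a path and never prove it; it is true, because acyclicity forces the perfect matching to be unique, but it is not needed, so you can drop that sentence.
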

\begin{proof}
Recall that $n^*=|V_\alpha^*|$. When $m=1$, the matrix $\mathcal S_{G^*}$ has $n^*$ rows and $n^*$ columns, and its $(h,k)$-th entry is $\ast$ exactly when there is a path of length $k$ from $\beta_1$ to $\alpha_h$ in $G^*$.
First suppose that $G^*$ contains a directed spanning path 
$\beta_1\alpha_{h_1}\alpha_{h_2}\cdots \alpha_{h_{n^*}}$
visiting all $\alpha$-nodes in $G^*$. Then, for each $k\in\{1,\dotsc,n^*\}$, there is a path of length $k$ from $\beta_1$ to $\alpha_{h_k}$. Hence the entries
\[
(h_1,1),\quad
(h_2,2),\quad
\dotsc,\quad
(h_{n^*},n^*)
\]
are starred entries of $\mathcal S_{G^*}$, and they lie in distinct rows and columns. Thus they define a row-saturating matching in $\mathscr B(\mathcal S_{G^*})$.

Conversely, suppose $\mathscr B(\mathcal S_{G^*})$ has a row-saturating matching. Since $\mathcal S_{G^*}$ is an $n^*\times n^*$ matrix in the single-input case, the matching selects one starred entry in every row and every column. In particular, it selects some starred entry in column $n^*$. Therefore, there exists an $\alpha$-node $\alpha_h$ in $G^*$ that is reachable from $\beta_1$ by a path of length $n^*$. Since $G^*$ is acyclic and contains exactly $n^*$ $\alpha$-nodes, such a path must visit all $\alpha$-nodes in $G^*$. This gives the desired directed spanning path.
\end{proof}

We next provide an illustrative example showing that a na\"ive generalization of the spanning path condition in the single-input case fails for its multi-input counterpart.
A natural extension of the single-input characterization would be to ask for $m$ directed paths, one starting from each $\beta$-node, such that the union of the nodes visited in these paths covers all $\alpha$-nodes. 
While this condition is clearly sufficient, it is not necessary, as shown in the following example.

\begin{figure}[t]
\begin{center}
\begin{tikzpicture}[scale=1.0,node distance=1.6cm]
\node[main node] (b1) {$\beta_1$};
\node[main node] (b2) [below of=b1] {$\beta_2$};

\node[main node] (a1) [right of=b1, xshift=1.0cm] {$\alpha_1$};
\node[main node] (a2) [below of=a1] {$\alpha_2$};
\node[main node] (a3) [above right of=a2, xshift=0.8cm] {$\alpha_3$};
\node[main node] (a4) [below right of=a2, xshift=0.8cm] {$\alpha_4$};

\path[draw,shorten >=2pt,shorten <=2pt]
(b1) edge[-latex] (a1)
(b1) edge[-latex] (a2)
(b2) edge[-latex] (a2)
(a2) edge[-latex] (a3)
(a2) edge[-latex] (a4);
\end{tikzpicture}
\end{center}
\caption{An illustrative acyclic multi-input graph.}
\label{fig:multi_input_counterexample}
\end{figure}

\begin{example}\label{ex:spanning_path_fail}
Consider the acyclic graph in Fig.~\ref{fig:multi_input_counterexample}. Since this graph is acyclic, its core is the graph itself.
Any path starting from $\beta_2$ can pass through $\alpha_2$ and then reach at most one of $\alpha_3$ and $\alpha_4$. Similarly, any path starting from $\beta_1$ either reaches only $\alpha_1$, or passes through $\alpha_2$ and then reaches at most one of $\alpha_3$ and $\alpha_4$. Hence, no choice of one path from $\beta_1$ and one path from $\beta_2$ can cover all four $\alpha$-nodes.

However, the graph satisfies the condition in Theorem~\ref{thm:main}. Since $G^*=G$ in this example, the core sparsity pattern is
\[
\mathcal{S}_{G^*} =
\left[
\begin{array}{cc|cc|cc|cc}
* & 0 & 0 & 0 & 0 & 0 & 0 & 0\\
* & * & 0 & 0 & 0 & 0 & 0 & 0\\
0 & 0 & * & * & 0 & 0 & 0 & 0\\
0 & 0 & * & * & 0 & 0 & 0 & 0
\end{array}
\right],
\]
where the block columns correspond to $B,AB,A^2B,A^3B$. The last two block columns are structurally zero because the graph has no directed paths of length greater than two. Now note that in the bipartite graph $\mathscr{B}(\mathcal{S}_{G^*})$, the starred entries
\[
(1,1),\qquad (2,2),\qquad (3,3),\qquad (4,4)
\]
lie in distinct rows and distinct columns. Thus, they form a row-saturating matching. By Theorem~\ref{thm:main}, the graph is structurally averaged controllable.
\oprocend
\end{example}

We now turn to the proof of Theorem~\ref{thm:main}.

\begin{proof}
    We first prove the necessity part, and leave the sufficiency for later. First, we show that if $G$ is not accessible, then it is not structurally averaged controllable. Since $G$ is not accessible, there exists at least one node $\alpha_i$ which is not a successor of any $\beta$-node. Let us label all such $\alpha$-nodes as $\alpha_1, \dotsc, \alpha_k$ for some $1 \le k \le n$. Then for any pair $(A,B) \in \mathbb{V} (G)$, we have
    \[
    A = \begin{bmatrix}
        A_{11} & 0 \\ A_{12} & A_{22}
    \end{bmatrix} \quad \text{and} \quad B=\begin{bmatrix}
        0 \\ B_2
    \end{bmatrix},
    \]
    where $A_{11}$ is $k$-by-$k$, $A_{22}$ is $(n-k)$-by-$(n-k)$, and $B_2$ is $(n-k)$-by-$m$. If we now partition $x = \begin{bsmallmatrix}
        x_1 \\ x_2
    \end{bsmallmatrix}$, the dynamics for $x_1$ will be as follows:
    \[
    \dot{x_1} (t,\sigma) = A_{11} (\sigma) x_1 (t,\sigma),
    \]
    which are not affected by the control input $u(t)$, and hence, the $x_1$ portion will not be averaged controllable, which then results in the whole system not being averaged controllable.

    We now turn our attention to the necessity of $\mathscr{B}(\mathcal{S}_{G^*})$ having a row-saturating matching. Recall that $n^* = |V_\alpha^*|$ is the number of $\alpha$-nodes in $G^*$. We label the nodes in $G$ so that $\alpha_1, \alpha_2, \dotsc, \alpha_{n^*}$ are in $G^*$, and the rest in $G_{\text{cyc}}^+$. It should be clear that any node in $G$ is either in $G^*$ or is a successor of some node in $G^*$. Therefore, by partitioning $x = \begin{bsmallmatrix}
        x_1 \\ x_2
    \end{bsmallmatrix}$, with $x_1$ denoting the $n^*$ $\alpha$-nodes in $G^*$, we can write the dynamics in the lower block triangular form as follows:
    \[
    \begin{bmatrix}
        \dot{x_1} (t,\sigma) \\ \dot{x_2} (t,\sigma)
    \end{bmatrix} = \begin{bmatrix}
        A_{11} (\sigma) & 0 \\ A_{12} (\sigma) & A_{22} (\sigma)
    \end{bmatrix} \begin{bmatrix}
        x_1 (t,\sigma) \\ x_2 (t,\sigma)
    \end{bmatrix} + \begin{bmatrix}
        B_1 (\sigma) \\ B_2 (\sigma)
    \end{bmatrix} u(t).
    \]
    Note that, as expected, the dynamics of $x_1 (t,\sigma)$ does not depend on $x_2 (t,\sigma)$. Thus, if $(A,B)$ is averaged controllable, so is $(A_{11},B_1)$. Consequently, if $G$ is averaged controllable, so is $G^*$. As a result, the sparsity pattern of the truncated averaged controllability matrix associated with $G^*$, namely $\mathcal{S}_{G^*}$, should admit a full row rank realization. Finally, by Corollary~\ref{cor: full_rank_equiv_matching}, we have that $\mathscr{B} (\mathcal{S}_{G^*})$ has a row-saturating matching, concluding the necessity part of the proof.
\end{proof}

\subsection{Reduced Graphs} \label{subsec:reduced_graphs}
We now turn to the sufficiency proof, though the final argument will be completed in Subsection~\ref{subsec: proof_of_suff}. The first step is to replace $G$, by removing edges if necessary, with a subgraph that has a simpler structure while still satisfying the two graph-theoretic conditions in Theorem~\ref{thm:main}.
Following the definitions in~\cite{XC-BG:25}, we introduce a class of digraphs called \textit{reduced graphs} with the intention to show that any digraph satisfying the conditions of Theorem~\ref{thm:main} can be reduced to such a reduced graph via edge removal. Note that this definition will be different from the one in~\cite{XC-BG:25} due to the differences induced by the presence of multiple inputs.

We start by recalling from~\cite{XC-BG:25} the notion of skeleton graph $S$ for any arbitrary digraph $G$ as follows:
\begin{definition}\label{def: skeleton}
    Let $G = (V,E)$, and $G_i = (V_i, E_i)$, for $i = 1,\dotsc,N$ be the strong components of $G$. The skeleton graph of $G$, denoted by $S = (W, F)$, is a directed graph on $N$ nodes $w_1, \dotsc , w_N$ whose edge set $F$ is determined by the following rule: There exists a directed edge $w_i w_j$ if there exists a directed edge in $G$ from a node of $G_i$ to a node of $G_j$.
\end{definition}

Next, we reintroduce the map $\pi : G \to S$, defined by sending a node $v_{i'} \in V_i$ to the node $w_i$, and we reiterate that, as can be observed from Definition~\ref{def: skeleton}, the map $\pi$ is a graph homomorphism. Moreover, let $G^*$ be the core, and $S^* := \pi (G^*)$. As a result of Definitions~\ref{def:core} and \ref{def: skeleton}, it should be clear that $S^*$ is the core of $S$ and that $G^*$ and $S^*$ are isomorphic under $\pi |_{G^*}$. In addition, we define $S_{\mathrm{cyc}} := \pi(G_{\mathrm{cyc}})$, which is the subgraph containing the strongly connected components of $S$; since $S$ is a skeleton graph, these correspond precisely to the nodes with self-loops (see~Definition~\ref{def: skeleton}). Similarly, define $S_{\mathrm{cyc}}^{+} := \pi(G_{\mathrm{cyc}}^{+})$ as the subgraph of $S$ induced by all successors of nodes in $S_{\mathrm{cyc}}$, including the nodes in $S_{\mathrm{cyc}}$ themselves. Having established these definitions, we now introduce:
\begin{definition} \label{def: reduced}
    A digraph $G \in \G_{n,m}$ is \textbf{reduced} if it satisfies the following:
    \begin{enumerate}
        \item Let $S$ be the skeleton of $G$, and $S^*$ be the core of $S$. Then $\mathscr{B}(\mathcal{S}_{S^*})$ has a row-saturating matching.
        \item For each edge $w_i w_j$ of $S$, with $w_i \neq w_j$, the set $\pi^{-1}(w_i w_j)$ is a singleton.
        \item If $w_p \in S$ has a self-loop, it is only an out-neighbor of exactly one node in $S^*$. Moreover, for any such $w_p$, $G_p = \pi^{-1} (w_p)$ is a cycle.
        \item If $w \in S_{\text{cyc}}^+ \setminus S_{\text{cyc}}$, $w$ is only an out-neighbor of exactly one node in $S_{\text{cyc}}^+$.
    \end{enumerate}
\end{definition}

In particular, due to the definition of skeleton in Definition~\ref{def: skeleton} and item~4 in Definition~\ref{def: reduced}, we have the following:

\begin{lemma} \label{lem: directed_forest}
    Let $G$ be a reduced graph and $S$ be its skeleton graph. Then $S_{\text{cyc}}^+$, ignoring self-loops, is a directed forest. In particular, each connected component of $S_{\mathrm{cyc}}^{+}$ is a directed tree rooted at a unique node $w_p$ with a self-loop in $S$ component.
\end{lemma}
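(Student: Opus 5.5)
We need to show that $S_{\mathrm{cyc}}^+$, with self-loops ignored, is a directed forest, and that each connected component is a tree rooted at a unique self-loop node. The plan is to analyze in-degrees inside $S_{\mathrm{cyc}}^+$ and then trace ancestry backwards.

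\emph{Step 1: the skeleton is acyclic apart from self-loops.} If $S$ contained a cycle through two or more distinct nodes $w_{i_1},\dots,w_{i_r}$, then the preimages under $\pi$ would give a strongly connected subgraph of $G$ meeting several strong components. This contradicts item~2 of Definition~\ref{def:scd}. So every cycle of $S$ is a self-loop, and the nodes of $S_{\mathrm{cyc}}$ are exactly the nodes of $S$ with self-loops.

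\emph{Step 2: in-degrees within $S_{\mathrm{cyc}}^+$.}
\begin{itemize}
\item For a node $w\in S_{\mathrm{cyc}}^+\setminus S_{\mathrm{cyc}}$, item~4 of Definition~\ref{def: reduced} says $w$ has exactly one in-neighbor in $S_{\mathrm{cyc}}^+$. This in-neighbor is distinct from $w$, since $w$ has no self-loop.
\item For a node $w_p\in S_{\mathrm{cyc}}$, item~3 says that, apart from its self-loop, its only in-neighbor is a single node of $S^*$. Since $S^*$ and $S_{\mathrm{cyc}}^+$ are disjoint by Definition~\ref{def:core}, $w_p$ has no in-neighbor in $S_{\mathrm{cyc}}^+$ other than itself.
\end{itemize}
Here I read ``only an out-neighbor of exactly one node'' as a statement about all in-neighbors other than the self-loop.

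\emph{Step 3: ancestor chains.} Ignore self-loops. By Step~2, every node of $S_{\mathrm{cyc}}^+$ has in-degree at most one inside $S_{\mathrm{cyc}}^+$, and in-degree zero exactly at the self-loop nodes. Start from any $w\in S_{\mathrm{cyc}}^+$ and repeatedly pass to its unique in-neighbor. By Step~1 this backward walk cannot revisit a node, so it terminates at a node of in-degree zero, which is some $w_p\in S_{\mathrm{cyc}}$. Because in-degrees are at most one, this ancestor chain is unique.

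\emph{Step 4: conclusion.} A finite acyclic digraph in which every node has in-degree at most one is a directed forest, with roots the in-degree-zero nodes. Each root here is a self-loop node $w_p$. Each weakly connected component contains exactly one such root: every node reaches back to a unique root, and two roots in the same component would force some node to have in-degree at least two along the undirected path joining them.

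\emph{Main obstacle.} The delicate point is Step~2 for the nodes $w_p$. We need item~3 to exclude any in-neighbor of $w_p$ lying in $S_{\mathrm{cyc}}^+$. I would argue this by combining the interpretation of item~3 above with the disjointness of $S^*$ and $S_{\mathrm{cyc}}^+$. Everything else follows directly from the definitions.
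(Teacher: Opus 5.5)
Your proof is correct and follows essentially the same argument as the paper. Both rest on two facts: items~3 and~4 give in-degree at most one inside $S_{\mathrm{cyc}}^+$ (with self-loops ignored), and maximality of strong components rules out directed cycles through two or more skeleton nodes. Your reading of item~3, that $w_p$ has no in-neighbor in $S_{\mathrm{cyc}}^+$ other than itself, is exactly what the paper's proof uses implicitly, and your in-degree argument for a unique root per component makes explicit what the paper's last paragraph only asserts.
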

\begin{proof}
We first prove $S_{\text{cyc}}^+$ is a directed forest by showing that its underlying undirected graph contains no cycles other than self-loops. 

Suppose, towards a contradiction, that the underlying undirected graph of $S_{\text{cyc}}^+$ contains a cycle involving at least two distinct nodes. Consider such a cycle. By items~3 and~4 of Definition~\ref{def: reduced}, each node in $S_{\text{cyc}}^+$ has at most one in-neighbor within $S_{\text{cyc}}^+$. Consequently, the only possible way to orient the edges of this undirected cycle is to form a directed cycle in $S_{\text{cyc}}^+$.

This leads to a contradiction. Indeed, since $S_{\text{cyc}}^+$ is a subgraph of the skeleton graph $S$, the existence of a directed cycle involving multiple nodes would imply that the corresponding strong components of $G$ are mutually reachable. By Definition~\ref{def: skeleton}, this would force these nodes to collapse into a single node in $S$, contradicting the assumption that the cycle involves distinct nodes.

Therefore, no such undirected cycle can exist, and the underlying undirected graph of $S_{\text{cyc}}^+$ is acyclic. Hence, $S_{\text{cyc}}^+$ is a directed forest when self-loops are ignored.

The final claim follows directly from the definition of $S_{\text{cyc}}^+$ together with items~3 and~4 of Definition~\ref{def: reduced}, which ensure that each node in $S_{\text{cyc}}^+ \setminus S_{\text{cyc}}$ has exactly one in-neighbor within $S_{\text{cyc}}^+$, while nodes with self-loops can only originate such components. Consequently, each connected component of $S_{\text{cyc}}^+$ contains a unique node with a self-loop, and this node serves as the root of the corresponding directed tree.
\end{proof}

In the single-input setting where $m=1$, Definition~\ref{def: reduced} coincides with the notion of a reduced graph introduced in~\cite{XC-BG:25}. To clarify, we have the following remark:
\begin{remark}
Our definition of a reduced digraph generalizes the single-input ($m=1$) notion of reduced graphs introduced in~\cite{XC-BG:25}. 
In the single-input case, the requirement that $\mathscr{B}(\mathcal{S}_{S^*})$ admit a row-saturating matching is equivalent to the core $S^*$ containing a directed spanning path, which is exactly the condition imposed in~\cite{XC-BG:25}. 
Moreover, when $m=1$, this spanning-path property, together with Lemma~\ref{lem: directed_forest}, implies that the skeleton graph $S$ is a directed tree, matching the structural characterization of reduced graphs in the single-input setting~\cite{XC-BG:25}.
\end{remark}

Recall that $G_{\text{cyc}}$ is the union of all nontrivial strong components of a graph $G$. Therefore, since $G$ is reduced, $G_{\text{cyc}}$ will just be a union of disjoint cycles. We denote these cycles by $G_p = (V_p, E_p)$, for $p \in \{1,2,\dotsc,q\}$, and let $\ell_p := \ell (G_p)$ for simplicity. We also let $V_p^+$ be the successors of $V_p$ in $G$, and $G_p^+$ the subgraph induced by $V_p^+$. Now by item~2 of Definition~\ref{def: reduced}, there is a unique node $\alpha_{p_0}$ in $G_p$ that is an out-neighbor of $G^*$:
\begin{equation}
    \{\alpha_{p_0}\} = N_{\text{out}} (G^*) \cap G_p. \label{eq: alpha_p_0_def}
\end{equation}
Recall from Lemma~\ref{lem: directed_forest} that every node in each connected component of $S_{\text{cyc}}^+$ is a successor of some node in $S$ with a self-loop. Now that we have established the notation, it should be clear that each such connected component of $S_{\text{cyc}}^+$ is just some $S_p^+ := \pi (G_p^+)$, and hence, $\alpha_{p_0}$ is a predecessor of all nodes in $G_p^+$. 
Furthermore, by item~3 of Definition~\ref{def: reduced}, we know that there is a unique node $v_{p^*}$ in $G^*$ that is the in-neighbor of $\alpha_{p_0}$, and consequently, the predecessor of all nodes in $G_p^+$. Note that the reason we denote this by $v_{p^*}$ is that this node can either be an $\alpha$-node or a $\beta$-node.
We now have the following on how reduced graphs satisfy conditions in Theorem~\ref{thm:main}:
\begin{lemma} \label{lem: reduced_satisfies_thm}
    Let $G$ be a reduced graph. Then $G$ satisfies the conditions in Theorem~\ref{thm:main}, i.e., $G$ is accessible and $\mathscr{B}(\mathcal{S}_{G^*})$ has a row-saturating matching.
\end{lemma}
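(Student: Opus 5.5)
Two things have to be checked for a reduced graph $G$: that $\mathscr{B}(\mathcal{S}_{G^*})$ has a row-saturating matching, and that $G$ is accessible. The matching condition comes from item~1 of Definition~\ref{def: reduced} via the isomorphism $\pi|_{G^*}\colon G^*\to S^*$. Accessibility comes from the matching condition on the core together with items~3 and~4 and Lemma~\ref{lem: directed_forest} on the cyclic part.

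\emph{Matching.} We noted before Definition~\ref{def: reduced} that $S^*$ is the core of $S$ and that $\pi|_{G^*}$ is a graph isomorphism onto $S^*$. Because $G^*$ is acyclic, every strong component meeting $G^*$ is a single node. Hence $\pi$ maps $\alpha$-nodes to $\alpha$-nodes and $\beta$-nodes to $\beta$-nodes of $S^*$, and it maps paths of length $k$ from $\beta_i$ to $\alpha_h$ bijectively onto paths of length $k$ from $\pi(\beta_i)$ to $\pi(\alpha_h)$. So after relabeling rows, $\mathcal{S}_{G^*}$ and $\mathcal{S}_{S^*}$ are the same pattern. Their bipartite graphs $\mathscr{B}(\cdot)$ are therefore isomorphic, and the row-saturating matching given by item~1 transfers to $\mathscr{B}(\mathcal{S}_{G^*})$.

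\emph{Accessibility in the core.} Take $\alpha_h\in V^*_\alpha$. Its row in $\mathscr{B}(\mathcal{S}_{G^*})$ is matched, so some entry $(h,(k-1)m+i)$ equals $\ast$. By definition of $\mathcal{S}_{G^*}$, this gives a path of length $k$ from $\beta_i$ to $\alpha_h$. So every core $\alpha$-node is reachable from a $\beta$-node.

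\emph{Accessibility off the core.} Take $v\in V^+_{\mathrm{cyc}}$. By Lemma~\ref{lem: directed_forest}, $\pi(v)$ lies in a tree of $S^+_{\mathrm{cyc}}$ rooted at some $w_p$ with a self-loop, so $v\in G_p^+$. As discussed after Lemma~\ref{lem: directed_forest}, $\alpha_{p_0}$ is a predecessor of every node of $G_p^+$; this uses item~3 (so $G_p$ is a cycle and its nodes are mutually reachable) and item~2 (to lift skeleton edges to single edges of $G$). By item~3, $\alpha_{p_0}$ has an in-neighbor $v_{p^*}\in G^*$.
\begin{itemize}
\item If $v_{p^*}$ is a $\beta$-node, we are done.
\item If $v_{p^*}$ is an $\alpha$-node, the previous step gives a path from some $\beta_i$ to $v_{p^*}$. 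Concatenating $\beta_i\to v_{p^*}\to\alpha_{p_0}\to v$ gives a walk, which contains a path.
\end{itemize}

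The step I expect to need the most care is justifying that every node of $V_{\mathrm{cyc}}^+$ is reached through the entry node $\alpha_{p_0}$ of the root cycle. This requires carefully lifting the skeleton tree structure back to $G$ using items~2--4. The other steps are direct unwinding of definitions.
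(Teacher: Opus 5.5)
Your proof is correct and follows essentially the same route as the paper: the matching transfers from item~1 via the isomorphism $\pi|_{G^*}$, core $\alpha$-nodes are accessible because each row of $\mathcal{S}_{G^*}$ is matched to a $\ast$ entry, and nodes of $V_{\mathrm{cyc}}^+$ are reached from a $\beta$-node through $v_{p^*}$ and $\alpha_{p_0}$. The step you flag as delicate is actually immediate and needs no tree-lifting via items~2 and~4: by definition, every node of $V_{\mathrm{cyc}}^+$ is a successor of some cycle $G_p$, and $\alpha_{p_0}$ reaches every node of $G_p$ by strong connectivity.
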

\begin{proof}
    For the second claim, observe that Definition~\ref{def: reduced} enforces $\mathscr{B}(\mathcal{S}_{S^*})$ to admit a row-saturating matching. Now since $S^*$ and $G^*$ are isomorphic, as previously mentioned, this implies $\mathscr{B}(\mathcal{S}_{G^*})$ has a row-saturating matching as well, proving the second claim.
    
    Regarding the first claim, each $\alpha$-node $\alpha_k \in V^*$ is accessible, i.e., is a successor of some $\beta$-node, due to $\mathscr{B}(\mathcal{S}_{G^*})$ having a row-saturating matching. Moreover, for $\alpha$-nodes $\alpha_j \in V_{\text{cyc}}^+$, we know that $\alpha_j \in G_p^+$ for some $p \in \{1,2,\dotsc,q\}$. Let $v_{p^*}$ be as defined before. Since $v_{p^*} \in G^*$, it is either a successor of some $\beta$-node, or it is a $\beta$-node itself. Now combining this with the fact that $v_{p^*}$ is a predecessor of $\alpha_j$, we have that $\alpha_j$ is also a successor of that $\beta$-node, and hence, accessible. Finally, $V$ being the disjoint union of $V^*$ and $V_{\text{cyc}}^+$ concludes the first claim, finishing the proof.
\end{proof}

Having established this, we now provide the following result:
\begin{proposition} \label{prop: SAC_to_reduced}
    Let $G = (V,E) \in \G_{n,m}$ satisfy the conditions of Theorem~\ref{thm:main}. Then there is a subgraph $G' = (V,E')$ of $G$, with the same node set $V$, such that $G'$ is reduced.
\end{proposition}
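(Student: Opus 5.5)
Given $G$ accessible with $\mathscr{B}(\mathcal{S}_{G^*})$ admitting a row-saturating matching, I will construct $G'$ by deleting edges only, in stages. Each stage either leaves the core unchanged or only removes core edges that the matching does not need.

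\textbf{Step 1: fix the core.} Fix a row-saturating matching $\cM$ of $\mathscr{B}(\mathcal{S}_{G^*})$. Each matched entry $(h,(k-1)m+i)$ is witnessed by a path of length $k$ from $\beta_i$ to $\alpha_h$ in $G^*$; choose one such path per matched row. Keep all edges of $G^*$ used by these finitely many witness paths, and keep all other core edges for now. In fact I will simply never delete edges inside $G^*$. This avoids any need to re-verify the matching: $\mathcal{S}_{G^*}$ depends only on $G^*$.

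\textbf{Step 2: cyclic components.} For each nontrivial strong component $G_p$ of $G$, choose a single cycle $C_p\subseteq G_p$. Since $G_p$ is strongly connected, every node of $G_p$ lies on some cycle, but not all nodes need lie on $C_p$. Keep $C_p$ as the new nontrivial component. Every remaining node of $G_p$ must stay reachable, so I keep it as a successor of $C_p$ via a BFS out-tree in $G_p$ rooted at $C_p$ and delete the other edges of $G_p$. These leftover nodes become trivial components in $S_{\mathrm{cyc}}^+\setminus S_{\mathrm{cyc}}$, which is consistent with item~4.

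Deleting edges could in principle create new core nodes, since a node might stop being a successor of any cycle. This does not happen here, because I keep reachability from the cycles for all nodes formerly in $V_{\mathrm{cyc}}^+$. I also never delete edges between different cycles in a way that would create new cycles, and deletion cannot create cycles anyway. Hence $V^*$ and the core $G^*$ are unchanged.

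\textbf{Step 3: incoming edges to $V_{\mathrm{cyc}}^+$.} The remaining edges point into $V_{\mathrm{cyc}}^+$, either from $G^*$ or from other nodes of $V^+_{\mathrm{cyc}}$.

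\emph{Stage (a).} For each new cycle $C_p$ that has an in-neighbor in $G^*$, keep exactly one edge from $G^*$ into $C_p$. Delete all other edges from $G^*$ into $V_{\mathrm{cyc}}^+$, and all edges from other parts of $V_{\mathrm{cyc}}^+$ into $C_p$. Cycles with no in-neighbor in $G^*$ are a problem. Accessibility gives a path from some $\beta$ to $C_p$, and this path enters $V^+_{\mathrm{cyc}}$ from the core at some first node. Walking backwards, some cycle that is an ancestor of $C_p$ does receive an edge from the core. I will handle this by processing components in topological order of $S$. A cycle $C_p$ with no core in-neighbor is then made a non-root: I keep one in-edge from an earlier tree node, which puts $C_p$ into $S^+_{\mathrm{cyc}}\setminus S_{\mathrm{cyc}}$. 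To do this I break $C_p$ into a path, deleting one cycle edge, so that it becomes trivial components.

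\emph{Stage (b).} Every remaining non-core node keeps exactly one in-edge from the already built forest, chosen as a BFS tree from the root cycles. This secures item~4 and item~2, since distinct strong components are joined by at most one edge. Item~3 follows from Stage (a). Item~1 holds because $S^*\cong G^*$, which is unchanged. Accessibility is preserved because each node is still reachable from a root cycle, whose entry point is reachable from a $\beta$.

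\textbf{Main obstacle.} The delicate step is Stage (a): ensuring that the cycles retained as strong components are exactly those fed directly from the core, while all nodes remain reachable. Breaking cycles changes $V_{\mathrm{cyc}}$, so I must check carefully that the core $V^*$ does not grow. It does not, because every broken-cycle node stays a successor of a retained root cycle. I expect most of the verification to lie there.
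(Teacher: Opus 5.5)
Your overall route is the same as the paper's. The paper invokes the single-input reduction of Chen and Gharesifard with one change: no edge of $E^*$ is removed. So $G^*$, and hence item~1, is inherited, and all pruning is on edges from $V^*$ into $V_{\mathrm{cyc}}^+$ and inside $V_{\mathrm{cyc}}^+$.

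The execution, however, has a genuine gap in the order of your choices. In Step~2 you fix an \emph{arbitrary} cycle $C_p\subseteq G_p$ and delete every edge of $G_p$ outside $C_p$ and an out-tree rooted at $C_p$. You do this before Stage~(a) has determined where $G_p$ is entered, and afterwards nodes of $G_p$ off $C_p$ have no path back to $C_p$. Two things then break.
\begin{enumerate}
\item Suppose $G_p$ is a source of $S_{\mathrm{cyc}}^+$ whose core in-edges all land off $C_p$. For example, take edges $\beta_1\alpha_1$, $\alpha_1\alpha_4$ and $G_p$ with edges $\alpha_2\alpha_3,\alpha_3\alpha_2,\alpha_3\alpha_4,\alpha_4\alpha_3$, and choose $C_p=\alpha_2\alpha_3\alpha_2$. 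Stage~(a) sees no core in-neighbour of $C_p$, deletes $\alpha_1\alpha_4$, and finds no earlier tree node. Your claim that some ancestor cycle receives a core edge is false here, because the entry into $V_{\mathrm{cyc}}^+$ happens inside $G_p$ itself.
\item Suppose $G_p$ has no core in-neighbour and is entered from upstream only at a node off $C_p$. For example, take $\beta_1\alpha_1$, $\alpha_1\alpha_2$, a self-loop at $\alpha_2$, $\alpha_2\alpha_5$, and $G_p$ with edges $\alpha_3\alpha_4,\alpha_4\alpha_3,\alpha_4\alpha_5,\alpha_5\alpha_4$, and choose $C_p=\alpha_3\alpha_4\alpha_3$, so $\alpha_5\alpha_4$ is deleted. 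Then the in-edge ``from an earlier tree node'' into $C_p$ that you want to keep does not exist.
\end{enumerate}
In both cases there are two outcomes. Either $C_p$ survives, which violates item~3. Or $C_p$ is broken, its nodes stop being successors of any retained cycle, and they fall into the core as inaccessible nodes with zero rows in the new core sparsity pattern, so item~1 fails. Your closing sentence, ``every broken-cycle node stays a successor of a retained root cycle,'' is exactly what is not established.

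The repair is to postpone the choices until the entries are known. For each $G_p$ with a core in-neighbour, fix one core edge $v\alpha_{p_0}$ into it and take $C_p$ to be a cycle through $\alpha_{p_0}$, which exists by strong connectivity. Retain only these cycles. Three facts make this work:
\begin{itemize}
\item Every source component of $S_{\mathrm{cyc}}^+$ is nontrivial, since a node of $V_{\mathrm{cyc}}^+\setminus V_{\mathrm{cyc}}$ has an in-neighbour in $V_{\mathrm{cyc}}^+$.
\item By accessibility, every source component has a core in-neighbour.
\item Every node of $V_{\mathrm{cyc}}^+$ is a successor of some source component.
\end{itemize}
Hence a BFS out-forest rooted at the retained cycles spans $V_{\mathrm{cyc}}^+$, provided it is built from the \emph{original} edges of $G$ inside $V_{\mathrm{cyc}}^+$, not from your Step~2 pruned graph. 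Keep $E^*$, the retained cycles, their single core edges, and the forest edges. Then every non-root node has in-degree one, so no new cycle is created and $V_{\mathrm{cyc}}^+$ and $G^*$ are unchanged. Items~1--4 then follow as you argue. This is also exactly the structure the paper uses afterwards, where each $G_p$ is a cycle through the unique node $\alpha_{p_0}$ fed by $G^*$.
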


The proof of Proposition~\ref{prop: SAC_to_reduced} follows exactly like~\cite[Proposition~4]{XC-BG:25} with the only difference that, in our case, there is no need to remove any edges on $G^*$, i.e., edges in $E^*$. Otherwise, the process of removing edges from nodes in $V^*$ to nodes in $V_{\text{cyc}}^+$, along with edges in $E_{\text{cyc}}^+$, remains the same.

\subsection{Enforcing Uniqueness on Walks to $G_{\text{cyc}}^+$} \label{subsec: uniqueness of walks to G_cyc_+}
We now aim to take advantage of the reduced definition to express all walks from $\beta$-nodes to the $\alpha$-nodes in $G_{\text{cyc}}^+$ in a unified formulation. We then switch focus to specifying one such walk for each $\alpha$-node in $G_{\text{cyc}}^+$, which will be used later in the final weight-assignment algorithm.

Let $G$ be a reduced graph, $\alpha_j \in V_p^+$, and $\alpha_{p_0}$ as defined before. We have already shown that there is a unique node $v_{p^*}$ in $G^*$ that is the in-neighbor of $\alpha_{p_0}$. Moreover, since $v_{p^*} \in G^*$, and $G$ is accessible, there exists some $\beta_i \in V_{\beta}$ that is the predecessor of $v_{p^*}$, or if $v_{p^*}$ is a $\beta$-node itself, we let $\beta_i = v_{p^*}$. Now as a result of Lemma~\ref{lem: directed_forest} and item~4 in Definition~\ref{def: reduced}, we have that every walk from $\beta_i$ to any $\alpha$-node $\alpha_j \in G_p^+$ has to be of the form:
\begin{equation}
\tau_{\beta_i \alpha_j} = \tau_{\beta_i v_{p^*}} \tau_{v_{p^*} \alpha_{p_0}} G_p^{k} \tau_{\alpha_{p_0} \alpha_j}, \label{eq: walks_to_cyc+}
\end{equation}
where, if $v_{p^*}$ is a $\beta$-node itself, i.e., $v_{p^*} = \beta_i$, then $\tau_{\beta_i v_{p^*}}$ is an empty walk. Here, $k \in \{0,1,2,\dotsc\}$ corresponds to the number of times the walk traverses the cycle $G_p$, and $k=0$ makes the walk in~\eqref{eq: walks_to_cyc+} a path. More importantly, $\tau_{v_{p^*} \alpha_{p_0}}$ is a unique edge due to item~2 in Definition~\ref{def: reduced}, and $\tau_{\alpha_{p_0} \alpha_j}$ is a unique path since $S_{\text{cyc}}^+$ is a directed forest as shown in Lemma~\ref{lem: directed_forest}. However $\tau_{\beta_i v_{p^*}}$ may not be unique. In particular, first of all, there may be multiple $\beta$-nodes that are predecessors of $v_{p^*}$. Secondly, there may be multiple paths of each of these $\beta$-nodes to $v_{p^*}$. However, in order to implement the same idea in~\cite{XC-BG:25} to make the rows corresponding to $G_{\mathrm{cyc}}^+$ contribute independent columns to the averaged controllability matrix, we need to enforce uniqueness of the relevant walks so that their contributions do not interfere with each other. To do so, let us first define
\begin{equation*}
    I_j := \{ i : \beta_i \text{ is a predecessor of } \alpha_j \},
\end{equation*}
and for some $i \in I_j$, let $\cT_{\beta_i \alpha_j}$ denote the set of all different paths from $\beta_i$ to $\alpha_j$. We now choose
\begin{equation*}
    i_j \in \argmin_{i \in I_j} \left( \min_{\tau \in \cT_{\beta_i \alpha_j}} \ell(\tau) \right),
\end{equation*}
and 
\begin{equation*}
    \tau^*_{\beta_{i_j} \alpha_j} \in \argmin_{\tau \in \cT_{\beta_{i_j} \alpha_j}} \ell(\tau).
\end{equation*}
If either minimizer is not unique, we choose one arbitrarily and fix it for the rest of the construction. Thus, $\beta_{i_j}$ is a $\beta$-node with the shortest path to $\alpha_j$, and that path is denoted by $\tau^*_{\beta_{i_j} \alpha_j}$. Now note that as a result of~\eqref{eq: walks_to_cyc+}, we have that for any $\alpha_j \in G_p^+$,
\begin{equation}
\tau^*_{\beta_{i_j} \alpha_j} = \tau^*_{\beta_{i_{p^*}} v_{p^*}} \tau_{v_{p^*} \alpha_{p_0}} \tau_{\alpha_{p_0} \alpha_j}, \label{eq: unique_shortest_path_to_cyc+}
\end{equation}
where, as previously mentioned, $\tau_{v_{p^*} \alpha_{p_0}}$ and $\tau_{\alpha_{p_0} \alpha_j}$ are unique paths. Moreover, we define the \textbf{depth} of each node $\alpha_j \in G$ as
\begin{equation}
    \dep (\alpha_j ) := \ell (\tau^*_{\beta_{i_j} \alpha_j}). \label{eq: depth_def}
\end{equation}

Now we have two possible scenarios:
\begin{enumerate}
    \item {\boldmath$v_{p^*}$ \textbf{is an $\alpha$-node}}: Then $\tau^*_{\beta_{i_{p^*}} v_{p^*}}$ is a non-empty path in $G^*$, and hence, it will be a part of some maximal path 
    \[
    p_{i_{p^*} j_{p^*}} = \beta_{i_{p^*}} \alpha_{\xi_{i_{p^*} j_{p^*}}(1)} \alpha_{\xi_{i_{p^*} j_{p^*}}(2)} \cdots \alpha_{\xi_{i_{p^*} j_{p^*}}(\delta_{i_{p^*} j_{p^*}})},
    \]
    where
    \[
    \xi_{i_{p^*} j_{p^*}}(\dep (v_{p^*})) = p^*,
    \]
    i.e., $v_{p^*}$ is the $\dep (v_{p^*})$-th node visited along this path. 
    \item {\boldmath$v_{p^*}$ \textbf{is a $\beta$-node}}: Then $\tau^*_{\beta_{i_{p^*}} v_{p^*}}$ is an empty path. This means $\tau^*_{\beta_{i_j} \alpha_j}$ has no edges in $G^*$. Note that this will play a role in the weight assignment later on.
\end{enumerate}

We have now derived a family of walks, parameterized by $k\in\N$, to each $\alpha$-node in $G_{\text{cyc}}^+$, namely, for $\alpha_j \in G_{p}^+$:
\begin{equation}
\tau_{\beta_{i_{p^*}} v_{p^*}}^* \tau_{v_{p^*} \alpha_{p_0}} G_p^{k} \tau_{\alpha_{p_0} \alpha_j}. \label{eq: unique_walks_to_cyc+}
\end{equation}
Later, in the weight-assignment algorithm, we will force the controllability matrix to only have values corresponding to these walks by letting the weight of the edges in $G_{\text{cyc}}^+$ only take nonzero values on the probability support of the unique path $\tau_{\beta_{i_{p^*}} v_{p^*}}^*$.

The scalar function $f$ constructed next is used to assign different powers to these uniquely characterized walks in $G_{\mathrm{cyc}}^+$, which later allows us to separate their contributions in the averaged controllability matrix.

\subsection{Construction of $f$}
We now construct the scalar function used in the extension of the core edge weight assignment. Let $d:=\dim\Sigma$. Choose a chart $(\mathcal{U},\phi)$ such that
\[
\mathcal U\subseteq\mathcal O\subseteq\operatorname{supp}\mu,
\]
where $\phi:\mathcal{U}\to\mathcal{V}$ is a diffeomorphism onto an open set $\mathcal V\subset\mathbb R^d$. After translating and rescaling the coordinates if necessary, we take $\mathcal V$ to contain the closed unit ball
\begin{equation}
\mathbb B := \{ z\in\mathbb{R}^d \mid \|z\|\le 1\},
\end{equation}
and let $K:=\phi^{-1}(\mathbb B)$. In particular, $K\subseteq\mathcal U\subseteq\mathcal O$ and $\mu(K)>0$. We then define the bounded measurable function $f:\Sigma\to[0,1]$ by
\begin{equation}\label{eq:fdef}
f(\sigma):=
\begin{cases}
\|\phi(\sigma)\|, & \text{if }\sigma\in K,\\
0, & \text{otherwise}.
\end{cases}
\end{equation}
Because $\mu|_{\mathcal O}$ is absolutely continuous with respect to a smooth volume measure, the pushforward measure $\phi_\#(\mu|_{\mathcal U})$ is absolutely continuous with respect to the $d$-dimensional Lebesgue measure on $\phi(\mathcal U)$.
\begin{lemma}\label{lem:radial_equal_measure_splitting}
For every $r\in[0,1]$,
\[
\mu\big(\{\sigma\in K:f(\sigma)=r\}\big)=0.
\]
Moreover, let $0\le a<b\le1$ and suppose that
\[
D_{a,b}:=\{\sigma\in K:a\le f(\sigma)\le b\}
\]
has positive measure. For every $\theta\in(0,1)$, there exists $c\in(a,b)$ such that
\[
\mu(D_{a,c})=\theta\mu(D_{a,b}),
\qquad
\mu(D_{c,b})=(1-\theta)\mu(D_{a,b}),
\]
where the common level set $f^{-1}(\{c\})\cap K$ may be assigned to either piece without changing its measure. Consequently, every positive-measure radial interval can be partitioned into any finite number of radial intervals of equal measure.
\end{lemma}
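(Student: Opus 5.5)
The plan is to push everything forward to $\R^d$ through the chart and reduce to a one-dimensional statement about the distribution of the norm. Let $\nu:=\phi_\#(\mu|_{\mathcal U})$. By the remark preceding the lemma, $\nu$ is absolutely continuous with respect to Lebesgue measure $\lambda_d$ on $\phi(\mathcal U)\supseteq\mathbb B$. For $\sigma\in K$ we have $f(\sigma)=\|\phi(\sigma)\|$, so
\[
\{\sigma\in K: f(\sigma)=r\}=\phi^{-1}\big(\{z\in\mathbb B:\|z\|=r\}\big).
\]
The sphere $\{\|z\|=r\}$ (or the origin when $r=0$) is $\lambda_d$-null whenever $d\ge1$. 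Absolute continuity then gives $\mu(\{\sigma\in K:f(\sigma)=r\})=\nu(\{\|z\|=r\})=0$, which is the first claim. If $d=0$ the chart setup degenerates, and I will note that the standing assumption implicitly requires $d\ge1$. A consequence is that endpoint conventions are irrelevant: $D_{a,c}$ and $D_{c,b}$ overlap only on a null level set, so that level set can go to either piece.

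For the splitting claim, define $F(t):=\mu(D_{a,t})$ for $t\in[a,b]$. Equivalently, $F(t)=\nu(\{z\in\mathbb B: a\le\|z\|\le t\})$. This is the distribution function of the pushforward of $\mu|_K$ under $f$, restricted to $[a,b]$, and it is nondecreasing. The key step is continuity of $F$. By continuity of measure, the jump of $F$ at $t$ equals $\mu(\{\sigma\in K:f(\sigma)=t\})$, and this is $0$ by the first part. Monotone convergence (continuity from above and below of the finite measure $\mu$) makes this rigorous.

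Next, $F(a)=\mu(\{f=a\}\cap K)=0$ and $F(b)=\mu(D_{a,b})>0$. For $\theta\in(0,1)$, the target $\theta\mu(D_{a,b})$ lies strictly between $F(a)$ and $F(b)$, so the intermediate value theorem yields $c\in(a,b)$ with $F(c)=\theta\mu(D_{a,b})$. Endpoints are excluded because $F(a)=0$ and $F(b)$ differs from the target. Finally,
\[
\mu(D_{c,b})=\mu(D_{a,b})-\mu(D_{a,c})+\mu(\{f=c\}\cap K)=(1-\theta)\mu(D_{a,b}).
\]

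The consequence about finite equal-measure partitions follows by induction on the number $q$ of pieces. Apply the splitting with $\theta=1/q$ to obtain a first piece $D_{a,c_1}$. The remainder $D_{c_1,b}$ is again a positive-measure radial interval, with measure $(q-1)/q$ of the total, so the construction recurses. Shared level sets are assigned to one side so that the pieces are disjoint, which costs nothing since they are null.

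The only delicate point is the continuity of $F$, and it rests entirely on level sets of the norm being Lebesgue-null together with the absolute continuity of $\nu$. Beyond that, the argument is a routine use of the intermediate value theorem.
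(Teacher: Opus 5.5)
Your proposal is correct and follows essentially the same route as the paper. You push $\mu|_{\mathcal U}$ forward through the chart, use that spheres (and the origin) are $\lambda_d$-null to get null level sets, deduce continuity of $F(t)=\mu(D_{a,t})$, apply the intermediate value theorem, and iterate to get finite equal-measure partitions. Your extra details (the jump computation, the identity for $\mu(D_{c,b})$, and the observation that $d\ge 1$ is implicitly required) only make explicit what the paper leaves tacit.
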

\begin{proof}
Let $\nu:=\phi_{\#}(\mu|_{\mathcal U})$. Then $\nu\ll\lambda_d$, where $\lambda_d$ is Lebesgue measure on $\mathbb R^d$. For $r>0$, the set $\{z\in\mathbb B:\|z\|=r\}$ is a Euclidean sphere, while for $r=0$ it is the singleton $\{0\}$. Each of these sets has $\lambda_d$-measure zero. Hence
\[
\mu\big(\{\sigma\in K:f(\sigma)=r\}\big)=0
\]
for every $r\in[0,1]$, proving the first claim. 

Now define
\[
F(t):=\mu\big(\{\sigma\in K:a\le f(\sigma)\le t\}\big),
\qquad t\in[a,b].
\]
Since $f_\#(\mu|_K)$ is nonatomic, $F$ has no jumps, and hence it is continuous. Also, $F(a)=0$ and $F(b)=\mu(D_{a,b})$. The intermediate value theorem therefore yields $c\in(a,b)$ with $F(c)=\theta\mu(D_{a,b})$. Since the level set at $c$ has measure zero, the resulting radial pieces $D_{a,c}$ and $D_{c,b}$ have the claimed measures. Repeating this argument gives an equal-measure partition into any finite number of radial intervals.
\end{proof}

\subsection{Extension of the Edge Weight Assignment}
First, recall the definition of $\dep (\alpha_i)$ from~\eqref{eq: depth_def}. We then relabel, if necessary, the nodes in $G$ such that
\[
\text{if} \ \dep(\alpha_i) < \dep(\alpha_j), \ \text{then} \ i < j.
\]
Similar to~\cite{XC-BG:25}, we now partition the edge set $E$ into five subsets $E_i$. Let $V_{\mathrm{appx}} := V_{\mathrm{cyc}}^{+} \setminus V_{\mathrm{cyc}}$, and
$G_{\mathrm{appx}} = (V_{\mathrm{appx}}, E_{\mathrm{appx}})$ be the subgraph induced by
$V_{\mathrm{appx}}$. We call $G_{\mathrm{appx}}$ the appendix of $G$.
Define the subsets $E_i$ of $E$ as follows:
\begin{equation}
\left\{
\begin{aligned}
E_1 &:= E^{*}, \\
E_2 &:= \{\, v_i \alpha_j \in E \mid v_i \in G^{*} \text{ and } \alpha_j \in G_{\mathrm{cyc}} \,\}, \\
E_3 &:= E_{\mathrm{cyc}}, \\
E_4 &:= \{\, \alpha_i \alpha_j \in E \mid \alpha_i \in G_{\mathrm{cyc}}
\text{ and } \alpha_j \in G_{\mathrm{appx}} \,\}, \\
E_5 &:= E_{\mathrm{appx}} .
\end{aligned}
\right.
\label{eq: edge_set_partition}
\end{equation}
The edges in $E_2$ link nodes from $G^{*}$ to $G_{\mathrm{cyc}}$, and edges in $E_4$ link nodes
from $G_{\mathrm{cyc}}$ to $G_{\mathrm{appx}}$. Moreover, for every $i \in \{1,\dotsc, m\}$, we define
\[
\ell_{\max}^{(i)} := \max\!\left(\{0\}\cup\{\ell_p:p\in\{1,\dotsc,q\},\ i_{p^*}=i\}\right),
\]
and let
\[
L := \lcm_{p=1}^{q} \ell_p,
\]
where $\lcm$ stands for least common multiple. Note that $\ell_{\max}^{(i)} = 0$ if there does not exist any $G_p^+$ such that $i_{p^*} = i$, i.e., there is no characterized walk of the form~\eqref{eq: unique_walks_to_cyc+} from $\beta_i$ to any $\alpha$-node in the $G_{\text{cyc}}^+$ portion of the graph.

With this notation in place, we now present the edge-weight assignment algorithm, on the whole graph $G$, that constructs a pair $(A,B)$  whose averaged controllability matrix has full row rank.

\begin{algorithm}
\caption{Total Weight Assignment}
\begin{algorithmic}[1]\label{alg: Whole_G_weight_assignment}
\STATE \textbf{Input:} the reduced graph $G\in\G_{n,m}$, the active set $K$, and the function $f$ constructed in~\eqref{eq:fdef}.
\STATE Since $G$ is reduced, there exists an $M\in\mathcal P(\mathcal S_{G^*})$ of full rank. Pick any such $M$ and run Algorithm~\ref{alg: core_weight_assignment} on $G^*$ with target truncated averaged controllability matrix $M$ and active set $D = K$. Choose all equal-measure partitions in this run to be radial, as depicted in Lemma~\ref{lem:radial_equal_measure_splitting}, and store the resulting $\tilde r_{ij}^{(k)}$ and $\tilde\Sigma_{ij}^{(k)}$ values.
\STATE Pick $\gamma$ to be a positive \textit{irrational} number that we fix throughout the algorithm.
\FOR{the remaining edges $e \in E \setminus E^*$}
    \IF{$e \in E_2$}
        \STATE $e$ can be shown as $e = v_{p^*} \alpha_{p_0}$. 
        \IF{$v_{p^*}$ is a $\beta$-node}
            \STATE $e \leftarrow f^{p_0 \gamma} \mathbf{1}_{\sigma \in K}.$ \label{alg_line: E2_if_beta}
        \ELSE
            \STATE $e \leftarrow \frac{\mu \left( \tilde{\Sigma}_{i_{p^*} j_{p^*}}^{(\dep (v_{p^*}))} \right)}{\tilde{r}_{i_{p^*} j_{p^*}}^{(\dep (v_{p^*}))}} f^{p_0 \gamma} \ {\bf 1}_{\sigma \in \tilde{\Sigma}_{i_{p^*} j_{p^*}}^{(\dep (v_{p^*}))}}.$ \label{alg_line: E2_if_alpha}
        \ENDIF
    \ELSIF{$e \in E_3$}
        \STATE Let $G_p$ be the cycle containing $e$. We write $G_p$ explicitly as $G_p = \alpha_{p_0} \alpha_{p_1} \dotsc \alpha_{p_{l_p - 1}} \alpha_{p_0}$.
        \STATE We then let
        \[
        e \leftarrow
        \begin{cases}
        f^{(\frac{\ell_p}{L})} & \text{if } e = \alpha_{p_{\ell_p-1}}\,\alpha_{p_0}, \\[6pt]
        1 & \text{otherwise}.
        \end{cases}
        \]
    \ELSIF{$e \in E_4$}
        \STATE Let $G_p$ be the cycle, $\alpha_j \in G_{\text{appx}}$ the node incident to $e$, and $\alpha_{p_0}$ as defined in~\eqref{eq: alpha_p_0_def}. We then set
        \[
        e \leftarrow f^{(j - p_0) \gamma}.
        \]
    \ELSE
        \STATE Then $e \in E_5$ and can be written as $e = \alpha_i \alpha_j$. Now set
        \[
        e \leftarrow f^{(j - i) \gamma}.
        \]
    \ENDIF
\ENDFOR
\RETURN $(A,B)$.
\end{algorithmic}
\end{algorithm}
We now aim to unify notation between lines~\ref{alg_line: E2_if_beta} and~\ref{alg_line: E2_if_alpha}. For some $G_p^+$ where $v_{p^*}$ is a $\beta$-node, we define
\[
\tilde{r}^{(\dep (v_{p^*}))}_{i_{p^*} j_{p^*}} := \mu (K) \quad \text{and} \quad \tilde{\Sigma}^{(\dep (v_{p^*}))}_{i_{p^*} j_{p^*}} := K.
\]
Note that this still matches the statement in Sublemma~\ref{sublem: r_value_induc} on the value of $\tilde{r}$ as
\[
\mu (K) = \E [ \mathbf{1}_{\sigma \in K}].
\]

Observe that, as promised in Subsection~\ref{subsec: uniqueness of walks to G_cyc_+}, line~\ref{alg_line: E2_if_alpha} of Algorithm~\ref{alg: Whole_G_weight_assignment} ensures that the walks to each $G_p^+$ take nonzero values only on $\tilde{\Sigma}_{i_{p^*} j_{p^*}}^{(\dep (v_{p^*}))}$, which is the probability support corresponding to the unique path $\tau_{\beta_{i_{p^*}} v_{p^*}}^*$. As a result, the only walks of nonzero value to any $\alpha$-node in $G_{\text{cyc}}^+$ will only be of the form in~\eqref{eq: unique_walks_to_cyc+}.
Similar to~\cite{XC-BG:25}, this allows us to establish a series of useful lemmas that leverage this uniquely characterized family of walks.

\subsection{Finalizing the Proof of Sufficiency for Theorem~\ref{thm:main}} \label{subsec: proof_of_suff}
For positive integers $i$ and $j$, let $V(i,j)$ be the set of $\alpha$-nodes $\alpha_h \in G$ such that there exists a walk $\tau$ from $\beta_i$ to $\alpha_h$ of length $j$, with the additional requirement that, whenever $\alpha_h \in G_p^+$ for some $p \in \{1,\dots,q\}$, the walk $\tau$ is of the form
\begin{equation*}
    \tau_{\beta_{i_{p^*}} v_{p^*}}^* \,
    \tau_{v_{p^*} \alpha_{p_0}} \,
    G_p^{k} \,
    \tau_{\alpha_{p_0}\alpha_h},
\end{equation*}
for $k \in \N$, which is the family of walks indicated in~\eqref{eq: unique_walks_to_cyc+}, and $i_{p^*} = i$. Note that, as previously mentioned, this extra restriction is due to the fact that, as a result of Algorithm~\ref{alg: Whole_G_weight_assignment}, the only walks of nonzero value to any $\alpha$-node in $G_{\text{cyc}}^+$ will only be of this form. No additional restriction is imposed when $\alpha_h \in G^*$. 
Moreover, for each $p \in \{1,\dots,q\}$, we let
\[
V_p^+ (i,j) := V_p^+ \cap V(i,j).
\]
Now recall that $V_{\text{cyc}}^+$ is the set of successors of all nontrivial strong components in $G$, and $G_{\text{cyc}}^+$ is the subgraph induced by $V_{\text{cyc}}^+$. Since $G$ is reduced,
\[
G_{\text{cyc}}^+ = \cup_{p=1}^{q} G_p^+.
\]
Also, define
\[
d_p := \max_{\alpha_j \in V_p^+} \dep (\alpha_j),
\]
and recall $n^* = |V_\alpha^*|$ is the number of $\alpha$-nodes in $G^*$.
We are now in a position to provide a series of results similar to~\cite{XC-BG:25} that will come in handy for the final proof. 
\begin{lemma} \label{lem: xc_bg_lemma5}
    Suppose that $j > n^*$ and $j > \max_{p=1}^{q} (d_p - \ell_p)$. Then
    \[
    V(i,j+L) = V(i,j) \quad \text{and} \quad V_{\text{cyc}}^+ = \cup_{i=1}^{m} \cup_{k=0}^{\ell_{\max}^{(i)} - 1} V(i,j+k).
    \]
\end{lemma}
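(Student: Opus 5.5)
The plan is to describe $V(i,j)$ explicitly once $j$ is large, and then read off both claims from a congruence condition.

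\emph{Step 1: for $j>n^*$, $V(i,j)$ contains no core nodes.} Since $G^*$ is acyclic with $n^*$ $\alpha$-nodes, every walk from $\beta_i$ that stays inside $G^*$ is a path of length at most $n^*$. Moreover, no walk can leave $V_{\mathrm{cyc}}^+$ and return to $G^*$, because $V_{\mathrm{cyc}}^+$ is closed under successors. Hence every $\alpha_h\in V(i,j)$ with $j>n^*$ lies in $V_{\mathrm{cyc}}^+=\bigcup_p V_p^+$, and so
\[
V(i,j)=\bigcup_{p=1}^q V_p^+(i,j).
\]

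\emph{Step 2: characterize $V_p^+(i,j)$.} By the definition of $V(i,j)$, a node $\alpha_h\in V_p^+$ belongs to $V(i,j)$ if and only if $i_{p^*}=i$ and $j$ is the length of a walk of the form~\eqref{eq: unique_walks_to_cyc+}. By~\eqref{eq: unique_shortest_path_to_cyc+} and~\eqref{eq: depth_def}, that walk with $k=0$ has length $\dep(\alpha_h)$, and each traversal of $G_p$ adds $\ell_p$. Therefore
\[
V_p^+(i,j)=\{\alpha_h\in V_p^+ : i_{p^*}=i,\ j\ge \dep(\alpha_h),\ j\equiv \dep(\alpha_h) \!\!\pmod{\ell_p}\}.
\]
Now use the hypothesis $j>d_p-\ell_p\ge \dep(\alpha_h)-\ell_p$. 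If $j\equiv\dep(\alpha_h) \pmod{\ell_p}$ and $j>\dep(\alpha_h)-\ell_p$, then $j\ge\dep(\alpha_h)$ automatically. So for such $j$ the inequality is redundant, and membership depends only on $i$ and on $j \bmod \ell_p$.

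\emph{Step 3: periodicity.} Each $\ell_p$ divides $L$, and $j+L$ also satisfies both hypotheses. Hence $j\equiv\dep(\alpha_h)$ iff $j+L\equiv\dep(\alpha_h) \pmod{\ell_p}$, which gives $V_p^+(i,j+L)=V_p^+(i,j)$ for every $p$. Taking the union over $p$ and using Step 1 yields $V(i,j+L)=V(i,j)$.

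\emph{Step 4: covering.} The inclusion $\supseteq$ follows from Step 1, since $j+k>n^*$ for every $k\ge 0$. For $\subseteq$, take $\alpha_h\in V_p^+$ and set $i:=i_{p^*}$; this is the index $i_h$ fixed for $\alpha_h$, by~\eqref{eq: unique_shortest_path_to_cyc+}. By the definition of $\ell_{\max}^{(i)}$ we have $\ell_p\le\ell_{\max}^{(i)}$. Among the $\ell_p$ consecutive integers $j,\dots,j+\ell_p-1$, exactly one is congruent to $\dep(\alpha_h)$ modulo $\ell_p$. That index is still larger than $d_p-\ell_p$, so by Step 2 we get $\alpha_h\in V(i,j+k)$ for some $0\le k\le\ell_p-1\le\ell_{\max}^{(i)}-1$. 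An input with $\ell_{\max}^{(i)}=0$ contributes an empty union, consistent with Step 2, since no $p$ has $i_{p^*}=i$.

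The main obstacle is Step 2, specifically justifying that the walks counted in $V(i,j)$ toward $G_p^+$ are exactly the family~\eqref{eq: unique_walks_to_cyc+} and that their lengths are $\dep(\alpha_h)+k\ell_p$. This needs the reduced-graph structure: the edge $v_{p^*}\alpha_{p_0}$ is unique, and Lemma~\ref{lem: directed_forest} makes $\tau_{\alpha_{p_0}\alpha_h}$ a unique path. It also needs care with the depth bookkeeping, namely that $\dep(\alpha_h)=\ell(\tau^*_{\beta_{i_{p^*}}v_{p^*}})+1+\ell(\tau_{\alpha_{p_0}\alpha_h})$, including the case where $v_{p^*}$ is a $\beta$-node.
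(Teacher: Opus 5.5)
Your argument is correct and is essentially the proof the paper has in mind when it defers to \cite[Lemma~5]{XC-BG:25}. Restricting walks into $G_p^+$ to the family~\eqref{eq: unique_walks_to_cyc+} turns membership in $V(i,j)$ into the conditions $i_{p^*}=i$ and $j\equiv\dep(\alpha_h) \pmod{\ell_p}$, where $j\ge\dep(\alpha_h)$ is automatic from $j>d_p-\ell_p$. From there, $\ell_p\mid L$ gives the periodicity, and a window of $\ell_p\le\ell_{\max}^{(i)}$ consecutive lengths, with $i=i_{p^*}$, gives the covering; this is exactly the multi-input bookkeeping the adaptation requires.
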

The proof of Lemma~\ref{lem: xc_bg_lemma5} follows exactly like~\cite[Lemma~5]{XC-BG:25} as a result of the fact that for any $\alpha$-node in $G_{\text{cyc}}^+$, all the walks of nonzero value are of the form~\eqref{eq: unique_walks_to_cyc+}. Furthermore, note that since $j > n^*$, for each $i$, $V(i,j')$ with $j' \ge j$ only contains $\alpha$-nodes in $G_{\text{cyc}}^+$. Thus, by the definition of $V(i,j)$, we have that
\begin{equation}
\left( \cup_{k=0}^{\ell_{\max}^{(i)} - 1} V(i,j+k) \right) \cap \left( \cup_{k=0}^{\ell_{\max}^{(i')} - 1} V(i',j+k) \right) = \varnothing, \label{eq: disjoint_feeding_to_cyc_plus}
\end{equation}
for $i \neq i'$. This means that the subsets of $\alpha$-nodes in $G_{\mathrm{cyc}}^+$ reached from different $\beta$-nodes through the characterized walks in~\eqref{eq: unique_walks_to_cyc+} are completely disjoint.

Let $\alpha_h \in G_p^+$, and recall the walks $\tau$ of the form characterized in~\eqref{eq: unique_walks_to_cyc+}:
\[
\tau = \tau_{\beta_{i_{p^*}} v_{p^*}}^* \tau_{v_{p^*} \alpha_{p_0}} G_p^{k} \tau_{\alpha_{p_0} \alpha_h}.
\]
We use $\nu$ as a mapping from this set of walks to nonnegative real numbers such that
\begin{equation}
    \nu (\tau) := \begin{cases}
        p_0 \gamma + \frac{\ell (\tau) - \dep (\alpha_h)}{L} & \text{if } \alpha_h \in G_p, \\
        h \gamma + \frac{\ell (\tau) - \dep (\alpha_h)}{L} & \text{if } \alpha_h \in G_p^+ \setminus G_p.
    \end{cases} \label{eq: nu_def_on_walks}
\end{equation}

We now have the following result:
\begin{lemma} \label{lem: prod_tau2_formulation}
    Let $\alpha_h \in G_p^+$ and $\tau$ be a characterized walk from $\beta_{i_{p^*}}$ to $\alpha_h$, i.e.,
    \[
    \tau = \tau_{\beta_{i_{p^*}} v_{p^*}}^* \tau_{v_{p^*} \alpha_{p_0}} G_p^{k} \tau_{\alpha_{p_0} \alpha_h}.
    \]
    Let us denote
    \[
    \tau_{1} := \tau_{\beta_{i_{p^*}} v_{p^*}}^* \quad \text{and} \quad \tau_2 := \tau_{v_{p^*} \alpha_{p_0}} G_p^{k} \tau_{\alpha_{p_0} \alpha_h},
    \]
    so that $\tau = \tau_1 \tau_2$. Note that the walk $\tau_2$ can be written as a series of edges $\tau_2 = e_{\tau_2}^{(1)} e_{\tau_2}^{(2)} \cdots e_{\tau_2}^{(\ell (\tau_2) )}$. Then
    \[
    \prod_{i=1}^{\ell (\tau_2)} e_{\tau_2}^{(i)} = \frac{\mu \left( \tilde{\Sigma}_{i_{p^*} j_{p^*}}^{(\dep (v_{p^*}))} \right)}{\tilde{r}^{(\dep (v_{p^*}))}_{i_{p^*} j_{p^*}}} \mathbf{1}_{\sigma \in \tilde{\Sigma}^{(\dep (v_{p^*}))}_{i_{p^*} j_{p^*}}} f^{\nu (\tau)},
    \]
    with $\nu (\tau)$ defined in~\eqref{eq: nu_def_on_walks}.
\end{lemma}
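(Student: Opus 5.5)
The proof will be a direct computation: write out the edges of $\tau_2$ and multiply the weights that Algorithm~\ref{alg: Whole_G_weight_assignment} assigned to them. By~\eqref{eq: walks_to_cyc+} and the uniqueness discussion in Subsection~\ref{subsec: uniqueness of walks to G_cyc_+}, the walk $\tau_2$ consists of three pieces:
\begin{itemize}
\item the single $E_2$ edge $v_{p^*}\alpha_{p_0}$;
\item $k$ traversals of the cycle $G_p$, all of whose edges lie in $E_3$;
\item if $\alpha_h\notin G_p$, a path $\tau_{\alpha_{p_0}\alpha_h}$.
\end{itemize}
The path in the third piece must first travel along the cycle from $\alpha_{p_0}$ to some node $\alpha_{p_s}$. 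It then takes one $E_4$ edge $\alpha_{p_s}\alpha_{j_1}$ and continues with $E_5$ edges $\alpha_{j_1}\alpha_{j_2},\dots,\alpha_{j_{t-1}}\alpha_{j_t}$, where $\alpha_{j_t}=\alpha_h$. If $\alpha_h\in G_p$, the third piece is just the arc of the cycle from $\alpha_{p_0}$ to $\alpha_h$. This arc does not pass through the closing edge $\alpha_{p_{\ell_p-1}}\alpha_{p_0}$, because it ends the first time it reaches $\alpha_h$ after leaving $\alpha_{p_0}$.

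The factors are then as follows.
\begin{itemize}
\item The $E_2$ edge contributes $\frac{\mu(\tilde\Sigma)}{\tilde r}f^{p_0\gamma}\mathbf 1_{\sigma\in\tilde\Sigma}$. This holds whether $v_{p^*}$ is an $\alpha$-node or a $\beta$-node, using the unified notation $\tilde r=\mu(K)$, $\tilde\Sigma=K$ for the $\beta$-node case (there the prefactor equals $1$).
\item Each full cycle contributes exactly $f^{\ell_p/L}$, since only the closing edge carries a nontrivial weight.
\item The initial cycle arc of $\tau_{\alpha_{p_0}\alpha_h}$ contributes $1$.
\item The $E_4$ edge contributes $f^{(j_1-p_0)\gamma}$.
\item The $E_5$ edges contribute $f^{(j_2-j_1)\gamma}\cdots f^{(h-j_{t-1})\gamma}$, whose exponents telescope.
\end{itemize}
Multiplying, the exponent of $f$ is $p_0\gamma+k\ell_p/L$ if $\alpha_h\in G_p$, and $p_0\gamma+(j_1-p_0)\gamma+(h-j_1)\gamma+k\ell_p/L=h\gamma+k\ell_p/L$ otherwise.

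It remains to match $k\ell_p/L$ with $(\ell(\tau)-\dep(\alpha_h))/L$. By~\eqref{eq: unique_shortest_path_to_cyc+}, the shortest path $\tau^*_{\beta_{i_h}\alpha_h}$ is exactly the $k=0$ member of the family~\eqref{eq: unique_walks_to_cyc+}. Hence $\dep(\alpha_h)=\ell(\tau_1)+1+\ell(\tau_{\alpha_{p_0}\alpha_h})$ and $\ell(\tau)=\dep(\alpha_h)+k\ell_p$, which gives exactly $\nu(\tau)$ as defined in~\eqref{eq: nu_def_on_walks}.

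Two minor points need care. First, every factor $f^{\cdot}$ is supported in $K$, and $\tilde\Sigma\subseteq K$. The indicator therefore absorbs all supports, with the convention $f^0=1$ on $K$; note that $f^{\ell_p/L}$ is not multiplied by an indicator. Second, one should confirm that $\alpha_{j_1}$, the node incident to the $E_4$ edge, is the index used in that edge's weight, and that each $E_5$ edge $\alpha_i\alpha_j$ uses $f^{(j-i)\gamma}$; by the depth relabeling, $j>i$ along such paths, so these exponents are positive.

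The main obstacle I expect is the structural claim that $\tau_{\alpha_{p_0}\alpha_h}$ leaves the cycle exactly once through a single $E_4$ edge and then stays in the appendix. I would derive this from Lemma~\ref{lem: directed_forest} and item~4 of Definition~\ref{def: reduced}: a path in the appendix cannot return to $G_p$, since otherwise $G_p$ would not be a strong component, and each appendix node has a unique in-neighbor in $G_{\mathrm{cyc}}^+$.
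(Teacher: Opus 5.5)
Your proposal is correct and follows the same route as the paper. The paper reads off the $E_2$ factor, with prefactor $\mu(\tilde\Sigma)/\tilde r$ and its indicator, from the $E_2$ assignment in Algorithm~\ref{alg: Whole_G_weight_assignment}, and cites \cite[Lemma~7]{XC-BG:25} to identify the accumulated power of $f$ with $f^{\nu(\tau)}$; your accounting of the cycle factors, the $E_4$ exponent, the telescoping $E_5$ exponents and the identity $\ell(\tau)-\dep(\alpha_h)=k\ell_p$ is that cited lemma written out in the present notation.
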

The proof of Lemma~\ref{lem: prod_tau2_formulation} follows immediately from line~\ref{alg_line: E2_if_alpha} of Algorithm~\ref{alg: Whole_G_weight_assignment} along with~\cite[Lemma~7]{XC-BG:25}.

We next provide an expression for the entries of $\mathcal{C}(A,B)$ that will be used to analyze the $G_{\mathrm{cyc}}^+$ part of the graph, for the pair $(A,B)$ produced by Algorithm~\ref{alg: Whole_G_weight_assignment}.
\begin{lemma} \label{lem: C_ij_values}
    Suppose $j>n^*$, and $j > \max_{p=1}^{q} (d_p - \ell_p)$, and $\alpha_h \in G_p^+$. Let $C_{h,(j-1) m + i}$ be an entry of $\mathcal{C} (A,B)$. Then
    \[
    C_{h,(j-1) m + i} = \begin{cases}
        \int_{\Sigma} f^{\nu (\tau)} \mathbf{1}_{\sigma \in \tilde{\Sigma}^{(\dep (v_{p^*}))}_{i_{p^*} j_{p^*}}} d \mu & \text{if } \alpha_h \in V(i,j), \\ 
        0 & \text{otherwise},
    \end{cases}
    \]
    where $\tau$ is the unique walk of length $j$ from $\beta_{i}$ to $\alpha_h$ of the form~\eqref{eq: unique_walks_to_cyc+}. Moreover, note that $i = i_{p^*}$ due to the definition of $V(i,j)$.
\end{lemma}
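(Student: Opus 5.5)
Since $C_{h,(j-1)m+i}=\int_\Sigma (A^{j-1}B)_{h,i}\,\rd\mu$, the plan is to expand $(A^{j-1}B)_{h,i}(\sigma)$ as a sum, over all walks $\tau$ of length $j$ from $\beta_i$ to $\alpha_h$ in $G$, of the product of the edge weights along $\tau$. Then I identify which walks contribute a nonzero function.

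\textbf{Step 1: reduce to characterized walks.} Take $\alpha_h\in G_p^+$. By~\eqref{eq: walks_to_cyc+}, every walk from a $\beta$-node to $\alpha_h$ passes through $v_{p^*}$ and then along the unique edge $v_{p^*}\alpha_{p_0}\in E_2$. If $v_{p^*}$ is an $\alpha$-node, line~\ref{alg_line: E2_if_alpha} of Algorithm~\ref{alg: Whole_G_weight_assignment} makes this edge vanish outside $\tilde\Sigma:=\tilde{\Sigma}^{(\dep(v_{p^*}))}_{i_{p^*}j_{p^*}}$. So I have to show that on $\tilde\Sigma$ the only prefix in $G^*$ from a $\beta$-node to $v_{p^*}$ with nonzero weight is $\tau^*_{\beta_{i_{p^*}}v_{p^*}}$. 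This should come from the structure of Algorithm~\ref{alg: core_weight_assignment}. All edges of the core are zero outside $K$, and $\tilde\Sigma\subseteq\Sigma_{i_{p^*}j_{p^*}}$. On $\Sigma_{i'j'}$ only the edges of the maximal path $p_{i'j'}$ are assigned values, since later assignments on other $\Sigma_{i''j''}$ only touch those disjoint sets. Hence on $\Sigma_{i_{p^*}j_{p^*}}$ the only nonzero edges of $G^*$ are those of $p_{i_{p^*}j_{p^*}}$. Because $G^*$ is acyclic and this path starts at $\beta_{i_{p^*}}$, the only nonzero walk into $v_{p^*}$ on this set is the prefix of length $\dep(v_{p^*})$, which is $\tau^*$.

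It follows that a walk of length $j$ from $\beta_i$ contributes nonzero only if $i=i_{p^*}$ and it has the form~\eqref{eq: unique_walks_to_cyc+}, that is, only if $\alpha_h\in V(i,j)$. Otherwise the entry is $0$. If $v_{p^*}$ is a $\beta$-node, the walk must start at $v_{p^*}=\beta_{i_{p^*}}$, and the same conclusion holds with $\tilde\Sigma=K$.

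\textbf{Step 2: uniqueness of the walk.} A characterized walk has length $\dep(\alpha_h)+k\ell_p$, so for fixed $j$ at most one $k$ works. The pieces $\tau^*$, the edge $\tau_{v_{p^*}\alpha_{p_0}}$ and the path $\tau_{\alpha_{p_0}\alpha_h}$ are each unique. Hence the walk $\tau$ of length $j$ is unique.

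\textbf{Step 3: evaluate the product.} Split $\tau=\tau_1\tau_2$ as in Lemma~\ref{lem: prod_tau2_formulation}, which gives the product over $\tau_2$. The product over $\tau_1$ is the function $\prod_{l}\tilde e^{(l)}_{i_{p^*}j_{p^*}}$ from the proof of Sublemma~\ref{sublem: r_value_induc}. That function is constant on $\tilde\Sigma$, and by the sublemma its constant value equals $\tilde r/\mu(\tilde\Sigma)$, writing $\tilde r:=\tilde r^{(\dep(v_{p^*}))}_{i_{p^*}j_{p^*}}$. Multiplying, the prefactors cancel and the product along $\tau$ equals $f^{\nu(\tau)}\mathbf 1_{\sigma\in\tilde\Sigma}$. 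Integrating over $\Sigma$ gives the claimed formula. In the $\beta$-node case the prefix is empty and $\tilde r=\mu(K)=\mu(\tilde\Sigma)$, so the same formula holds.

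\textbf{Expected obstacle.} The delicate part is Step 1: showing rigorously that on $\tilde\Sigma$ no other core path, including paths from other $\beta$-nodes or other maximal paths sharing edges, has a nonzero product. The point to use is that the sets $\Sigma_{i'j'}$ are disjoint and every core edge weight is supported on the union of the $\Sigma_{i'j'}$ of maximal paths containing it. The hypotheses $j>n^*$ and $j>\max_p(d_p-\ell_p)$ are used only through Lemma~\ref{lem: xc_bg_lemma5}, to place the walk in the periodic regime; they are not otherwise needed here.
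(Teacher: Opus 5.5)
Your proposal is correct and follows essentially the same route as the paper: expand $(A^{j-1}B)_{h,i}$ over walks, kill every walk whose core prefix is not $\tau^*_{\beta_{i_{p^*}}v_{p^*}}$ using the disjointness of the sets $\Sigma_{ij}$ together with the indicator carried by the $E_2$ edge, and then evaluate the surviving walk via Lemma~\ref{lem: prod_tau2_formulation} and the constancy of the prefix product on $\tilde{\Sigma}^{(\dep(v_{p^*}))}_{i_{p^*}j_{p^*}}$ (the paper's Sublemma~\ref{sublem: r_tilde_non_exp_form}). Your localization to $\Sigma_{i_{p^*}j_{p^*}}$ and your explicit checks make precise two points the paper leaves implicit: that the characterized walk of length $j$ is unique, and that non-characterized walks vanish even when $\alpha_h\in V(i,j)$.
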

\begin{proof}
    Let us start by the case where $\alpha_{h} \notin V(i,j)$. This means one of the two following cases may happen:
    
    \textbf{Case 1:} there does not exist any path of length $j$ from $\beta_i$ to $\alpha_h$. In this case, it is trivial that $C_{h,(j-1) m + i} = 0$.
    
    \textbf{Case 2:} there is a path of length $j$ from $\beta_i$ to $\alpha_h$, but it is not of the form~\eqref{eq: unique_walks_to_cyc+}. Note that, as indicated in~\eqref{eq: walks_to_cyc+}, since $G$ is reduced, all walks have to be of the form
    \begin{equation}
    \tau_{\beta_i \alpha_h} = \tau_{\beta_i v_{p^*}} \tau_{v_{p^*} \alpha_{p_0}} G_p^{k} \tau_{\alpha_{p_0} \alpha_h}. \label{eq: other_walks_to_be_shown_zero}
    \end{equation}
    However, since $\alpha_h \notin V(i,j)$, we know that
    \[
    \tau_{\beta_i v_{p^*}} \neq \tau_{\beta_{i_{p^*}} v_{p^*}}^*.
    \]
    As a result, $\tau_{\beta_i v_{p^*}}$ has to be part of some maximal path $p_{i j'}$ such that $(i,j') \neq (i_{p^*},j_{p^*})$. Now due to Algorithm~\ref{alg: core_weight_assignment}, we know that the $\Sigma_{i j'}$ and $\Sigma_{i_{p^*},j_{p^*}}$ are chosen to be disjoint. For simplicity, we let 
    \[
    \tau_{1} := \tau_{\beta_i v_{p^*}} \quad \text{and} \quad \tau_2 := \tau_{v_{p^*} \alpha_{p_0}} G_p^{k} \tau_{\alpha_{p_0} \alpha_h},
    \]
    so that $\tau_{\beta_i \alpha_h} = \tau_1 \tau_2.$
    Now recall that, by Algorithm~\ref{alg: core_weight_assignment}, the product of the edges on the path $\tau_{\beta_i v_{p^*}}$ will have an indicator term of $\mathbf{1}_{\tilde{\Sigma}_{i j'}^{(l)}}$ for some $l$, where $\tilde{\Sigma}_{i j'}^{(l)} \subseteq \Sigma_{i j'}$. Moreover, by Lemma~\ref{lem: prod_tau2_formulation}, the product of the edges on $\tau_2$ has an indicator term of $\mathbf{1}_{\sigma \in \tilde{\Sigma}^{(\dep (v_{p^*}))}_{i_{p^*} j_{p^*}}}$, where $\tilde{\Sigma}^{(\dep (v_{p^*}))}_{i_{p^*} j_{p^*}} \subseteq \Sigma_{i_{p^*} j_{p^*}}$. Now since $\Sigma_{i j'} \cap \Sigma_{i_{p^*},j_{p^*}} = \varnothing$, the product of the edges on $\tau_{\beta_i \alpha_h}$ will be zero, and hence, 
    \[
    C_{h,(j-1) m + i} = 0.
    \]
    
    All that is left to show is the case $\alpha_h \in V(i,j)$. Note that we can write $\tau$ as
    \[
    \tau = \tau_{\beta_{i_{p^*}} v_{p^*}}^* \tau_{v_{p^*} \alpha_{p_0}} G_p^{k} \tau_{\alpha_{p_0} \alpha_j}.
    \]
    Again, we denote
    \[
    \tau_{1} := \tau_{\beta_{i_{p^*}} v_{p^*}}^* \quad \text{and} \quad \tau_2 := \tau_{v_{p^*} \alpha_{p_0}} G_p^{k} \tau_{\alpha_{p_0} \alpha_h},
    \]
    so that $\tau = \tau_1 \tau_2$. Recall that we can write $\tau_1$ and $\tau_2$ as a series of edges $\tau_1 = e_{\tau_1}^{(1)} e_{\tau_1}^{(2)} \cdots e_{\tau_1}^{(\ell (\tau_1) )}$, and $\tau_2 = e_{\tau_2}^{(1)} e_{\tau_2}^{(2)} \cdots e_{\tau_2}^{(\ell (\tau_2) )}$. We now have the following sublemma:
    \begin{sublemma} \label{sublem: r_tilde_non_exp_form}
        For any given $i \in \{1,2,\dotsc,m\}$ and $j \in \{1,2,\dotsc,s_i\}$, it holds for all $k \in \{0,1,\dotsc,\delta_{ij}\}$ that
    \begin{equation*}
    \prod_{{l}=0}^k \tilde{e}_{ij}^{({l})} (\sigma) \; {\bf 1}_{\sigma \in \tilde{\Sigma}^{(k)}_{ij}} = \frac{\tilde{r}_{ij}^{(k)}}{\mu \left( \tilde{\Sigma}^{(k)}_{ij} \right)} {\bf 1}_{\sigma \in \tilde{\Sigma}^{(k)}_{ij}}.
    \end{equation*}
    \end{sublemma}
    \begin{proof}[Proof of Sublemma~\ref{sublem: r_tilde_non_exp_form}] From the weight assignment in Algorithm~\ref{alg: core_weight_assignment}, $\prod_{{l}=0}^k \tilde{e}_{ij}^{({l})} (\sigma)$ takes a constant value when $\sigma \in \tilde{\Sigma}^{(k)}_{ij}$, i.e.,
    \[
    \prod_{{l}=0}^k \tilde{e}_{ij}^{({l})} (\sigma) {\bf 1}_{\sigma \in \tilde{\Sigma}^{(k)}_{ij}} = c\,{\bf 1}_{\sigma \in \tilde{\Sigma}^{(k)}_{ij}},
    \]
    for some constant $c$. Now recall from Sublemma~\ref{sublem: r_value_induc} that
    \begin{align*}
    \tilde{r}_{ij}^{(k)} = \E \left[ \prod_{{l}=0}^k \tilde{e}_{ij}^{({l})} (\sigma) {\bf 1}_{\sigma \in \tilde{\Sigma}^{(k)}_{ij}} \right] = \E \left[ c\,{\bf 1}_{\sigma \in \tilde{\Sigma}^{(k)}_{ij}} \right] = c \ \mu \left( \tilde{\Sigma}^{(k)}_{ij} \right),
    \end{align*}
    and hence, we find the value of $c$ as
    \[
    c = \frac{\tilde{r}_{ij}^{(k)}}{\mu \left( \tilde{\Sigma}^{(k)}_{ij} \right)},
    \]
    finishing the proof of Sublemma~\ref{sublem: r_tilde_non_exp_form}. \end{proof}

    We can now write
    \begin{align*}
        C_{h,(j-1) m + i} &= \E \left[ \prod_{i=1}^{\ell (\tau)} e_{\tau}^{(i)}  \right] \\
        &= \E \left[ \left(\prod_{i=1}^{\ell (\tau_1)} e_{\tau_1}^{(i)} \right) \left(\prod_{i=1}^{\ell (\tau_2)} e_{\tau_2}^{(i)} \right) \right] \\
        &\overset{\mathrm{(i)}}{=} \E \left[ \left(\prod_{i=1}^{\ell (\tau_1)} e_{\tau_1}^{(i)} \right) \frac{\mu \left( \tilde{\Sigma}_{i_{p^*} j_{p^*}}^{(\dep (v_{p^*}))} \right)}{\tilde{r}^{(\dep (v_{p^*}))}_{i_{p^*} j_{p^*}}} \mathbf{1}_{\sigma \in \tilde{\Sigma}^{(\dep (v_{p^*}))}_{i_{p^*} j_{p^*}}} f^{\nu (\tau)} \right] \\
        &\overset{\mathrm{(ii)}}{=} \E \left[ \frac{\tilde{r}^{(\dep (v_{p^*}))}_{i_{p^*} j_{p^*}}}{\mu \left( \tilde{\Sigma}_{i_{p^*} j_{p^*}}^{(\dep (v_{p^*}))} \right)} \frac{\mu \left( \tilde{\Sigma}_{i_{p^*} j_{p^*}}^{(\dep (v_{p^*}))} \right)}{\tilde{r}^{(\dep (v_{p^*}))}_{i_{p^*} j_{p^*}}} \mathbf{1}_{\sigma \in \tilde{\Sigma}^{(\dep (v_{p^*}))}_{i_{p^*} j_{p^*}}} f^{\nu (\tau)} \right] \\
        &= \E \left[ f^{\nu (\tau)} \mathbf{1}_{\sigma \in \tilde{\Sigma}^{(\dep (v_{p^*}))}_{i_{p^*} j_{p^*}}} \right] \\
        &= \int_{\Sigma} f^{\nu (\tau)} \mathbf{1}_{\sigma \in \tilde{\Sigma}^{(\dep (v_{p^*}))}_{i_{p^*} j_{p^*}}} d \mu,
    \end{align*}
    where (i) follows from Lemma~\ref{lem: prod_tau2_formulation}, and (ii) from Sublemma~\ref{sublem: r_tilde_non_exp_form}. This finishes the proof.
\end{proof}



Let $C_j$ be the $j$-th column of $C$. We define the support of $C_j$ as $\supp (C_j) := \{ \alpha_h \in V_{\alpha} : C_{hj} \neq 0 \}$. By Lemma~\ref{lem: C_ij_values} and the strict positivity of the integral appearing there, we have that
\begin{equation}
\supp (C_{(j-1)m + i}) = V(i,j), \label{eq: support_connection_to_V_ij}
\end{equation}
for all $i \in \{1,\dotsc,m\}$ and $j \ge 1$. Let $j^*$ be such that
\[
j^* > n^*, \quad \text{and} \quad j^* > \max_{p=1}^{q} (d_p - \ell_p),
\]
so it satisfies the conditions in Lemma~\ref{lem: xc_bg_lemma5}. Now consider the columns $C_{(j^* + kL -1)m + i}$ for $k \in \N$. By Lemma~\ref{lem: xc_bg_lemma5} and~\eqref{eq: support_connection_to_V_ij}, we know that these columns share the same support. Let $V' = \{ \alpha_1', \dotsc, \alpha_{n_{j^*}}'\}$ be the common support, which is a subset of $V_{\text{cyc}}^+$.
Now let $C'$ denote the submatrix of $C$ by first taking the columns $C_{(j^* + kL -1)m + i}$ for all $k \in \N$, and then removing the zero rows, i.e.,
\begin{equation}
C' = \left[ C_{(j^* -1)m + i}, C_{(j^* + L  -1)m + i}, C_{(j^* + 2L -1)m + i} , \cdots \right]\Big|_{V'}. \label{eq: C'_Formulation}
\end{equation}
Let $C'_{hk}$ be the $(h,k)$-th entry of $C'$. We can express $C'_{hk}$ explicitly. Since $\alpha_{h}' \in V'$, there exists a unique walk of length $(j^* + (k-1)L)$, denoted by $\tau_{hk}$, which takes the form~\eqref{eq: unique_walks_to_cyc+}. Now note that $\tau_{hk}$ can
be obtained from $\tau_h:=\tau_{h1}$ by inserting the closed walk $G_p^{(k-1) L / \ell_p}$ for some $p = 1, \dotsc, q$, i.e.,
\[
\tau_{h k} := \tau_{\beta_{i_{p^*}} v_{p^*}}^* \tau_{v_{p^*} \alpha_{p_0}} G_p^{(k-1)L/\ell_p} \tau_{\alpha_{p_0} \alpha_h}.
\]
Therefore, by~\eqref{eq: nu_def_on_walks}, 
\[
\nu (\tau_{hk}) = \nu (\tau_{h}) + \frac{\ell (\tau_{hk}) - \ell (\tau_h)}{L} = \nu (\tau_h) + k-1.
\]
Hence, by Lemma~\ref{lem: C_ij_values}, we have that
\begin{equation}
    C'_{hk} = \int_{K} f^{\nu (\tau_h) + k-1} \mathbf{1}_{\sigma \in \hat{\Sigma}_h} d \mu, \label{eq: C'_hk_values}
\end{equation}
where we use $\hat{\Sigma}_h$ for ease of notation, i.e., for $\alpha_h' \in G_p$, $\hat{\Sigma}_h = \tilde{\Sigma}^{(\dep (v_{p^*}))}_{i_{p^*} j_{p^*}}$ as discussed in Lemma~\ref{lem: C_ij_values}. In particular, every $\hat{\Sigma}_h$ is a positive-measure subset of $K$, so the integral over $\Sigma$ in Lemma~\ref{lem: C_ij_values} is equal to the integral over $K$ in~\eqref{eq: C'_hk_values}. Moreover, note that $\nu (\tau_1), \dotsc, \nu (\tau_{n_{j^*}})$ are pairwise distinct; this can be shown similarly to~\cite[Lemma~9]{XC-BG:25}.
Let us now provide the following remark on the choice of these $\hat{\Sigma}_h$s that is done in Algorithm~\ref{alg: core_weight_assignment} and is later also used in Algorithm~\ref{alg: Whole_G_weight_assignment}:

\begin{remark}[Radial realization of the sets $\hat{\Sigma}_h$ and laminarity]\label{rem:radial_realization_hatSigma}
In the run of Algorithm~\ref{alg: core_weight_assignment} used by Algorithm~\ref{alg: Whole_G_weight_assignment}, the active set is $D=K$.
By Lemma~\ref{lem:radial_equal_measure_splitting}, the equal-measure partitions in this run can be chosen using intervals of the radial map $f$. Moreover, if an interval $I\subseteq[0,1]$ is split at a threshold $c$, the two pieces are
\[
I\cap[0,c]
\quad\text{and}\quad
I\cap(c,1].
\]
Thus the pieces are disjoint and their union is exactly $I$. For instance, splitting $(0.3,1]$ at $0.5$ gives $(0.3,0.5]$ and $(0.5,1]$. Since every level set $K\cap f^{-1}(\{c\})$ has measure zero, this endpoint convention does not impact the required measures.

\noindent Hence every set generated in this run can be written exactly as
\[
K\cap f^{-1}(I),
\]
for some interval $I\subseteq[0,1]$. In particular, for each set $\hat\Sigma_h$ appearing in~\eqref{eq: C'_hk_values}, there exists an interval $I_h\subseteq[0,1]$ such that
\[
\hat\Sigma_h=K\cap f^{-1}(I_h).
\]
Consequently, $\mathbf 1_{\sigma \in \hat\Sigma_h}$ is measurable with respect to
\[
\bm{\sigma}(f)|_K
:=
\{K\cap f^{-1}(Z):Z\in\mathcal B([0,1])\}.
\]
Moreover, the family $\{\hat\Sigma_h\}$ is laminar. Sets generated on different branches are disjoint, while sets generated on the same branch are nested or equal. Therefore, the corresponding intervals $\{I_h\}$ may also be chosen to be disjoint, nested, or equal. \oprocend
\end{remark}

With Remark~\ref{rem:radial_realization_hatSigma} in place, we are now in the position to establish the following result:
\begin{lemma} \label{lem: submatrix_full_rank}
    The submatrix $C'$ given in~\eqref{eq: C'_Formulation} is of full row rank.
\end{lemma}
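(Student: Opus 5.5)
We argue by contradiction: suppose a nonzero row vector $c=(c_1,\dotsc,c_{n_{j^*}})$ satisfies $cC'=0$. By~\eqref{eq: C'_hk_values}, for every $k\ge 1$,
\[
\int_K \Big(\sum_{h=1}^{n_{j^*}} c_h\, f^{\nu(\tau_h)}\mathbf 1_{\sigma\in\hat\Sigma_h}\Big) f^{k-1}\, d\mu = 0 .
\]
By Remark~\ref{rem:radial_realization_hatSigma}, each $\hat\Sigma_h=K\cap f^{-1}(I_h)$, so the integrand in parentheses is $g(f(\sigma))$ with
\[
g(t):=\sum_h c_h t^{\nu(\tau_h)}\mathbf 1_{t\in I_h}, \qquad t\in[0,1].
\]
Let $\rho:=f_\#(\mu|_K)$, a finite Borel measure on $[0,1]$. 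It is nonatomic by Lemma~\ref{lem:radial_equal_measure_splitting}. The condition becomes $\int_0^1 g(t)t^{k-1}\,d\rho(t)=0$ for all $k\ge1$. Since $g$ is bounded, the signed measure $g\,d\rho$ has all moments zero. Weierstrass approximation then gives $g=0$ $\rho$-a.e. on $[0,1]$.

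Next we pass from $\rho$-a.e. vanishing to the vanishing of the coefficients. The laminar family $\{I_h\}$ induces a finite partition of $[0,1]$ into atoms. On each atom $J$, $g$ equals $\sum_{h:\,J\subseteq I_h} c_h t^{\nu(\tau_h)}$, a generalized polynomial with pairwise distinct real exponents, since the $\nu(\tau_h)$ are distinct as noted before Remark~\ref{rem:radial_realization_hatSigma}. We only need to consider atoms with $\rho(J)>0$. Every $\hat\Sigma_h$ has positive measure, and the pieces produced by the equal-measure splits have positive measure, so for each $h$ there is such an atom inside $I_h$, of the form $J=I_h$ minus smaller nested intervals. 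On an atom with $\rho(J)>0$, the function $g$ vanishes on a set of positive $\rho$-measure. Because $\rho$ is nonatomic, that set is infinite and has an accumulation point in $[0,1]$. A nonzero generalized polynomial $\sum a_h t^{\nu_h}$ with distinct exponents is real-analytic on $(0,1]$ and has at most as many positive zeros as terms minus one (Descartes' rule for generalized polynomials). Hence all coefficients attached to $J$ vanish, provided the zeros lie in $(0,1]$. This holds because $\rho(\{0\})=0$.

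To conclude, we induct along the laminar order, from minimal intervals outward. For a minimal $I_h$ (no $I_{h'}$ strictly inside), let $S$ be the set of indices $h'$ with $I_{h'}=I_h$. The previous step applied to the atom $I_h$ gives $c_{h'}=0$ for all $h'\in S$. After removing these indices, the next intervals become minimal and the argument repeats, so we obtain $c=0$, a contradiction. Equivalently, one can apply the previous step to the outermost atom, but the inductive order is cleaner.

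The step most likely to need care is this last one. We must check that for every $h$ the innermost atom containing its terms has positive $\rho$-measure, and that the subtraction of inner intervals never leaves a null region. This follows from the construction, in which each split of $\tilde\Sigma^{(k-1)}_{ij}$ keeps a half of positive measure, together with Lemma~\ref{lem:radial_equal_measure_splitting}. The positivity of every $\hat\Sigma_h$ already used in~\eqref{eq: C'_hk_values} supports this.

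A secondary point is that distinct rows of $C'$ must carry distinct exponents whenever they share an interval. If two rows have equal $\hat\Sigma_h$, the exponents still differ because the $\nu(\tau_h)$ are pairwise distinct overall. The irrationality of $\gamma$ matters in exactly this place.
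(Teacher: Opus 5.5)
Your argument is correct, but it closes the proof differently from the paper. The opening is the same: you factor the integrand through $f$, push forward to $\rho=f_\#(\mu|_K)$, and use density of polynomials to get $g=0$ $\rho$-a.e. (you via Weierstrass and uniqueness of the finite signed measure $g\,d\rho$, the paper via density in $L^2(\rho)$).

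The routes split after that. The paper first proves $(0,1)\subseteq\operatorname{supp}\rho$ from $K\subseteq\mathcal U\subseteq\operatorname{supp}\mu$. It then cuts $[0,1]$ into open cells at the endpoints of the $I_h$. On each cell, continuity plus full support upgrade a.e. vanishing to pointwise vanishing, and linear independence of the $t^{\nu(\tau_h)}$ kills the coefficients. Nonatomicity is used only to ensure each $I_h$ contains a nondegenerate cell.

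You use nonatomicity differently: it makes a zero set of positive $\rho$-measure in $(0,1]$ infinite. The generalized Descartes rule then bounds the number of positive zeros, which forces the coefficients to vanish. So you never need the support of $\rho$, nor that the atoms are intervals. You only need $\hat\Sigma_h=K\cap f^{-1}(I_h)$ with $I_h$ Borel and $\rho(I_h)>0$. The cost is a slightly heavier, though classical, fact about generalized polynomials in place of the paper's topological support argument.

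Two side remarks.

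\emph{The induction is redundant.} $I_h$ is a finite disjoint union of atoms and $\rho(I_h)=\mu(\hat\Sigma_h)>0$. Hence some atom $J\subseteq I_h$ has $\rho(J)>0$, and your atom step already gives $c_h=0$.

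\emph{The point you flag as needing care is misplaced, and partly false.} Removing the strictly smaller intervals from $I_h$ can leave a null, even empty, set. For instance, take $m=1$ and a row coming from a cycle attached directly to $\beta_1$, so $\hat\Sigma_h=K$. If the other rows have $\hat\Sigma=\Sigma_{11},\dotsc,\Sigma_{1s_1}$, these sets partition $K$. In that case your ``outermost atom'' alternative would also fail. Your induction does not depend on this claim, since at each stage it only uses $\rho(I_h)>0$ for the current minimal interval.
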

\begin{proof}
To show that $C'$ has full row rank, it suffices to prove that $y^\top C'=0$ for
\[
y=[y_1,\dotsc,y_{n_{j^*}}]^\top\in\mathbb R^{n_{j^*}}
\]
implies $y=0$. Let $y^\top C'=0$. By~\eqref{eq: C'_hk_values}, for every $k\in\{1,2,\dotsc\}$,
\begin{equation}
0=\sum_{h=1}^{n_{j^*}}y_h C'_{hk}
=\sum_{h=1}^{n_{j^*}}y_h\int_K
f^{\nu(\tau_h)+k-1}\mathbf 1_{\sigma\in\hat\Sigma_h}\,d\mu.
\label{eq:proof_row_dependence_start}
\end{equation}
After reindexing, this is equivalent to
\begin{equation}
\int_K
\left(\sum_{h=1}^{n_{j^*}}y_h f^{\nu(\tau_h)}
\mathbf 1_{\sigma\in\hat\Sigma_h}\right)f^k\,d\mu=0,
\label{eq:moment_orthogonality}
\end{equation}
for all $k\in\N$. Define on $K$
\[
H(\sigma):=\sum_{h=1}^{n_{j^*}}y_h f(\sigma)^{\nu(\tau_h)}
\mathbf 1_{\sigma\in\hat\Sigma_h}.
\]
By linearity,~\eqref{eq:moment_orthogonality} gives
\[
\int_K H(\sigma)p(f(\sigma))\,d\mu=0
\]
for every polynomial $p$. Now let
\[
\rho:=f_{\#}(\mu|_K)
\]
be the pushforward of the restricted finite measure $\mu|_K$.
By Remark~\ref{rem:radial_realization_hatSigma}, for each $h$ there is an interval $I_h\subseteq[0,1]$ such that
\[
\hat\Sigma_h=K\cap f^{-1}(I_h).
\]
Define
\[
\widetilde H(t):=
\sum_{h=1}^{n_{j^*}}y_h t^{\nu(\tau_h)}\mathbf 1_{t\in I_h}.
\]
Then, for every $\sigma\in K$,
\begin{align*}
H(\sigma) =\sum_{h=1}^{n_{j^*}}y_h f(\sigma)^{\nu(\tau_h)} \mathbf 1_{f(\sigma)\in I_h} = \widetilde H(f(\sigma)).
\end{align*}
Thus, $H$ is $\bm{\sigma}(f)|_K$-measurable; this is an explicit instance of the Doob--Dynkin lemma~\cite{DW:91}.
Therefore,
\[
\int_{[0,1]}\widetilde H(t)p(t)\,d\rho(t)=0
\]
for every polynomial $p$. 
Since the exponents $\nu(\tau_h)$ are positive and the sum is finite, $\widetilde H\in L^2(\rho)$. Since polynomials are dense in $L^2(\rho)$ and $\widetilde H$ is orthogonal to every polynomial, we obtain $\widetilde H=0$ in $L^2(\rho)$, i.e.,
\[
\int_{[0,1]}|\widetilde H(t)|^2\,d\rho(t)=0.
\]
Consequently,
\begin{equation}
\widetilde H(t)=0
\quad\rho\text{-a.e. on }[0,1].
\label{eq:H_zero_rho_ae}
\end{equation}
We next record the support property of $\rho$. If $I\subseteq(0,1)$ is a nonempty open interval, then
\[
f^{-1}(I)\cap K
=
\phi^{-1}\big(\{z\in\mathbb B:\|z\|\in I\}\big)
\]
is a nonempty open subset of $\mathcal U\subseteq\operatorname{supp}\mu$. Hence
\[
\rho(I)=\mu(f^{-1}(I)\cap K)>0.
\]
It follows that
\begin{equation}
(0,1)\subseteq\operatorname{supp}\rho.
\label{eq:rho_full_radial_support}
\end{equation}
Let $\mathcal E$ be the finite set consisting of $0$, $1$, and all endpoints of the intervals $I_h$. Write its elements in increasing order as
\[
0=t_0<t_1<\cdots<t_N=1.
\]
Let $\mathfrak R$ denote the collection of nonempty open interval cells
\[
(t_{\ell-1},t_\ell),\qquad \ell=1,\dotsc,N.
\]
For each $\cR\in\mathfrak R$, the following set of indices
\[
J(\cR):=\{h:\cR\subseteq I_h\}
\]
is fixed, and~\eqref{eq:H_zero_rho_ae} reduces to
\begin{equation}
Q_{\cR}(t):=\sum_{h\in J(\cR)}y_h t^{\nu(\tau_h)}=0
\quad\rho\text{-a.e. on }\cR.
\label{eq:qR_zero_ae}
\end{equation}
The function $Q_{\cR}$ is continuous on $\cR$. By~\eqref{eq:rho_full_radial_support}, every nonempty open subinterval of $\cR$ has positive $\rho$-measure. Therefore~\eqref{eq:qR_zero_ae} implies that $Q_{\cR}$ vanishes at every point of $\cR$; otherwise, continuity would give a nonempty open subinterval on which $Q_{\cR}$ is bounded away from zero, contradicting~\eqref{eq:qR_zero_ae}.

The exponents $\nu(\tau_h)$ are pairwise distinct; thus, the functions $\left\{t^{\nu(\tau_h)}\right\}_{h\in J(\cR)}$ are linearly independent on every nonempty interval contained in $(0,1)$.
Hence
\[
y_h=0,
\]
for all $h\in J(\cR)$.
Finally, every $\hat\Sigma_h$ has positive measure by construction, and
\[
\rho(I_h)=\mu(\hat\Sigma_h)>0.
\]
Now by Lemma~\ref{lem:radial_equal_measure_splitting}, $\rho$ is nonatomic, so $I_h$ cannot be a singleton. Therefore, $I_h$ contains at least one of the nondegenerate cells $\cR \in \mathfrak R$, and hence $h\in J(\cR)$. Applying the preceding conclusion to that cell gives $y_h=0$. Since this holds for every $h \in \{1,\dotsc,n_{j^*}\}$, we get $y=0$. As a result, the rows of $C'$ are linearly independent, and $C'$ has full row rank.
\end{proof}

With Lemma~\ref{lem: submatrix_full_rank} in place, we can now finish the sufficiency argument.

\begin{proposition}\label{prop:sufficiency_full_rank}
Let $(A,B)$ be the pair constructed by Algorithm~\ref{alg: Whole_G_weight_assignment}. Then the averaged controllability matrix $\mathcal{C}(A,B)$ has full row rank.
\end{proposition}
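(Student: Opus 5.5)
The plan is to exploit the block lower-triangular structure induced by the partition $V = V^*\sqcup V_{\mathrm{cyc}}^+$ and to show separately that the rows indexed by $V_\alpha^*$ and the rows indexed by $V_{\mathrm{cyc}}^+$ can be made independent. Order the $\alpha$-nodes so that those of $G^*$ come first, and write
\[
\mathcal C(A,B)=\begin{bmatrix} \mathcal C_1\\ \mathcal C_2\end{bmatrix},
\]
with $\mathcal C_1$ the $n^*$ rows of the core and $\mathcal C_2$ the rows of $V_{\mathrm{cyc}}^+$. Since the core subsystem $(A_{11},B_1)$ does not depend on $x_2$, and $G^*$ is acyclic, $A_{11}^k=0$ for $k\ge n^*$. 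Hence the first $n^*$ block columns of $\mathcal C_1$ are exactly $\mathcal C^{\mathrm{tr}}(A_{11},B_1)$. Algorithm~\ref{alg: Whole_G_weight_assignment} runs Algorithm~\ref{alg: core_weight_assignment} with a full-rank target $M$. By Theorem~\ref{thm:controllability_matrix_value_assignment}, these columns equal $M$, which has full row rank $n^*$. In every block column $j>n^*$, the rows of $\mathcal C_1$ vanish.

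Next I will choose $j^*>n^*$ and $j^*>\max_p(d_p-\ell_p)$, so that Lemma~\ref{lem: xc_bg_lemma5}, Lemma~\ref{lem: C_ij_values} and \eqref{eq: support_connection_to_V_ij} apply. For every column with block index $j\ge j^*$, the $\mathcal C_1$ part is zero and the support lies in $V_{\mathrm{cyc}}^+$. For each $i$ and each offset $k'\in\{0,\dots,\ell_{\max}^{(i)}-1\}$, consider the family of columns with block indices $j^*+k'+kL$, $k\in\N$. Lemma~\ref{lem: submatrix_full_rank}, applied with $j^*+k'$ in place of $j^*$ (its hypotheses remain valid), shows that this family, restricted to its common support $V(i,j^*+k')$, has full row rank.

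By Lemma~\ref{lem: xc_bg_lemma5}, the union of these supports over $i$ and $k'$ covers $V_{\mathrm{cyc}}^+$. The main obstacle is that supports for different offsets $k'$ with the same $i$ may overlap; for example, a node on a cycle can be reached at several lengths. Full rank on each support separately then does not immediately give full rank on the union. Two routes handle this:
\begin{itemize}
\item Realize that the relevant index set is really a single node set. Each node in $V_{\mathrm{cyc}}^+$ belongs to exactly one class modulo $L$ per period relative to its cycle, in the spirit of \cite[Lemma~5]{XC-BG:25}. Reorganize by taking $j^*$ sufficiently large and using all columns $j\ge j^*$ from a fixed $\beta_i$ as one combined family. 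For a node $\alpha_h\in V_{\mathrm{cyc}}^+$ and $y^\top$ annihilating the whole family, restrict the sum to the residue classes of $j$ modulo $L$ and repeat the moment argument of Lemma~\ref{lem: submatrix_full_rank} class by class.
\item More cleanly, note that the exponents $\nu(\tau)$ for the walks reaching different nodes at the different offsets are pairwise distinct. Then apply the moment/density argument of Lemma~\ref{lem: submatrix_full_rank} to the stacked family over all offsets $k'$ simultaneously, concluding $y_h=0$ on every $\alpha_h\in V_{\mathrm{cyc}}^+$ fed by $\beta_i$.
\end{itemize}
Disjointness across different inputs, \eqref{eq: disjoint_feeding_to_cyc_plus}, then makes the rows of $\mathcal C_2$ split into blocks indexed by $i$, supported on disjoint column sets. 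Hence $\mathcal C_2$ restricted to columns $j\ge j^*$ has full row rank $n-n^*$.

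Finally, combine the two parts. Suppose $y=(y_1,y_2)$ with $y^\top\mathcal C(A,B)=0$. Testing against columns with $j\ge j^*$, where $\mathcal C_1$ vanishes, gives $y_2^\top\mathcal C_2|_{j\ge j^*}=0$, so $y_2=0$. Then $y_1^\top \mathcal C_1=0$, and in particular $y_1^\top M=0$, which gives $y_1=0$. Thus $\mathcal C(A,B)$ has rank $n$.
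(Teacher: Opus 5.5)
Your proposal is correct and follows essentially the paper's proof. The core rows are handled by Theorem~\ref{thm:controllability_matrix_value_assignment} and vanish in block columns $j>n^*$. The rows of $V_{\mathrm{cyc}}^+$ are handled by Lemma~\ref{lem: xc_bg_lemma5}, \eqref{eq: support_connection_to_V_ij} and Lemma~\ref{lem: submatrix_full_rank} applied to each residue class $j^*+k'+kL$. The two parts are then combined by block triangularity.

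The overlap you flag is not actually an obstacle. Each residue-class family vanishes outside $V(i,j^*+k')$, so a left null vector must vanish on every such support and hence on their union. This is your first route, and it is slightly cleaner than the paper's column selection via the difference sets $V'_\ell(i)$.

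Drop your second route, since its claim that exponents are pairwise distinct across offsets is false. For $\ell_p\ge 2$, take $\alpha_{p_0}$ reached by a characterized walk of length $t$ and its successor on $G_p$ reached by one of length $t+1$. Both give $\nu=p_0\gamma+(t-\dep(\alpha_{p_0}))/L$.
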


\begin{proof}
Recall that $n^* = |V_\alpha^*|$ and let $n^+ := |V_{\mathrm{cyc}}^+| = n - n^*$. By construction, the core assignment realizes a full-row-rank averaged controllability matrix on $V_\alpha^*$. Hence, we may fix $n^*$ columns of $\mathcal{C}(A,B)$ whose restriction to the rows indexed by $V_\alpha^*$ is linearly independent (with this collection empty when $n^*=0$).

It remains to find $n^+$ columns that are linearly independent on the rows indexed by $V_{\mathrm{cyc}}^+$. For $i\in\{1,\dots,m\}$ and $j\ge 1$, recall from~\eqref{eq: support_connection_to_V_ij} that
\[
\supp\!\big(C_{(j-1)m+i}\big)=V(i,j),
\]
as before.
Let $j^*$ satisfy the conditions of Lemma~\ref{lem: xc_bg_lemma5}. For each $i\in\{1,\dots,m\}$ and each $\ell\in\{0,1,\dots,\ell_{\max}^{(i)} -1\}$, define
\[
V'_{\ell}(i) := V(i,j^*+\ell)\setminus \bigcup_{k=0}^{\ell-1}V(i,j^*+k),
\]
By definition, the sets $\{V'_{\ell}(i)\}_{\ell=0}^{\ell_{\max}^{(i)}-1}$ are disjoint for each fixed $i$. Moreover, by~\eqref{eq: disjoint_feeding_to_cyc_plus}, the sets $V(i,\cdot)$ and $V(i',\cdot)$ are disjoint whenever $i\neq i'$, and therefore $V'_{\ell}(i)\cap V'_{\ell'}(i')=\emptyset$ for all $\ell,\ell'$ and all $i\neq i'$. In particular, the family $\{V'_{\ell}(i)\}_{i,\ell}$ is pairwise disjoint (allowing some sets to be empty).

By Lemma~\ref{lem: xc_bg_lemma5}, the union of the sets $\{V(i,j^*+\ell)\}_{i,\ell}$ covers $V_{\mathrm{cyc}}^+$. Consequently,
\[
V_{\mathrm{cyc}}^+ = \bigcup_{i=1}^m \bigcup_{\ell=0}^{\ell_{\max}^{(i)} -1} V'_{\ell}(i),
\]
and hence
\[
n^+ = |V_{\mathrm{cyc}}^+|
= \sum_{i=1}^m \sum_{\ell=0}^{\ell_{\max}^{(i)}-1} |V'_{\ell}(i)|.
\]

Fix $i$ and $\ell$. Lemma~\ref{lem: xc_bg_lemma5} implies that the supports of the columns
\[
\Big\{\,C_{(j^*+\ell+kL-1)m+i}\,:\,k\in\N\,\Big\}
\]
are all equal to $V(i,j^*+\ell)$, hence they all contain $V'_{\ell}(i)$. Let $C'_{\ell}(i)$ be the submatrix obtained by taking these columns and restricting to the rows indexed by $V'_{\ell}(i)$ (equivalently, removing all-zero rows). By Lemma~\ref{lem: submatrix_full_rank}, the matrix $C'_{\ell}(i)$ has full row rank, namely
\[
\rank\big(C'_{\ell}(i)\big)=|V'_{\ell}(i)|.
\]
So for each pair $(i,\ell)$, we can select $|V'_{\ell}(i)|$ columns from $\{C_{(j^*+\ell+kL-1)m+i}\}_{k\in\N}$ whose restriction to $V'_{\ell}(i)$ is linearly independent.

Since the sets $V'_{\ell}(i)$ are pairwise disjoint across all $(i,\ell)$, the union of these selected columns is linearly independent on the rows indexed by $\bigcup_{i,\ell}V'_{\ell}(i)$, and thus contains a subcollection of $n^+$ columns that is linearly independent on $V_{\mathrm{cyc}}^+$. Finally, because $j^*>n^*$, these columns supported outside the core have support disjoint from the core rows, and hence together with the previously fixed $n^*$ core columns they form a set of $n^*+n^+=n$ linearly independent columns of $\mathcal{C}(A,B)$. This proves that $\mathcal{C}(A,B)$ has full row rank, concluding the proof.
\end{proof}

\section{Conclusion}\label{sec:conclusion}

We have established a necessary and sufficient graph-theoretic condition for structural averaged controllability of multi-input linear ensemble systems. The condition has two parts: every state node must be reachable from an input node, and the row-column bipartite graph associated with the core must admit a row-saturating matching. This completes the multi-input extension of the single-input characterization in~\cite{XC-BG:25}.

The proof is constructive. On the core, we have shown that any desired averaged controllability matrix, as long as it fits the sparsity pattern, can be assigned using suitable bounded measurable edge weights. We then extended this assignment to the rest of the graph, using a scalar radial function on the parameter space, while preserving the rank needed for averaged controllability.

Several potential future directions remain. First, our construction guarantees a full-row-rank averaged controllability matrix, but it is not necessarily minimal; it may be possible to achieve the required rank using fewer block columns. Another direction is to ask what happens beyond averaged controllability, for example by studying structural conditions for controlling other statistics of the ensemble, such as higher moments.

\section*{Acknowledgments}
This research was supported by the National Science Foundation under the collaborative research grant 2303221.


\end{document}